\keywords{first order logic, logical theory, graph order, subgraph, graph minor, induced subgraph, recursive, computable, poset, partial order}
\definecolor{c01}{RGB}{225,128,0} 
\definecolor{c02}{RGB}{60,179,113}
\definecolor{c03}{RGB}{220,20,60}
\definecolor{c04}{RGB}{205,92,92}
\newcommand {\ZEROSEVENTWOCONEs }{\begin{flalign*}
 	2C1s(x,i,j) :=& |x|=i+j+2 \; \wedge \; C_{i \rightarrow 1} \leq_s x \; \wedge \; C_{j \rightarrow 1} \leq_s x \\
 	&  \wedge \; \forall z \; maximalComp(x,z) \supset \mathcal{C}_{\rightarrow 1}(z)
 \end{flalign*}}
 \newcommand{\ZEROTWOpsiUNDERSCOREopres}{\[ \psi_{opres}(x,y) :=  \tilde{\mathcal{G}}(x) \; \wedge \; constructFromCycles(x,y)\]}
 \newcommand{\ZEROTHREEpsiUNDERSCOREedgeOP}{\begin{flalign*}
	\psi_{edgeOP}(x,i,j) :=& \exists y \; x \in \tilde{y}\; \wedge \; \exists m \; (|x|=m^2 +m(m+1)/2 +3m) \; \wedge \\
	 	& CP_4C(m+i+2,m+j+2) \leq_s x
\end{flalign*}}
\newcommand{\ZEROEIGHTaddVert}{\[ addVert(x,y) := conn(x) \; \wedge \; conn(y) \; \wedge \; \exists	z \; y \lessdot_{sv} z \; \wedge \; z \lessdot_{se} x \]}
\newcommand{\ZERONINEallCyclesPRIME}{\[ allCycles'(x,n) := \forall m \; (n+3 \leq m \leq 2n+2) \; \supset \; C_m \leq_s x\]}
\newcommand{\ONEONEbicycle}{\[ bicycle(x) := conn(x) \; \wedge \; \exists y,i,j \; soc2(y,i,j) \; \wedge \; y \lessdot_{se} x \]}
\newcommand{\ONETHREECrightarrowONE}{\[ \mathcal{C}_{\rightarrow 1} (x) := \exists y \; \mathcal{C}(y) \; \wedge \; addVert(x,y)\]}
\newcommand{\ONEFOURCrightarrowTWO}{\[ \mathcal{C}_{\rightarrow 2}(x) := \exists y \; \mathcal{C}_{\rightarrow 1}(y) \; \wedge \; addVert(x,y) \; \wedge \; P_{|x|} \leq_s x \; \wedge \; double3star \nleq_s x\]}
\newcommand{\ONEFIVEcardCond}{\[ cardCond(x,y) :=  |y|= |x|^2 + |x|(|x|+1)/2 + 3|x|\]}
\newcommand{\ONESIXcomp}{\[ comp(y,x) := conn(x) \; \wedge \; \exists z \; x \cup z = y\]}
\newcommand{\ONESEVENcompUnionInZ}{\begin{flalign*}
	compUnionInZ(z,x,y,z')  := & ((z' \leq_s x \vee z' \leq_s y) \supset \\
			& maxCopies(x,z')+maxCopies(y,z')=maxCopies(z,z'))
\end{flalign*}}
\newcommand{\ONEEIGHTcompZInUnion}{\begin{flalign*} compZInUnion(z,x,y,z') := & (z' \leq_s z \supset \\ & maxCopies(x,z')+maxCopies(y,z')=maxCopies(z,z')) \end{flalign*}}
\newcommand{\TWOZEROcopies}{\[ maxCopies(x,y,n) := \exists z \; mult(z,y,n) \; \wedge \; z \leq_s x \; \wedge \; \exists z' \; mult(z,y,n+1) \; \wedge \; \neg z' \leq_s x \qedhere \]}
\newcommand{\TWOONEconstructFromCycles}{\[ constructFromCycles(y,x) :=  \exists z \; z = x \cup \bigcup_{i=1}^{|x|} C_{|x|+i+2} \; \wedge \; |z|=|y| \; \wedge \;  ||z||+|x|=||y|| \qedhere\]}
\newcommand{\TWOTWOcountComps}{\begin{flalign*}
	countComps(x,n) := & \exists y \; extendToCliques(y,x) \; \wedge \; \exists m_1,m_2 \in \mathcal{N} \; \wedge \\ 
 	& makeSequenceFromUOC(y,m_1) \; \wedge \; \psi^{trans}_{sequenceConvert}(m_1,m_2) \\
	& \wedge \; \psi^{trans}_{sequenceSum}(m_2,n)
 \tag*{\qedhere}
\end{flalign*}}
\newcommand{\TWOTHREEcountEdges}{\begin{flalign*}
	countEdges(x,n) := &\exists y \; conn(y) \; \wedge \;  x \leq_s y \; \forall z \; conn(z) \; \wedge \; x \leq_s z \supset \neg (z <_s y) \; \wedge  \\
	& \exists m_1,m_2 \in \mathcal{N} \; countEdgesConn(y,m_1) \; \wedge \; countComps(x,m_2) \\
	&\wedge \; n =m_1 -m_2+1
 \tag*{\qedhere}
\end{flalign*}}
\newcommand{\TWOFOURcountEdgesConn}{\[ countEdgesConn(x,n) :=  \exists y \; countEdgesGadget(y,x) \; \wedge \; {|x| \choose 2} - n +1 = |y| \qedhere\]}
\newcommand{\TWOFIVEcountEdgesGadget}{\begin{flalign*}
	countEdgesGadget(y,x) := & countEdgesGadget'(y,x) \\
			& \wedge \; \forall z \; countEdgesGadget'(z,x) \supset \neg(z <_s y)
\end{flalign*}}
\newcommand{\TWOSIXcountEdgesGadgetPRIME}{\begin{flalign*}
			countEdgesGadget'(y,x) := & conn(x) \; \wedge \; comp(y,x) \; \wedge \; \forall z \; comp(y,z) \supset \\
			& \quad  [x \leq_s z \; \wedge \; |x|=|z|  \; \wedge \\
			& \quad (\mathcal{K}(z) \vee (\exists z' \; (comp(y,z') \; \wedge \; z \lessdot_{se} z' )))]
\end{flalign*}}
\newcommand{\TWOSEVENCPUNDERSCOREFOURC}{\begin{flalign*}
	CP_4C(x,i,j) := & conn(x) \; \wedge \; \mathcal{N}(i) \; \wedge \; \mathcal{N}(i) \; \wedge \; 3 < i <j \\
	& C_{i \rightarrow 1} \cup C_{j \rightarrow 1} \lessdot_{se} x \; \wedge \;  \forall z \; z \leq_s x \\
	& \quad  \wedge \; (addVert(z,C_{i \rightarrow 1}) \vee addVert(z,C_{j \rightarrow 1} )) \supset \mathcal{C}_{\rightarrow 2}(z)
\end{flalign*}}
\newcommand{\TWOEIGHTcsum}{\begin{flalign*}
	csum(x,n) := & \mathcal{N} (n) \; \wedge \; \forall z \; maximalComp(x,z) \; \supset \; \mathcal{C}(z) \; \wedge \\
	&  cardCond(x,n) \;  \wedge \; allCycles'(x,n)
\end{flalign*}}
\newcommand{\TWONINEdisjointUnion}{\begin{flalign*}
	disjointUnion(z,x,y) := &\forall z' \; conn(z') \; \wedge  compUnionInZ(z,x,y,z') \; \wedge \\
	& compZInUnion(z,x,y,z')
\end{flalign*}}
\newcommand{\THREEONEedgeMaximal}{\[ edgeMaximal(x,y) := \forall z \; x \lessdot_{se} z \supset \neg maximumComp(z,y) \]}
\newcommand{\THREETWOextendToCliques}{\begin{flalign*}
	extendToCliques(y,x) := & unionOfCliques(y) \; \wedge \; x \leq_s y \; \wedge \; \\
	& \forall z \; (unionOfCliques(z) \; \wedge \; x \leq_s z) \supset \neg(z <_s y)
\end{flalign*}}
\newcommand{\THREEFIVEmakeSequenceFromUOC}{\begin{flalign*}
	makeSequenceFromUOC(x,n) := &unionOfCliques(x) \; \wedge \; \forall i \in \mathcal{N} \\
	& (\psi^{trans}_{sequence}(n,i,j) \iff maxCopies(K_i,x)=j)
\end{flalign*}}
\newcommand{\THREESIXmaximalComp}{\[ maximalComp(y,x) := conn(x) \; \wedge \; x \leq_s y \; \wedge \; \forall z \; conn(z) \; \wedge \; z \leq_s y \supset \neg (x <_s z) \]}
\newcommand{\THREESEVENmaximumComp}{\[ maximumComp(x,y) := maximalComp(x,y) \; \wedge \; \forall z \; maximalComp(x,z) \supset z \leq_s y\]}
\newcommand{\THREEEIGHTmult}{\begin{flalign*}
	 mult(x,y) := & zero(x) \; \vee \; (conn(y) \; \wedge \; maximumComp(x,y) \; \wedge \\
	 &	uniqueCompCard(x,y)\; \wedge \; edgeMaximal(x,y) )
\end{flalign*}}
\newcommand{\FOURFIVEpointedCycleSum}{\begin{flalign*}
	pointedCycleSum(x,i,j) := & conn(x) \; \wedge \; \exists z \; C_i \cup C_j \lessdot_{sv} z \\
	& \quad \wedge \; z \lessdot_{se}^2 x \; \wedge \; \forall w \; bicycle(w) \supset w \nleq x
\end{flalign*}}
\newcommand{\FOUREIGHTsameSize}{\[ sameSize(x,y) := \exists n \; countEdges(x,n) \; \wedge \; countEdges(y,n) \]}
\newcommand{\FIVETHREEsocTWO}{\begin{flalign*}
			soc2(x,i,j) :=& \mathcal{N}(i) \; \wedge \; \mathcal{N}(j) \; \wedge \; soc(x)  \; \wedge \; |x|=i+j  \\
			& \wedge \; C_i \leq_s x \; \wedge \; C_j \leq_s x \; \wedge \\
			 & \forall y \; ( \mathcal{C}(y) \; \wedge \; y \leq_s x) \supset (y=C_i \vee y=C_j)  
			\end{flalign*}}
\newcommand{\FIVEFOURunionOfCliques}{\[ unionOfCliques(x) := \forall x'\; comp(x,x') \supset \mathcal{K}(x') \]}
\newcommand{\FIVEFIVEuniqueCompCard}{\begin{flalign*}
	uniqueCompCard(x,y) := & \forall z \; x \lessdot_{se} z \supset (\forall z' \; maximalComp(z,z') \supset \\
		& \quad (|z'| = |y| \; \vee \; |z'|=2|y|))
\end{flalign*}}
\theoremstyle{plain} 
\begin{document}

\title[Defining Recursive Predicates in Graph Orders]{Defining Recursive Predicates in Graph Orders}
\titlecomment{{\lsuper*} This is an extended version of \cite{thinniyam2017definability} which appeared in the proceedings of ICLA 2017.}

\author[R.S.~Thinniyam]{Ramanathan S. Thinniyam}  
\address{The Institute of Mathematical Sciences, 4th Cross Street, CIT Campus, Tharamani, Chennai, Tamil Nadu 600113, India} 
\address{Homi Bhabha National Institute, Mumbai, India}
\email{thinniyam@imsc.res.in}  







 \begin{abstract}
  \noindent We study the first order theory of \textit{structures over graphs} i.e. structures of the form ($\mathcal{G},\tau$) where $\mathcal{G}$ is the set of all (isomorphism types of) finite undirected graphs and $\tau$ some vocabulary. We define the notion of a \textit{recursive predicate over graphs} using Turing Machine recognizable string encodings of graphs. We also define the notion of an \textit{arithmetical relation over graphs}(borrowed from \cite{kudinov2009gandy}) using a total order $\leq_t$ on the set $\mathcal{G}$ such that ($\mathcal{G},\leq_t$) is isomorphic to ($\mathbb{N},\leq$). 

  We introduce the notion of a \textit{capable} structure over graphs, which is an arithmetical structure satisfying the conditions : (1) definability of arithmetic, (2) definability of cardinality of a graph, and (3) definability of two particular graph predicates related to vertex labellings of graphs. We then show any capable structure can define every arithmetical predicate over graphs. As a corollary, any capable structure also defines every recursive graph relation. We identify capable structures which are expansions of graph orders, which are structures of the form ($\mathcal{G},\leq$) where $\leq$ is interpreted as a partial order. We show that the subgraph order i.e. ($\mathcal{G},\leq_s$), induced subgraph order with one constant $P_3$ i.e. ($\mathcal{G},\leq_i,P_3$) and an expansion of the minor order for counting edges i.e. ($\mathcal{G},\leq_m,sameSize(x,y)$) are capable structures. In the course of the proof, we show the definability of several natural graph theoretic predicates in the subgraph order which may be of independent interest. We discuss the implications of our results and connections to Descriptive Complexity.
 \end{abstract}

\maketitle

\section{Introduction}
\label{sec:intro}
	Graphs are combinatorial objects used to model a variety of real world problems \cite{foulds2012graph}. There are many variants such as directed graphs, undirected graphs, weighted graphs etc. which may used depending on the application. In this paper by graph we will mean simple, finite, unlabelled graph. Computational questions about finite graphs have assumed importance in computer science; for instance, in computational complexity \cite{garey1976some}. One way to undertake a logical study of graphs is the way taken by Finite Model Theory \cite{gradel2007finite} and the associated field of Descriptive Complexity \cite{immerman2012descriptive}, which associates computational complexity classes with logics. These logics have a vocabulary which extends the edge relation. 

	In this paper however, we take up the logical study of graphs in a different way. Consider a structure whose domain is the set of all finite isomorphism types of undirected, simple graphs,  denoted $\mathcal{G}$. We will call a structure ($\mathcal{G},\tau$) with vocabulary $\tau=(\mathscr{F},\mathscr{R},\mathscr{C})$ of functions, relations and constants a \textit{structure over graphs}. Note that a variable interpreted in such a structure is assigned an isomorphism type of finite graphs.

	Consider the following well known theorem in graph theory:\\
	``A graph is bipartite if and only if it does not contain any odd cycle.''\\
	Suppose we are given the predicate $bipartite(x)$ if and only if $x$ is a bipartite graph and the predicate $oddCycle(x)$ if and only if $x$ is a cycle of odd order. We can write the theorem in the first order theory of the object ($\mathcal{G},\leq_s,bipartite,oddCycle$) :
	\[ \forall x \; [bipartite(x) \iff (\forall y \; oddCycle(y) \supset y \nleq_s x )] \]
	Such a formulation is not possible in Finite Model Theory where the paradigm is ``finite graph as model'' i.e. we cannot write statements which relate two (or more) graphs.

	  There are many candidates for the vocabulary $\tau$ since there are many natural graph theoretic predicates of interest. For example we could have a unary predicate $\chi_n(x)$ which is true of a graph $g$ if and only if $g$ has chromatic number $n$. 
    We consider the simple vocabulary consisting of a single order symbol $\leq$ in the context of structures over graphs as our starting point for the logical study of structures over graphs. 

	Many natural relations such as subgraph, induced subgraph and minor form partial orders on the set $\mathcal{G}$ (see Figure \ref{fig:indSubOrd} for the induced subgraph order). We take up the study of the first order theory of such \textit{graph orders}. For instance, by the subgraph order we mean the object  ($\mathcal{G},\leq_s$) with domain $\mathcal{G}$ and a single binary order symbol $\leq_s$ whose interpretation is fixed as follows : for two graphs $g_1,g_2 \in \mathcal{G}$, $g_1 \leq_s g_2$ if and only if $g_1$ is a subgraph of $g_2$. Similarly $\leq_m$ denotes the minor order and $\leq_i$ the induced subgraph order . Note that the edge relation of any particular graph is not known to us (see Figure \ref{fig:indSubAsFOstr}).
  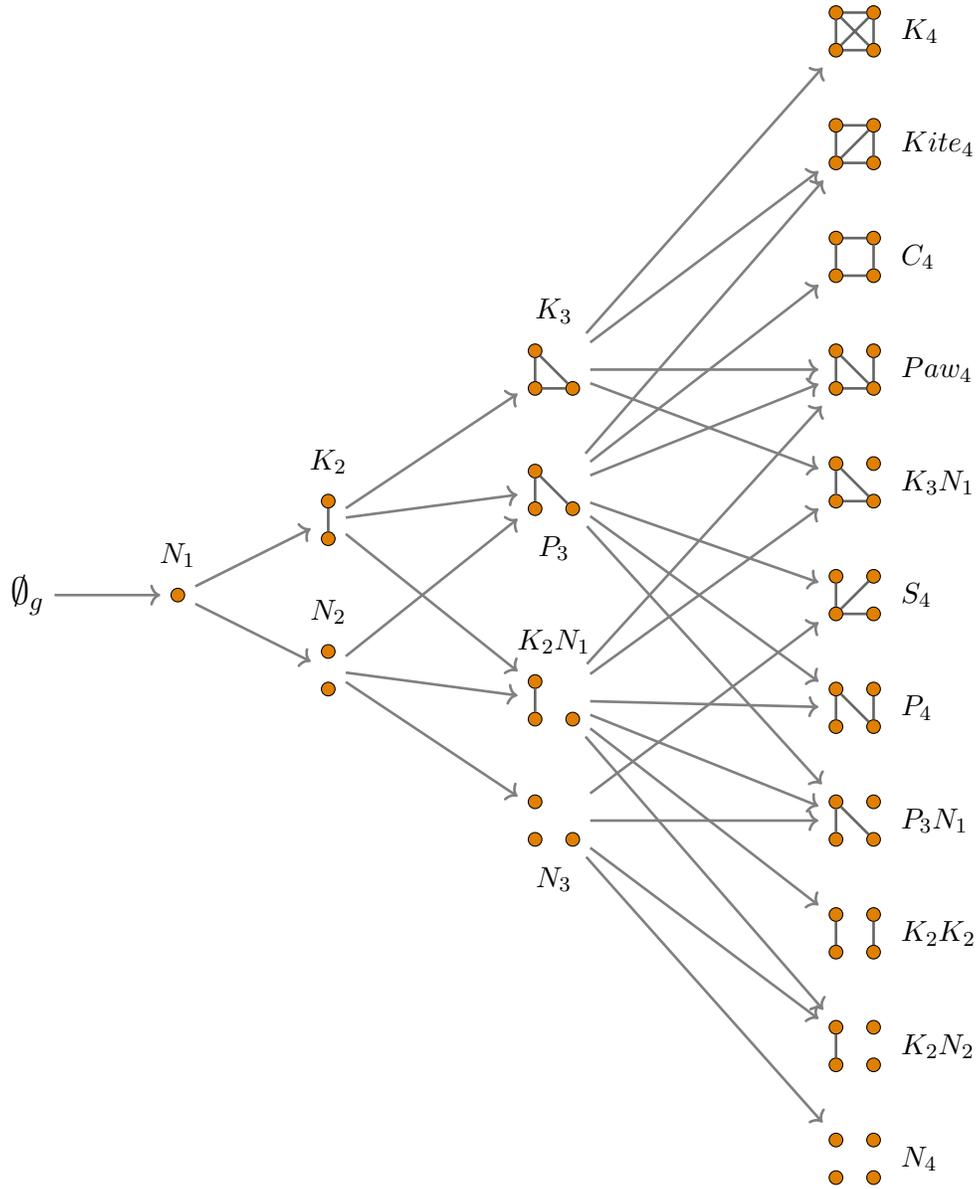
\begin{figure}
 \begin{tikzpicture}
 \def \r {0.7cm}
 \def \s {0.5}
 \def \d {2cm}
 \def \et {1}
 \def \el {0.4}
 \def \ct {1}

 \tikzstyle{node1} = [draw, circle, fill=c01, scale=\s]
 \tikzstyle{node2} = [ ]

 \tikzstyle{line1}=[draw,line width=\et,color=black!60]
 \tikzstyle{line2}=[draw, ->, line width=\ct, color=black!50] 
  \node[node2] (empty) at (0,0) {\Large $\emptyset_g$};
  \node[node2,label=above:$N_1$] (N1) at (\d,0) {
   \begin{tikzpicture}
     \node[node1] at (0,0) {};
   \end{tikzpicture}
  };
  \node[node2,label=above:$K_2$] (K2) at (2*\d,0.5*\d) {
    \begin{tikzpicture}
     \node[node1] (v1) at (0,0) {};
     \node[node1] (v2) at (0,0.5) {};
     \draw[line1] (v1) -- (v2);
   \end{tikzpicture}
  };
  \node[node2,label=above:$N_2$] (N2) at (2*\d,-0.5*\d) {
    \begin{tikzpicture}
     \node[node1] (v1) at (0,0) {};
     \node[node1] (v2) at (0,0.5) {};
   \end{tikzpicture}
  };

  \node[node2,label=above:$K_3$] (K3) at (3.5*\d,1.5*\d) {
    \begin{tikzpicture}
     \node[node1] (v1) at (0,0) {};
     \node[node1] (v2) at (0,0.5) {};
     \node[node1] (v3) at (0.5,0){};
     \draw[line1] (v1)--(v2)--(v3)--(v1);
    \end{tikzpicture}
  };
  \node[node2,label=below:$P_3$] (P3) at (3.5*\d,0.7*\d) {
    \begin{tikzpicture}
     \node[node1] (v1) at (0,0) {};
     \node[node1] (v2) at (0,0.5) {};
     \node[node1] (v3) at (0.5,0){};
     \draw[line1] (v1)--(v2)--(v3);
    \end{tikzpicture}
  };
  \node[node2,label=above: $K_2N_1$] (K2N1) at (3.5*\d,-0.7*\d) {
    \begin{tikzpicture}
     \node[node1] (v1) at (0,0) {};
     \node[node1] (v2) at (0,0.5) {};
     \node[node1] (v3) at (0.5,0){};
     \draw[line1] (v1)--(v2);
    \end{tikzpicture}
  };
  \node[node2,label=below:$N_3$] (N3) at (3.5*\d,-1.5*\d) {
    \begin{tikzpicture}
     \node[node1] (v1) at (0,0) {};
     \node[node1] (v2) at (0,0.5) {};
     \node[node1] (v3) at (0.5,0){};
    \end{tikzpicture}
  };

  \node[node2,label=right:$K_4$] (K4) at (5.5*\d,5*0.75*\d) {
    \begin{tikzpicture}
     \node[node1] (v1) at (0,0) {};
     \node[node1] (v2) at (0,0.5) {};
     \node[node1] (v3) at (0.5,0){};
     \node[node1] (v4) at (0.5,0.5) {};
     \draw[line1] (v1)--(v2)--(v3)--(v4)--(v1)--(v3);
     \draw[line1] (v2)--(v4); 
    \end{tikzpicture}
  };
  \node[node2,label=right:$Kite_4$] (Kite4) at (5.5*\d,4*0.75*\d) {
    \begin{tikzpicture}
     \node[node1] (v1) at (0,0) {};
     \node[node1] (v2) at (0,0.5) {};
     \node[node1] (v3) at (0.5,0){};
     \node[node1] (v4) at (0.5,0.5) {};
     \draw[line1] (v1)--(v2)--(v4)--(v3)--(v1)--(v4);
    \end{tikzpicture}
  };
  \node[node2,label=right:$C_4$] (C4) at (5.5*\d,3*0.75*\d) {
    \begin{tikzpicture}
     \node[node1] (v1) at (0,0) {};
     \node[node1] (v2) at (0,0.5) {};
     \node[node1] (v3) at (0.5,0){};
     \node[node1] (v4) at (0.5,0.5) {};
     \draw[line1] (v1)--(v2)--(v4)--(v3)--(v1);
    \end{tikzpicture}
  };
  \node[node2,label=right:$Paw_4$] (Paw4) at (5.5*\d,2*0.75*\d) {
    \begin{tikzpicture}
     \node[node1] (v1) at (0,0) {};
     \node[node1] (v2) at (0,0.5) {};
     \node[node1] (v3) at (0.5,0){};
     \node[node1] (v4) at (0.5,0.5) {};
     \draw[line1] (v4)--(v3)--(v2)--(v1)--(v3); 
    \end{tikzpicture}
  };
  \node[node2,label=right:$K_3N_1$] (K3N1) at (5.5*\d,0.75*\d) {
    \begin{tikzpicture}
     \node[node1] (v1) at (0,0) {};
     \node[node1] (v2) at (0,0.5) {};
     \node[node1] (v3) at (0.5,0){};
     \node[node1] (v4) at (0.5,0.5) {};
     \draw[line1] (v1)--(v2)--(v3)--(v1); 
    \end{tikzpicture}
  };
  \node[node2,label=right:$S_4$] (S4) at (5.5*\d,0) {
    \begin{tikzpicture}
     \node[node1] (v1) at (0,0) {};
     \node[node1] (v2) at (0,0.5) {};
     \node[node1] (v3) at (0.5,0){};
     \node[node1] (v4) at (0.5,0.5) {};
     \draw[line1] (v3)--(v1)--(v2);
     \draw[line1] (v4)--(v1);
    \end{tikzpicture}
  };
  \node[node2,label=right:$P_4$] (P4) at (5.5*\d,-1*0.75*\d) {
    \begin{tikzpicture}
     \node[node1] (v1) at (0,0) {};
     \node[node1] (v2) at (0,0.5) {};
     \node[node1] (v3) at (0.5,0){};
     \node[node1] (v4) at (0.5,0.5) {};
     \draw[line1] (v1)--(v2)--(v3)--(v4); 
    \end{tikzpicture}
  };
  \node[node2,label=right:$P_3N_1$] (P3N1) at (5.5*\d,-2*0.75*\d) {
    \begin{tikzpicture}
     \node[node1] (v1) at (0,0) {};
     \node[node1] (v2) at (0,0.5) {};
     \node[node1] (v3) at (0.5,0){};
     \node[node1] (v4) at (0.5,0.5) {};
     \draw[line1] (v1)--(v2)--(v3); 
    \end{tikzpicture}
  };
  \node[node2,label=right:$K_2K_2$] (K2K2) at (5.5*\d,-3*0.75*\d) {
    \begin{tikzpicture}
     \node[node1] (v1) at (0,0) {};
     \node[node1] (v2) at (0,0.5) {};
     \node[node1] (v3) at (0.5,0){};
     \node[node1] (v4) at (0.5,0.5) {};
     \draw[line1] (v1)--(v2);
     \draw[line1] (v3)--(v4); 
    \end{tikzpicture}
  };
  \node[node2,label=right:$K_2N_2$] (K2N2) at (5.5*\d,-4*0.75*\d) {
    \begin{tikzpicture}
     \node[node1] (v1) at (0,0) {};
     \node[node1] (v2) at (0,0.5) {};
     \node[node1] (v3) at (0.5,0){};
     \node[node1] (v4) at (0.5,0.5) {};
     \draw[line1] (v1)--(v2); 
    \end{tikzpicture}
  };
  \node[node2,label=right:$N_4$] (N4) at (5.5*\d,-5*0.75*\d) {
    \begin{tikzpicture}
     \node[node1] (v1) at (0,0) {};
     \node[node1] (v2) at (0,0.5) {};
     \node[node1] (v3) at (0.5,0){};
     \node[node1] (v4) at (0.5,0.5) {};
    \end{tikzpicture}
  };

  \draw[line2] (empty)--(N1);
  \draw[line2] (N1)--(K2);
  \draw[line2] (N1)--(N2);

  \draw[line2] (K2)--(K3);
  \draw[line2] (K2)--(P3);
  \draw[line2] (K2)--(K2N1);

  \draw[line2] (N2)--(P3);
  \draw[line2] (N2)--(K2N1);
  \draw[line2] (N2)--(N3);

  \draw[line2] (K3)--(K4);
  \draw[line2] (K3)--(Kite4);
  \draw[line2] (K3)--(Paw4);
  \draw[line2] (K3)--(K3N1);

  \draw[line2] (P3)--(Kite4);
  \draw[line2] (P3)--(C4);
  \draw[line2] (P3)--(Paw4);
  \draw[line2] (P3)--(S4);
  \draw[line2] (P3)--(P4);
  \draw[line2] (P3)--(P3N1);

  \draw[line2] (K2N1)--(Paw4);
  \draw[line2] (K2N1)--(K3N1);
  \draw[line2] (K2N1)--(P4);
  \draw[line2] (K2N1)--(P3N1);
  \draw[line2] (K2N1)--(K2K2);
  \draw[line2] (K2N1)--(K2N2);

  \draw[line2] (N3)--(S4);
  \draw[line2] (N3)--(P3N1);
  \draw[line2] (N3)--(K2N2);
  \draw[line2] (N3)--(N4);

 \end{tikzpicture}
  \caption{The first few levels of the induced subgraph order $\leq_i$. Note the symmetry arising from the automorphism $f(g)= g^c$.}
    \label{fig:indSubOrd}
  \end{figure}
  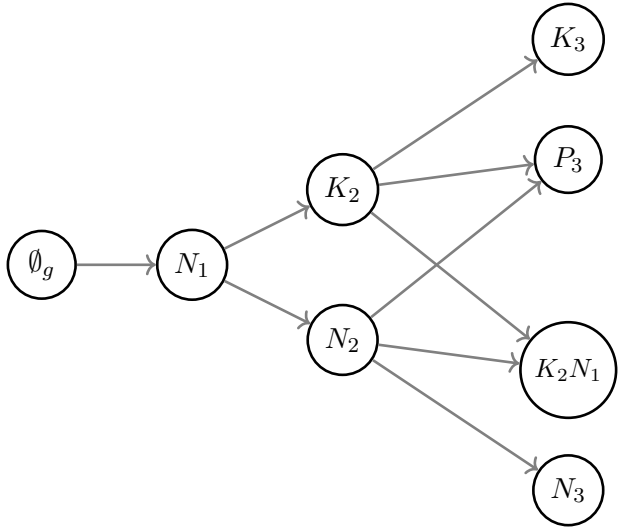
\begin{figure}
 \begin{tikzpicture}
 \def \r {0.7cm}
 \def \s {0.5}
 \def \d {2cm}
 \def \et {1}
 \def \el {0.4}
 \def \ct {1}

 \tikzstyle{node1} = [draw, circle, fill=c01, scale=\s]
 \tikzstyle{node2} = [draw, circle, line width=1, scale=2*\s]

 \tikzstyle{line1}=[draw,line width=\et,color=black!60]
 \tikzstyle{line2}=[draw, ->, line width=\ct,color=black!50] 
 %
  \node[node2] (empty) at (0,0) {$\emptyset_g$};
  \node[node2] (N1) at (\d,0) {$N_1$};

  \node[node2] (K2) at (2*\d,0.5*\d) {$K_2$};
  \node[node2] (N2) at (2*\d,-0.5*\d) {$N_2$};

  \node[node2] (K3) at (3.5*\d,1.5*\d) {$K_3$};
  \node[node2] (P3) at (3.5*\d,0.7*\d) {$P_3$};
  \node[node2] (K2N1) at (3.5*\d,-0.7*\d) {\small $K_2N_1$};
  \node[node2] (N3) at (3.5*\d,-1.5*\d) {$N_3$};
  \draw[line2] (empty)--(N1);
  \draw[line2] (N1)--(K2);
  \draw[line2] (N1)--(N2);

  \draw[line2] (K2)--(K3);
  \draw[line2] (K2)--(P3);
  \draw[line2] (K2)--(K2N1);

  \draw[line2] (N2)--(P3);
  \draw[line2] (N2)--(K2N1);
  \draw[line2] (N2)--(N3);

 \end{tikzpicture}
  \caption{Initial layers of the object ($\mathcal{G},\leq_i$). Note that the edge relation i.e. the ``internal structure'' is not available to us directly, nor are the names indicated inside the nodes. The arrows indicate the upper cover relation.}
    \label{fig:indSubAsFOstr}
  \end{figure}

	The domain $\mathcal{G}$ has however been studied in a different, non-logical setting : it appears in the graph homomorphism literature \cite{hell2004graphs}. The object ($\mathcal{G},\leq_h$) where $g_1 \leq_h g_2$ if and only if there exists a homomorphism from $g_1$ to $g_2$ is a preorder and not a partial order. Appropriate quotienting gives us the poset ($\mathscr{C},\leq_h$) where $\mathscr{C}$ is the set of \textit{cores}, which are minimum elements under the $\leq_h$ order inside a particular equivalence class. However much of the literature on graph homomorphism concentrates on \textit{homomorphism densities} \cite{lovasz2012random,lovasz2008graph,razborov20103}. Hatami and Norine in \cite{hatami2011undecidability},  start the paper by saying ``Many fundamental theorems in extremal graph theory can be expressed as algebraic inequalities between subgraph densities...for dense graphs it is possible to replace subgraph densities with homomorphism densities'', thus motivating this line of research. The above paper proves the undecidability of linear inequalities over graph homomorphism densities, showing the difficulty of general problems even in this restricted language. 

	Our work can be thought of as extending the work of Je\v{z}ek and McKenzie  who study the substructure orderings over finite lattices \cite{jevzek2009bdefinability}, semilattices \cite{jevzek2009definability}, distributive lattices \cite{jevzek2009adefinability} and posets \cite{jevzek2010definability}. This line of work was extended to the induced subgraph order (equivalently, the substructure ordering over graphs) by Wires \cite{wires2016definability} and to directed graphs by Kunos \cite{kunos2015definability}.

 One of the objectives in these studies is to understand universal classes of objects, not necessarily finite. For instance, let $\mathcal{K}$ be a class of semi-lattices described by some set of universal sentences. Ordering all such $\mathcal{K}$ by inclusion gives the lattice $\mathcal{U}$. Let $\mathcal{S}$ be the set of all finite semi-lattices. Then the map sending $\mathcal{K}$ to $\mathcal{K} \cap \mathcal{S}$ is an isomorphism between $\mathcal{U}$ and the set of all order-ideals of $\mathcal{S}$. Je\v{z}ek and McKenzie observe that the definability in $\mathcal{U}$ of the set of $\mathcal{K}$ which contain only finitely many elements can be reduced to showing the definability in $\mathcal{S}$ of every domain element as a constant. They proceed to show that this latter fact holds. 

	 A key technical construction introduced in \cite{jevzek2010definability} in connection with definability of constants in the substructure orders is that of the objects called``o-presentations'', which occupy centerstage in our work. These objects were defined for graphs by Wires to show that the induced subgraph order has a unique non-trivial automorphism, which maps every graph to its complement. He also shows that the set of predicates definable in ($\mathcal{G},\leq_i,P_3$) (where $P_3$ is a constant for the path on three vertices) is exactly the isomorphism invariant predicates definable in the first order theory of a simple expansion $\mathcal{CG}'$ of the small category of graphs.

	Our work can also be thought of as extending the study of structures of order over words and trees to graphs. While structures with the domain set $\Sigma^*$ of finite words over an alphabet $\Sigma$ have been studied for some time (see Quine \cite{quine1946concatenation} on ($\Sigma^*,.$) where $.$ is a binary function symbol for concatenation), the study of order theories of combinatorial objects is more recent and arose as part of the term rewriting literature \cite{venkataraman1987decidability,dershowitz1979orderings,comon1994ordering}. Treinen and Comon \cite{comon1997first} studied the lexicographic path ordering on words. Other orders such as subword, infix etc  were studied by Kuske \cite{kuske2006theories}. The focus of Kuske's work was on showing the undecidability of these word orders, with an emphasis on the syntactic fragments. More recently, fragments of the subword order have been studied. The $FO^2$ fragment of the subword order was shown to be NP-complete by Schnoebelen and Karandikar \cite{karandikar2015decidability}. The surprising result that the existential fragment extended by constants is undecidable was recently shown by Halfon etal \cite{halfon2017decidability}. 

	The current paper is a continuation of our work \cite{ramanujam2016definability} which shows the definability of arithmetic in the subgraph and minor orders; the analogous result for the induced subgraph order having already been established by Wires \cite{wires2016definability}.  This paper focuses on the computational aspect of graph orders which was not studied in \cite{ramanujam2016definability}. We introduce an encoding of graphs as strings (the actual encoding turns out to be unimportant as we discuss in Section \ref{sec:discussion}). Then we define the notion of a \textit{recursive relation over graphs} by looking at the relations over such string encodings (of appropriate arity depending on the arity of the predicate) recognised by Turing Machines. We also define the notion of an \textit{arithmetical relation over graphs} borrowed from the Theory of Numberings (see \cite{kudinov2009gandy} for a more general definition), which can be thought of as the analog in the setting of graphs to a relation over numbers which belongs to the arithmetical hierarchy. 

	In Section \ref{sec:condResultGenStr} we introduce the notion of a \textit{capable} structure over graphs which is an arithmetical structure over graphs satisfying the conditions: ($C1$) definability of arithmetic, ($C2$) definability of two graph predicates related to o-presentations and ($C3$) definability of the cardinality of a graph. We then show that every arithmetical relation over graphs is definable in a capable structure, in other words, every capable structure has the \textit{maximal definability property} (m.d.p.). The m.d.p. has been established in literature for structures such as the subword order \cite{kudinov2010definability}, but has not been studied extensively for structures over graphs. As a corollary, we obtain the fact that every recursive relation over graphs is definable in any capable structure. 

	 Our main result is to establish that the following structures are capable: ($\mathcal{G},\leq_i,P_3)$ ($P_3$ is a constant symbol for the (isomorphism type of the) path on three vertices, see Figure \ref{fig:indSubOrd}), ($\mathcal{G},\leq_s$) and ($\mathcal{G},\leq_m,sameSize$) in Section \ref{sec:graphOrders}. We note that definability of arithmetic does not automatically imply the main result of this paper. It also requires that the structure of interest be able to access the ``internal structure'' of a graph since the edge relation is not part of the vocabulary(see Figure \ref{fig:indSubAsFOstr}). 

	The fact that graph orders satisfy ($C1$) has already been established (\cite{ramanujam2016definability}, \cite{wires2016definability}).The main technical contribution of this paper is the establishment of condition ($C2$) f{}or the induced subgraph and subgraph orders in Section \ref{sub:induced_subgraph_order}, \ref{sub:subgraph_order}. In the course of doing so, we prove the definability in the subgraph order of graph theoretic predicates of independent interest, such as disjoint union of graphs and number of edges of a graph. As a corollary we are also able to prove the result for an expansion of the minor order in Section \ref{sub:minor_order}.

	 This paper is an extended version of the conference paper \cite{thinniyam2017definability}, which  proves the definability of recursive predicates in the induced subgraph order. The results in this paper can also be seen as a strengthening of the definability results obtained in \cite{ramanujam2016definability}. 

\section{Preliminaries}
 \label{sec:prelim}
 \subsection{Graphs and Graph Orders}
 \begin{defi}[Labelled Graph]
 \label{defi:labelledGraph}
  A (finite) labelled graph $g$ is a structure \\ ($V_g,E_g,L_g$) with
  \begin{enumerate}
     \item finite domain (aka vertex) set $V_g$,
     \item an irreflexive, symmetric binary relation $E_g \subseteq V_g \times V_g$ which is the edge set of the graph, and
     \item a bijective function $L_g:V_g \rightarrow [n]$ where $[n]$ stands for the initial segment $\{1,2,3...,n\}$ of the natural numbers with $n=|V_g|$ i.e. $n$ is the number of vertices in the graph.
   \end{enumerate}
 \end{defi}
 We will write $v_i$ to denote the vertex whose image under $L_g$ is $i$. We will write $v_iv_j$ to denote the edge (if it exists) between $v_i$ and $v_j$. Note that we restrict ourselves to simple graphs i.e. graphs which do not have edges of the form $(v_i,v_i)$ for any $v_i \in V_g$.
 \begin{defi}
  An isomorphism between two labelled graphs $g_1$ and $g_2$ is a bijection $\eta : V_{g_1} \rightarrow V_{g_2}$ such that for any two vertices $v_i,v_j$ of $g_1$, the edge $v_iv_j$ exists if and only if there is an edge between vertices $\eta(v_i),\eta(v_j)$ in $g_2$. 
 \end{defi}
 We say $g_1$ is isomorphic to $g_2$ if there is an isomorphism between them, and write $g_1 \simeq g_2$. The relation $\simeq$ is an equivalence relation on the set of all finite labelled graphs. 
 \begin{defi}[Graph]
  By a graph $g$, we mean an equivalence class under the relation $\simeq$ over the set of all finite labelled graphs. The set of all graphs will be denoted $\mathcal{G}$.
 \end{defi}
 We will write $g =[g']$ to denote that the  graph $g$ is the isomorphism type of the labelled graph $g'$.

 \begin{defi}[Graph Orders]
 Consider the following graph operations:\\
 (O1) Deletion of an edge.\\
 (O2) Deletion of a vertex.\\
 (O3) Contraction of an edge.

 A graph $g_1$ is said to be an induced subgraph of a graph $g_2$ if $g_1$ can be obtained from $g_2$ by  finitely many applications of operation O2. We write $g_1 \leq_i g_2$.

 A graph $g_1$ is said to be a subgraph of a graph $g_2$ if $g_1$ can be obtained from $g_2$ by  finitely many applications of operations O1 and O2. We write $g_1 \leq_s g_2$.

 A graph $g_1$ is said to be a minor of a graph $g_2$ if $g_1$ can be obtained from $g_2$ by  finitely many applications of operations O1, O2 and O3. We write $g_1 \leq_m g_2$.
 \end{defi}
 Note that when a vertex is deleted, so are all edges incident on the vertex. Contraction of an edge $e=uv$ in a graph $g$ to give a graph $g'$ is the deletion of $e$ and identification of both $u$ and $v$ as the same vertex $v'$ in the new graph $g'$. Neither $u$ nor $v$ exist in $g'$ and all edges incident on either $u$ or $v$ are now incident on $v'$. For more on graph minors, see Diestel \cite{diestel2005graph}. 
 \subsection{First Order Structures and Definability}
 \label{sub:prelimsFOstrs}
 For the standard syntax and semantics of first order logic, we refer the reader to Enderton \cite{enderton2001mathematical}. We introduce the notion of a \textit{structure over graphs.}
 \begin{defi}[ Structure over Graphs]
  A structure over graphs is one which has as its domain the set $\mathcal{G}$ and a signature $\tau= (\mathscr{F},\mathscr{R},\mathscr{C})$ of functions $\mathscr{F}$, relations $\mathscr{R}$ and constants $\mathscr{C}$ in $\mathcal{G}$. We will denote the structure by ($\mathcal{G},\tau$).

  In this paper, we will primarily be concerned with  structures over graphs where $\tau$ contains a symbol $\leq$ which will be interpreted as a partial order over $\mathcal{G}$. In particular, we consider the structures
  \begin{enumerate}
     \item ($\mathcal{G},\leq_i,P_3$) i.e. the induced subgraph order with a constant symbol $P_3$ for the path on three vertices.
     \item ($\mathcal{G},\leq_s$) i.e. the subgraph order.
     \item ($\mathcal{G},\leq_m,sameSize$) i.e. the minor order with an additional binary relation $sameSize(x,y)$ which holds if and only if $x$ and $y$ have the same number of edges. 
   \end{enumerate}
 \end{defi}
 The constant $P_3$ is used to break the symmetry of the induced subgraph order which by itself cannot distinguish between a graph and its complement since the map sending a graph to its complement is an automorphism of the order.

 \begin{defi}[Covering Relation of a Poset]
  \label{def:covering_relation}
  Given elements $x,y$ of a poset ($P,\leq$) we define the covering relation $x \lessdot y$ as\\
   $x \lessdot y$ if and only if $x < y$ and there exists no element $z$ of $P$ such that $x < z < y$.
 \end{defi}

 \begin{rem}
  \label{rem:coverSubVsInd}
  The deletion of a vertex of a graph can be thought of as the deletion of some number of edges followed by deletion of an isolated vertex. Thus $g_1 \leq_i g_2$ implies that $g_1 \leq_s g_2$ but the converse need not be true. In addition, $g_1 \lessdot_i g_2$ implies that $g_1 \leq_s g_2$ but not necessarily that $g_1 \lessdot_s g_2$.
 \end{rem}

 \begin{obs}
  \label{obs:defCoverInPoset}
  The covering relation of a poset ($P,\leq$) is first order definable using $\leq$:
  \[ x \lessdot y := x < y \wedge \forall z \neg(x < z < y)\]
 \end{obs}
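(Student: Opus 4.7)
The statement is essentially a direct translation of Definition \ref{def:covering_relation} into a first order formula, so the plan is just to verify that the displayed formula captures exactly the semantic content of the covering relation and that all symbols appearing in it are either logical or expressible from $\leq$.

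First I would note that in any first order language with equality and a binary symbol $\leq$ interpreted as a partial order, the strict order $<$ is first order definable by $x < y := x \leq y \wedge \neg(x = y)$. Hence the expression $x < z < y$ appearing in the formula is shorthand for $x \leq z \wedge \neg(x = z) \wedge z \leq y \wedge \neg(z = y)$, which is a legitimate first order formula.

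Next I would unpack the semantics: the right hand side $x < y \wedge \forall z \, \neg(x < z < y)$ holds in $(P, \leq)$ precisely when $x < y$ and there is no element $z \in P$ strictly between $x$ and $y$. This matches word for word the content of Definition \ref{def:covering_relation}, so the formula defines the covering relation $\lessdot$.

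There is no real obstacle here; the observation is included only to make explicit that $\lessdot$ will henceforth be used freely inside first order formulas without needing to extend the vocabulary. The only minor point worth emphasising is that the definition is uniform, depending only on $\leq$ and not on the particular poset, so it applies to each of the graph orders $\leq_s, \leq_i, \leq_m$ considered in the paper and yields the corresponding cover relations $\lessdot_s, \lessdot_i, \lessdot_m$ as derived predicates.
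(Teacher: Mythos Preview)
Your proposal is correct and matches the paper's treatment: the paper simply states the defining formula without further argument, treating the observation as immediate from Definition~\ref{def:covering_relation}. Your added remarks about the definability of $<$ from $\leq$ and the uniformity across $\leq_s,\leq_i,\leq_m$ are sound elaborations of what the paper leaves implicit.
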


 As we will see, while the overall proof strategy for proving the definability of recursive predicates in $\leq_s$ and $\leq_i$ remains the same, there are differences at a finer level of granularity. It is easier to define certain predicates in one order as compared to the other and vice versa. We would like to state that certain large cycles occur as induced subgraphs. This can be directly stated using $\leq_i$, but requires us to do more work when using $\leq_s$.

 \begin{defi}[Arithmetic]
  By arithmetic, we mean the first order theory of the structure ($\mathbb{N},\phi_{+},\phi_{\times}$) where $\mathbb{N}$ is the set of all natural numbers and $\phi_{+},\phi_{\times}$ are ternary relations for addition and multiplication respectively.
 \end{defi}
 We will also use variables $x,y,z$ to denote numbers in arithmetical formulae; and lower case letters $i,j,k,l,m,n$ to denote numbers.

 \begin{defi}[Constant Definability]  
  Fix a first order language $\mathcal{L}$. Let $g$ be an element of the domain of an $\mathcal{L}$-structure $\mathcal{A}$. 
 We say that $g$ is definable in $\mathcal{A}$,  if there exists an $\mathcal{L}$  
 formula $\alpha_g(x)$ in one free variable,  such that $\mathcal{A},g' \vDash \alpha_g(x)$ if and only if $g=g'$.
 \end{defi}  

 For any definable domain element $g$, we use $g$ as a constant symbol representing the domain element.

 \begin{obs}
 \label{obs:totalOrderDefble}
  For any definable family $\mathcal{F}_0$ of ($\mathcal{G},\leq_i,P_3$) which forms a total order under $\leq_i$, every member of $\mathcal{F}_0$ is definable as a constant.
  \end{obs}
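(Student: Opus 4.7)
The plan is to define each element of $\mathcal{F}_0$ by induction on its position in the total order $(\mathcal{F}_0, \leq_i)$, exploiting that this order has type at most $\omega$.

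\textbf{Well-foundedness.} First I would observe that $\leq_i$ is well-founded on $\mathcal{G}$. Indeed, if $g_1 <_i g_2$ then $|g_1| < |g_2|$, since any induced subgraph of a finite graph $g_2$ on the same number of vertices must equal $g_2$. Hence a strict $\leq_i$-chain has strictly increasing cardinalities, and the totally ordered set $(\mathcal{F}_0, \leq_i)$ has order type at most $\omega$. Each $g \in \mathcal{F}_0$ therefore has a well-defined finite rank $n$ in $(\mathcal{F}_0, \leq_i)$.

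\textbf{Base case.} Let $\phi(x)$ be the formula defining $\mathcal{F}_0$. The $\leq_i$-minimum element $g_1$ is picked out by
\[ \alpha_{g_1}(x) := \phi(x) \wedge \forall y\, (\phi(y) \supset x \leq_i y). \]
By totality of $\leq_i$ restricted to $\mathcal{F}_0$ and well-foundedness, $\alpha_{g_1}$ has a unique realiser.

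\textbf{Inductive step.} Suppose the first $n$ elements $g_1, \ldots, g_n$ of $\mathcal{F}_0$ have been defined by formulae $\alpha_{g_1}(x), \ldots, \alpha_{g_n}(x)$. The $(n+1)$-th element is the $\leq_i$-least element of $\mathcal{F}_0$ that is none of $g_1, \ldots, g_n$, and is therefore definable by
\[ \alpha_{g_{n+1}}(x) := \phi(x) \wedge \bigwedge_{i=1}^{n} \neg \alpha_{g_i}(x) \wedge \forall y\, \Bigl(\phi(y) \wedge \bigwedge_{i=1}^{n} \neg \alpha_{g_i}(y) \supset x \leq_i y\Bigr). \]
Since every element of $\mathcal{F}_0$ has finite rank, this inductive construction yields a defining formula for each member after finitely many steps.

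There is no real obstacle here: the only thing to verify is the well-foundedness of $\leq_i$ on $\mathcal{G}$, which is immediate from the cardinality argument above, and the inductive construction then goes through without difficulty. Note that the constant $P_3$ is not needed for this observation; it is a general fact about definable totally ordered families in $(\mathcal{G}, \leq_i)$.
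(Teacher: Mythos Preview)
Your proof is correct and follows essentially the same approach as the paper: define the minimum element of $\mathcal{F}_0$ using well-foundedness of $\leq_i$, then inductively define the $(n+1)$-th element as the successor of the $n$-th in $(\mathcal{F}_0,\leq_i)$. The paper phrases the inductive step as ``the unique cover of $f_n$ in $\mathcal{F}_0$'' rather than ``the least element not among $g_1,\ldots,g_n$'', but in a total order these are equivalent; your version is merely slightly more verbose. You also spell out the cardinality argument for well-foundedness and the resulting order type $\leq \omega$, which the paper leaves implicit.
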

 To see this, first observe that there exists a minimum element $f_1$ in $\mathcal{F}$ by well foundedness of the order $\leq_i$.
  \[ f_1(x) := \mathcal{F}_0(x) \; \wedge \; (  \forall y \; \mathcal{F}_0(y) \supset (x \leq_i y))\]
  Assuming $f_n$ (the $n^{th}$ smallest element of $\mathcal{F}_0$) has been defined, $f_{n+1}$ can be defined as the unique cover of $f_n$ in $\mathcal{F}_0$. We will also use a stronger version of this observation in the case where $\leq$ is any definable, discrete well-ordering of a definable family $\mathcal{F}_0$.

 \begin{defi}[Definability of Predicates] 
 \label{def:definabilityOfPreds}
  An $n$-ary predicate $R$ is definable in a $\tau$-structure $\mathcal{A}$ if there is a formula $\phi(\bar{x})$ with $n$ free variables such that for any $n$-tuple $\bar{a} \in \mathcal{A}$, 
  \[R(\bar{a}) \iff \mathcal{A},\bar{a} \models \phi(\bar{x})\]
  We say a predicate is \emph{definable in arithmetic} if and only if it is definable in ($\mathbb{N},\phi_{+},\phi_{\times}$). 
 \end{defi}
 We will also talk about a predicate ``definable in graphs'' to mean definable in a particular structure over graphs, which will be clear from the context. Let $R(x_1,...,x_n)$ be an $n-$ary relation such that whenever $R(\bar{g})$ holds of a tuple $\bar{g}=(g_1,g_2,...,g_n)$, it is the case that $g_i \leq g_j$ in the graph order $\leq$ under consideration. We adopt the convention that variable $x_i$ always occurs before $x_j$ in the tuple $\bar{x}$. Consider the relation $comp(x,y)$ which holds if and only if $y$ is a component of $x$. Since $comp(x,y)$ implies $y \leq_i x$, the convention adopted implies that we will write $comp(x,y)$ and not $comp(y,x)$. 

 We use the symbol $\phi$ for arithmetical formulae and $\psi$ for graph formulae. In the case of a definable family of graphs $\mathcal{F}_0$, we will write $x_1,x_2 \in \mathcal{F}_0$ instead of $\mathcal{F}_0(x_1) \wedge \mathcal{F}_0(x_2)$ to simplify notation. References to numbers in graph formulae are to be understood as the appropriate member of the family $\mathcal{N}$ i.e. the number $k$ is the graph $N_k$.  

We adopt the following convention to help the reader in parsing the formulae in this paper.  
\begin{rem}
  \label{rem:variableOrderConvention}
  We describe a convention we follow throughout this paper regarding the order of variables used to define predicates over graphs. Let $R(x_1,x_2,...,x_k)$ be a $k-$ary relation defined using a formula $\psi_R(x_1,x_2,...,x_k)$ over graph order. The order of the variables is $x_1 < x_2 < ... < x_k$ in this instance. The convention adopted states that if $R(g_1,g_2,...,g_k)$ implies that $g_i \leq g_j$ for some $i,j \in \mathbb{N}$, then it must be the case that $j < i$. For a more concrete example, consider the predicate $comp(x,y)$ which holds if and only if $y$ is a component of $x$. If $comp(g_1,g_2)$ holds, then so does $g_1 \leq_i g_2$. We could have chosen to define the predicate as: $comp(x,y)$ holds if and only $x$ is a component of $y$, but the former definition is used in accordance with the convention. 
\end{rem}

We make a note of the notation used in this paper in the following remark.
\begin{rem}
\label{rem:notation}
All variables $x,y,z$ occuring in formulae denote graphs and not labelled graphs. We will however need to talk about specific vertices or edges inside a graph and thus will require a labelling. So we will abuse notation and use $u_i,u_j$ to talk of vertices of a graph (not a labelled one) and  $u_iu_j$ for the edge joining $u_i$ and $u_j$. Usually the labelling on vertices will be clear from the context. We will use capital letters $U,V$ to denote sets of vertices. Given $V_0 \subseteq V(g)$ of vertices of a graph $g$, we will write $g[V_0]$ to denote the graph induced by the set $V_0$ of vertices. 

 We will use $e$ to denote the edge of a graph. We denote graphs by $g,h$; and graph families by caligraphic letters such as $\mathcal{P},\mathcal{C}$.

 We will denote by $N_i,K_i,C_i,S_i,P_i$ the graph consisting of $i$ isolated vertices, the $i$-clique, the cycle on $i$ vertices, the star on $i$ vertices and the path on $i$ vertices respectively (see Figure \ref{fig:NKPCS5}); and by $\mathcal{N},\mathcal{K},\mathcal{C},\mathcal{S},\mathcal{P}$ the corresponding families of isolated vertices, cliques, cycles, stars and paths. We denote the cardinality (number of vertices) of a graph $g$ by $|g|$, the number of edges of $g$ by $||g||$ and the disjoint union of graphs $g$ and $h$ by $g \cup h$.
\end{rem}

 \begin{figure}
 \begin{tikzpicture}
 \def \r {0.8cm}
 \def \s {0.5}
 \def \d {2.75}
 \def \et {1}
 \tikzstyle{node1} = [draw, circle, fill=c01, scale=\s]
 \tikzstyle{line1}=[draw,line width=\et,color=black!60]
  \begin{scope}
  \node[label=below:\Large $N_5$]{
  \begin{tikzpicture}
	\node[node1] (v1) at ({72*0}:\r) {};
	\node[node1] (v2) at ({72*1}:\r) {};
	\node[node1] (v3) at ({72*2}:\r) {};
	\node[node1] (v4) at ({72*3}:\r) {};
	\node[node1] (v5) at ({72*4}:\r) {};
  \end{tikzpicture}
  };
  \end{scope}
 
 \begin{scope}[shift={(\d,0)}]
 \node[label=below:\Large $P_5$]{
  \begin{tikzpicture}
	\node[node1] (v1) at ({72*0}:\r) {};
	\node[node1] (v2) at ({72*1}:\r) {};
	\node[node1] (v3) at ({72*2}:\r) {};
	\node[node1] (v4) at ({72*3}:\r) {};
	\node[node1] (v5) at ({72*4}:\r) {};
	\draw[line1] (v1) -- (v2) --(v3)--(v4)--(v5);
  \end{tikzpicture}
  };
 \end{scope}
 \begin{scope}[shift={(2*\d,0)}]
 \node[label=below:\Large $C_5$]{
  \begin{tikzpicture}
	\node[node1] (v1) at ({72*0}:\r) {};
	\node[node1] (v2) at ({72*1}:\r) {};
	\node[node1] (v3) at ({72*2}:\r) {};
	\node[node1] (v4) at ({72*3}:\r) {};
	\node[node1] (v5) at ({72*4}:\r) {};
	\draw[line1] (v1) -- (v2) --(v3)--(v4)--(v5)--(v1);
  \end{tikzpicture}
  };
 \end{scope}
 \begin{scope}[shift={(3*\d,0)}]
 \node[label=below:\Large $K_5$]{
  \begin{tikzpicture}
	\node[node1] (v1) at ({72*0}:\r) {};
	\node[node1] (v2) at ({72*1}:\r) {};
	\node[node1] (v3) at ({72*2}:\r) {};
	\node[node1] (v4) at ({72*3}:\r) {};
	\node[node1] (v5) at ({72*4}:\r) {};
	\draw[line1] (v1) -- (v2) --(v3)--(v4)--(v5)--(v1)--(v3)--(v5)--(v2)--(v4)--(v1);
  \end{tikzpicture}
  };
 \end{scope}
 \begin{scope}[shift={(4*\d,0)}]
 \node[label=below:\Large $S_5$]{
  \begin{tikzpicture}
	\node[node1] (v1) at ({72*0}:\r) {};
	\node[node1] (v2) at ({72*1}:\r) {};
	\node[node1] (v3) at ({72*2}:\r) {};
	\node[node1] (v4) at ({72*3}:\r) {};
	\node[node1] (v5) at ({72*4}:\r) {};
	\draw[line1] (v1) -- (v2)--(v3);
	\draw[line1] (v4)--(v2)--(v5);
  \end{tikzpicture}
  };
 \end{scope}

 \end{tikzpicture}
 \caption{Isolated points, path, cycle, clique and star of order 5 from left to right.} 
 \label{fig:NKPCS5}
\end{figure}
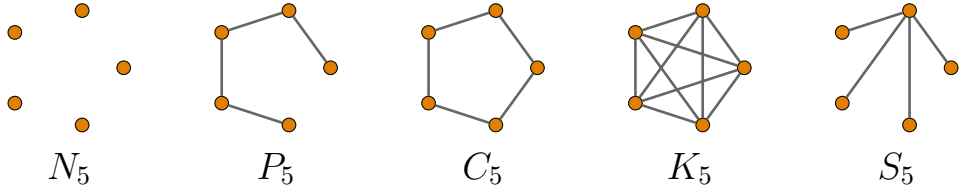
 \subsection{Recursive Predicates over Graphs}
 Next we have the definitions we need to formalize the meaning of ``recursive predicate over graphs.''
 There exist notions of computability and recursive predicates over abstract structures (see \cite{fitting2011fundamentals}), but for our purposes, we use a fixed encoding of graphs as strings so that the standard notion of a computable predicate as one accepted by a Turing machine can be used. We encode graphs as numbers (equivalently binary strings). These encodings were originally introduced in \cite{ramanujam2016definability}.

 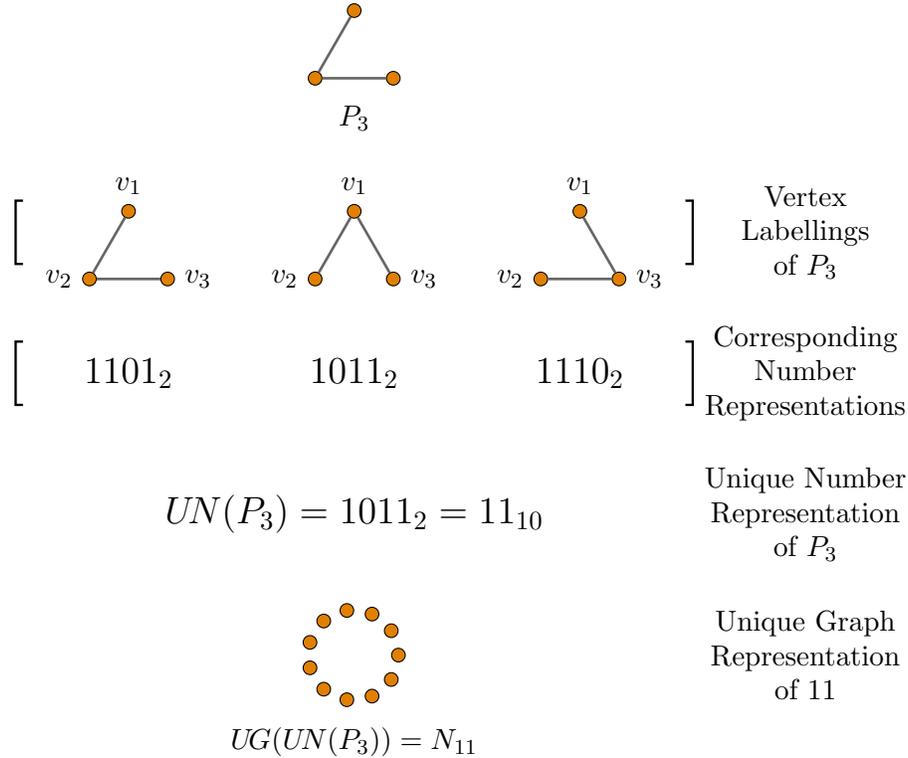
\begin{figure}
 \begin{tikzpicture}
 \def \r {0.6cm}
 \def \s {0.5}
 \def \hd {3}
 \def \vd {2.5}
 \def \et {1}
 \tikzstyle{node1} = [draw, circle, fill=c01, scale=\s]
 \tikzstyle{line1}=[draw,line width=\et,color=black!60]

   \begin{scope}
  \node[label=below:$P_3$ ]{
  \begin{tikzpicture}
  \node[node1] (v1) at (90:\r) {};
  \node[node1] (v2) at (210:\r) {};
  \node[node1] (v3) at (330:\r) {};
  \draw[line1] (v1) -- (v2) --(v3);
  \end{tikzpicture}
  };
  \end{scope}

  \node at (-\hd-1.5,-\vd) {\Huge $[$};
  \begin{scope}[shift={(-\hd,-\vd)}]
    \node{
  \begin{tikzpicture}
 \node[node1,label=above:$v_1$] (v1) at (90:\r) {};
  \node[node1,label=left:$v_2$] (v2) at (210:\r) {};
  \node[node1,label=right:$v_3$] (v3) at (330:\r) {};
  \draw[line1] (v1) -- (v2) --(v3);
  \end{tikzpicture}
  };
  \end{scope}
  \begin{scope}[shift={(0,-\vd)}]
  \node{
  \begin{tikzpicture}
  \node[node1,label=above:$v_1$] (v1) at (90:\r) {};
  \node[node1,label=left:$v_2$] (v2) at (210:\r) {};
  \node[node1,label=right:$v_3$] (v3) at (330:\r) {};
  \draw[line1] (v2) -- (v1) --(v3);
  \end{tikzpicture}
  };
  \end{scope}
  \begin{scope}[shift={(\hd,-\vd)}]
  \node[label ]{
  \begin{tikzpicture}
  \node[node1,label=above:$v_1$] (v1) at (90:\r) {};
  \node[node1,label=left:$v_2$] (v2) at (210:\r) {};
  \node[node1,label=right:$v_3$] (v3) at (330:\r) {};
  \draw[line1] (v2) -- (v3) --(v1);
  \end{tikzpicture}
  };
  \end{scope}
  
  \node[align=center] at (\hd+3,-\vd) {Vertex \\ Labellings \\ of $P_3$};

  \node at (\hd+1.5,-\vd) {\Huge $]$};
  \begin{scope}[shift={(0,-1.75*\vd)}]
    \node at (-\hd-1.5,0) {\Huge [};
    \node at (-\hd,0) {\Large $1101_2$};
    \node at (0,0) {\Large $1011_2$};
    \node at (\hd,0) {\Large $1110_2$};
    \node at (\hd+1.5,0) {\Huge ]};
  \end{scope}
  \node[align=center] at (\hd+3,-1.75*\vd) {Corresponding \\ Number \\ Representations};
  \node at (0,-2.5*\vd) {\Large $U\!N(P_3)=1011_2=11_{10}$};
  \node[align=center] at (\hd+3,-2.5*\vd) {Unique Number \\ Representation \\ of $P_3$};
  \begin{scope}[shift={(0,-3.25*\vd)}]
  \node[label=below:{$U\!G(U\!N(P_3))=N_{11}$} ]{
  \begin{tikzpicture}
     \foreach \i in {1,...,11}
     {
     \node[node1]  at ({360/11 * (\i  - 1)}:\r) {};
     }
  \end{tikzpicture}
  };
  \end{scope}
  \node[align=center] at (\hd+3,-3.25*\vd) {Unique Graph \\ Representation \\ of 11};

 \end{tikzpicture}
 \caption{From top to bottom we see how to obtain the graph $U\!G(U\!N(g)) \in \mathcal{N}$ from any graph $g \in \mathcal{G}$. The subscript in the numbers correpond to the base. }
 \label{fig:noRepOfGraphs}
\end{figure}

\begin{defi}[Number Representation of a Graph]
  The empty graph $\emptyset_g$ is represented by the number $0$ and the graph $N_1$ is represented by the number $1$.
  A number representation of a graph $g$ which is not $\emptyset_g$ or $N_1$ is defined using the following procedure. 
  \begin{enumerate}
    \item Choose a labelled graph $g'$ such that $g=[g']$. The order on vertices given by $L_{g'}$ induces an order $\leq_{lex}$ on set $S$ of all tuples of vertices ($v_i,v_j$) of $g$ with $j <i$. Let ($v_i,v_j$) and ($v_k,v_l$) belong to $S$ (i.e. $j <i, l < k$). Then ($v_i,v_j$) $\leq_{lex}$ ($v_k,v_l$) if and only if $i < k$ or $i =k, j < l$.
    \item Arrange all the tuples belonging to $S$ in descending order by $\leq_{lex}$ to form the sequence $seq$. 
  \item Create the number $m$ whose binary expansion is ${n \choose 2} +1$ bits long and has the following property:
   the $i^{th}$ most significant bit is 0 or 1 according to whether the $i-1^{th}$ tuple in $seq$ corresponds to a non-edge or edge (respectively) of the labelled graph $g'$. 
  \end{enumerate}
  The number $m$ is called a number representation of the graph $g$.
 \end{defi}
 We note that the number representation of a graph always has $1$ as its leading digit with the exception of $0$ which is used to represent $\emptyset_g$. 

 \begin{defi}[Unique Number Representation of a Graph aka $U\!N$]
 \label{defi:uniqueNumber}
  The unique number representation of a graph $g$ is the least number $m$ such that it is a number representation of $g$ and is denoted $U\!N(g)$. Note that the map $U\!N:~\mathcal{G}~\rightarrow~\mathbb{N}$ is a one-one map.(See Figure \ref{fig:noRepOfGraphs} for an example.)
  We extend the definition to tuples of graphs $\bar g$ in the usual way.
 \end{defi}

 \begin{obs}
 \label{obs:UNorder}
  The representation $U\!N$ induces an ordering on the vertices of the graph which comes from the underlying labelled graph.
 \end{obs}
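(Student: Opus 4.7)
The plan is simply to unpack the definition of $U\!N$ and observe that the minimisation picks out a canonical labelling. Recall that every number representation of $g$ is obtained by choosing a labelled representative $g'$ with $g = [g']$ and reading off bits determined by $\leq_{lex}$ on vertex pairs, where $\leq_{lex}$ is in turn determined by the labelling $L_{g'}$. For a graph $g$ on $n$ vertices there are only $n!$ such labellings, so the set of number representations of $g$ is finite and nonempty; consequently the minimum $U\!N(g)$ is attained by at least one labelled graph $g^*$ with labelling $L^*$.

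Given any such minimiser $(g^*, L^*)$, I would then define the vertex order on $g$ by declaring $v \prec w$ iff $L^*(v) < L^*(w)$, where we identify $V(g)$ with $V(g^*)$ via the isomorphism witnessing $g = [g^*]$. This is manifestly induced by the underlying labelled graph $g^*$, which is precisely what the observation asserts.

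The one subtlety worth flagging is well-definedness: if two distinct pairs $(g^*_1, L^*_1)$ and $(g^*_2, L^*_2)$ both attain the minimum, then composing the witnessing isomorphism $g^*_1 \to g^*_2$ with $(L^*_2)^{-1} \circ L^*_1$ yields an automorphism of $g$ carrying one linear order to the other. Hence the induced ordering is canonical only up to $\mathrm{Aut}(g)$, which is harmless since any property expressible in a structure over graphs is automatically invariant under graph automorphisms.

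There is essentially no technical obstacle here; the observation is a direct corollary of the construction of $U\!N$, recorded in order to justify referring later to ``the ordering on the vertices of $g$ coming from $U\!N$''. The only thing requiring care in the write-up is to make explicit that, because $g$ itself is an isomorphism type with no intrinsically labelled vertices, the induced ordering lives on the vertex set of the minimising labelled representative and is only determined up to $\mathrm{Aut}(g)$.
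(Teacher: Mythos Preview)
Your proposal is correct. The paper itself offers no proof for this observation at all; it is stated as immediate from the construction of $U\!N$ and the text moves directly on to the next definition. Your unpacking---that the minimum is attained by some labelled representative $g^*$ whose labelling $L^*$ furnishes the order, and that this is well-defined only up to $\mathrm{Aut}(g)$---is exactly the content implicit in the paper's one-line observation, and in fact you supply more care (the automorphism remark) than the paper does.
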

 We can finally state what we mean by recursive predicates over graphs.
 \begin{defi}
 \label{defi:refPredGraphs}
  We say a predicate $R \subseteq \mathcal{G}^n$ is recursive if there exists a halting Turing machine $M$ such that
  \[ R(\bar{g}) \iff U\!N(\bar{g}) \in L(M)\] 
  i.e. the halting Turing machine $M$ accepts exactly the tuples of strings which correspond to $U\!N$ encodings of tuples belonging to $R$.
 \end{defi}

 In order to prove our main theorem, we will also need to encode numbers as graphs.

 \begin{defi}[Unique Graph Representation of a Number aka $U\!G$]
  \label{defi:uniqueGraph}
  Let $\mathcal{N}$ be the family of graphs which consists of graphs with no edges. This family is totally ordered by $\leq_s$ as well as $\leq_i$ and contains exactly one graph of cardinality $k$, denoted by $N_k$.

  The one-one map $U\!G : \mathbb{N} \rightarrow \mathcal{G}$ sends a number $k$ to the graph $N_k$ which is called the unique graph representation of $k$.
 \end{defi}

\subsection{Arithmetical Predicates over Graphs}



  We define the notion of an \textit{arithmetical} structure over graphs. For the sake of convenience, we will only deal with relational structures. 

\begin{defi}
Fix a total order $\leq_t$ on $\mathcal{G}$ such that ($\mathcal{G},\leq_t$) is isomorphic to ($\mathbb{N},\leq$). Let $plus_t$ and $times_t$ be the ternary addition and multiplication relations with respect to the order $\leq_t$. A relation $R$ over graphs is called an \textit{arithmetical} graph relation if it is definable in first order logic over ($\mathcal{G},plus_t,times_t$). 

A relational structure $(\mathcal{G},\tau)$ over graphs is called \textit{arithmetical} if every relation $R \in \tau$ is arithmetical. 
\end{defi}
We will drop `graph' from `arithmetical graph relation' and just say `arithmetical relation' when it is clear from context. 

\begin{rems}
\label{rem:numberTheoreticPred}
\begin{enumerate}
  \item Graph relations which are constrained to be subsets of $\mathcal{N}^k$ for some value of $k$ will be called \textit{number theoretic} predicates in contrast with the definition of arithmetical predicates above.
  \item Arithmetical relations over graphs are analogous to relations over numbers belonging to the arithmetical hierarchy and thus include recursive relations over graphs.
\end{enumerate}
\end{rems}

\begin{obs}
\label{obs:arithStrArithPred}
  Every predicate definable in an arithmetical structure is also arithmetical.
\end{obs}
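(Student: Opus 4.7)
The plan is to prove this by a straightforward formula substitution argument, since the notion of an arithmetical structure is closed under first order definability essentially by unfolding definitions. Suppose $R$ is defined in $(\mathcal{G},\tau)$ by a $\tau$-formula $\phi(\bar{x})$. By hypothesis, for each relation symbol $R_i \in \tau$ of arity $k_i$, there is an arithmetical formula $\psi_{R_i}(y_1,\ldots,y_{k_i})$ over $(\mathcal{G},plus_t,times_t)$ such that $R_i(\bar{a})$ holds iff $(\mathcal{G},plus_t,times_t),\bar{a} \models \psi_{R_i}$. The idea is to replace every atomic occurrence of a $\tau$-symbol in $\phi$ by the corresponding arithmetical defining formula, obtaining a formula $\phi^\ast(\bar{x})$ over $(\mathcal{G},plus_t,times_t)$ that defines $R$.

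First I would formalize the substitution. Given $\phi$, rename bound variables of each $\psi_{R_i}$ so that they are disjoint from the variables of $\phi$, and then define $\phi^\ast$ by induction on the structure of $\phi$: atomic subformulas of the form $R_i(t_1,\ldots,t_{k_i})$ are replaced with $\psi_{R_i}(t_1,\ldots,t_{k_i})$ (since our signature is purely relational, the $t_j$ are just variables), and the inductive case passes through the Boolean connectives and quantifiers unchanged. Next I would verify by a routine induction on $\phi$ that for every assignment $\bar{a}$,
\[ (\mathcal{G},\tau), \bar{a} \models \phi(\bar{x}) \iff (\mathcal{G},plus_t,times_t), \bar{a} \models \phi^\ast(\bar{x}), \]
using the equivalence $R_i(\bar{a}) \iff \psi_{R_i}(\bar{a})$ at the atomic step. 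This yields $R(\bar{a}) \iff (\mathcal{G},plus_t,times_t), \bar{a} \models \phi^\ast(\bar{x})$, so $R$ is arithmetical.

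The only step requiring any care is the renaming of bound variables in the $\psi_{R_i}$ to avoid capture when substituting into quantified subformulas of $\phi$; this is the usual alpha-renaming and is entirely mechanical. No genuine obstacle arises because the definition of an arithmetical structure is tailor-made for this closure property: every relation in $\tau$ already comes equipped with a first order definition over $(\mathcal{G},plus_t,times_t)$, and first order definability is preserved under substituting defined relations. Hence every predicate definable in an arithmetical structure over graphs is itself arithmetical.
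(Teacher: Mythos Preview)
Your proposal is correct and is precisely the standard formula-substitution argument one would expect. The paper itself states this as an observation with no proof at all, so your argument simply spells out the routine justification that the paper leaves implicit.
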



\begin{defi}
  An arithmetical structure ($\mathcal{G},\tau$) over graphs is said to have the \textit{maximal definability property} if every arithmetical predicate (w.r.t. $\leq_t$) is definable in the first order theory of  ($\mathcal{G},\tau$).
\end{defi}
The maximal definability property for an arithmetical structure ($\mathcal{G},\tau$) gives a characterization of the definable predicates of the structure i.e. a predicate is definable in ($\mathcal{G},\tau$) if and only if it is arithmetical (with respect to $\leq_t)$. The set of arithmetical predicates with respect to any other total order $\leq_{t'}$ on $\mathcal{G}$ remains the same as long as $\leq_{t'}$ is arithmetical with respect to $\leq_t$. The situation is analogous to the considerations of string encodings of graphs encountered in the previous section; the set of recursive predicates over graphs with respect to two different encodings of graphs as strings is the same as long as there exists a halting Turing Machine which can translate one string encoding to the other. 

Next we define the total order $\leq_t$ with respect to which the structures we consider in this manuscript are arithmetical. 

\begin{defi}
\label{defi:totalOrderOnGraphs}
Define $g_1 \leq_t g_2$ for $g_1,g_2 \in \mathcal{G}$ if and only if $U\!N(g_1) \leq U\!N(g_2)$. Note that $\leq_t$ is a total order on $\mathcal{G}$ by the uniqueness of the map $U\!N$.
\end{defi}

\section{Definability of Arithmetical Predicates in an Arbitrary Structure over Graphs.}
\label{sec:condResultGenStr}
 We state and prove sufficient conditions for an arithmetical structure ($\mathcal{G},\tau$) over graphs to have the maximal definability property. As a corollary we also get the fact that any capable structure can define every recursive predicate over graphs. These conditions are: definability of arithmetic, definability of cardinality of a graph and definability of two relations concerning vertex labelled representations of graphs called \emph{o-presentations}, in ($\mathcal{G},\tau$). These o-presentations were first introduced by Je\v{z}ek and Mckenzie \cite{jevzek2010definability}, and defined for graphs by Wires \cite{wires2016definability}.

\begin{figure}[ht]
 \begin{tikzpicture}
 \def \r {0.7cm}
 \def \s {0.5}
 \def \d {2cm}
 \def \et {1}
 \def \el {0.4}
 \tikzstyle{node1} = [draw, circle, fill=c01, scale=\s]
 \tikzstyle{line1}=[draw,line width=\et,color=black!60]
 \node[shift={(0,\d)},label=below:$S_4$]{
  \begin{tikzpicture}
   \node[node1] (v1) at (0,0) {};
   \node[node1] (v2) at ($(v1)+(0,1.5*\el)$) {};
   \node[node1] (v3) at ($(v1)+(210:1.5*\el)$) {};
   \node[node1] (v4) at ($(v1)+(330:1.5*\el)$) {};
   \draw[line1] (v2)--(v1)--(v3);
   \draw[line1] (v1)--(v4);
  \end{tikzpicture}
 };
   \node[shift={(0,-\d)},label=below: Labelled $S_4$]{
    \begin{tikzpicture}
   \node[node1,label=below:$v_1$] (v1) at (0,0) {};
   \node[node1,label=above:$v_2$] (v2) at ($(v1)+(0,1.5*\el)$) {};
   \node[node1,label=left:$v_3$] (v3) at ($(v1)+(210:1.5*\el)$) {};
   \node[node1,,label=right:$v_4$] (v4) at ($(v1)+(330:1.5*\el)$) {};
   \draw[line1] (v2)--(v1)--(v3);
   \draw[line1] (v1)--(v4);
  \end{tikzpicture}
    };
  \node[shift={(3*\d,0)},label=below:an o-presentation of $S_4$]{
   \begin{tikzpicture}
    \node[node1] (v1) at (0,0) {};
    \node[node1] (v2) at ($(v1)+(0,1.5*\el)$) {};
    \node[node1] (v3) at ($(v1)+(210:1.5*\el)$) {};
    \node[node1] (v4) at ($(v1)+(330:1.5*\el)$) {};
    \draw[line1] (v2)--(v1)--(v3);
    \draw[line1] (v1)--(v4);

   \begin{scope}[shift={(45:\d)}] 
    \draw[line1] (0,0) circle [radius=\r];
    \foreach \i in {1,...,8}
    {
    \node[node1] (\i) at ({360/8 * (\i  - 1)}:\r) {};
    }
   \end{scope}
   \draw[line1] (v2)--(6);
   \node at (1.5,2.5) {$C_8$};
   \begin{scope}[shift={(135:\d)}] 
    \draw[line1] (0,0) circle [radius=\r];
    \foreach \i in {1,...,7}
    {
    \node[node1] (\i) at ({360/7 * (\i  - 1)}:\r) {};
    }
   \end{scope}
   \draw[line1] (v1)--(7);
   \node at (-1.5,2.5) {$C_7$};
   \begin{scope}[shift={(225:\d)}] 
    \draw[line1] (0,0) circle [radius=\r];
    \foreach \i in {1,...,9}
    {
    \node[node1] (\i) at ({360/9 * (\i  - 1)}:\r) {};
    }
   \end{scope}
   \draw[line1] (v3)--(2);
   \node at (-1.5,-2.5) {$C_9$};
   \begin{scope}[shift={(315:\d)}] 
    \draw[line1] (0,0) circle [radius=\r];
    \foreach \i in {1,...,10}
    {
    \node[node1] (\i) at ({360/10 * (\i  - 1)}:\r) {};
    }
   \end{scope}
   \draw[line1] (v4)--(5);
   \node at (1.5,-2.5) {$C_{10}$};
   \end{tikzpicture}
 };

 \end{tikzpicture}
  \caption{Top left: the graph $S_4$. Bottom left: A vertex labelling of $S_4$. Right: The o-presentation corresponding to the given vertex labelling.}
 \label{fig:opresentation}
\end{figure}
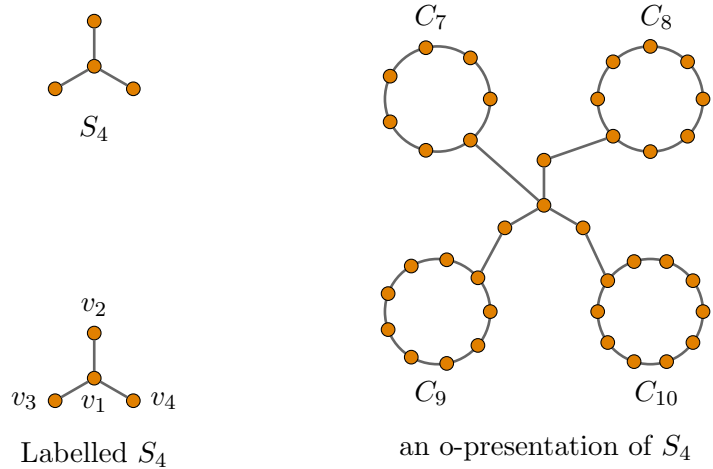

 \begin{defi}[o-presentation]
 \label{def:opres}
	An o-presentation of $g \in \mathcal{G}$ is another graph $g'$ constructed as follows:
	Fix a vertex labelling $v_1,v_1,..,v_n$ of vertices of $g$. Let $g''$ be the graph formed by the disjoint union of $g$ and the cycles $C_{n+i+2}$ for each $1 \leq i \leq n$. Add $n$ additional edges to $g''$ connecting each cycle $C_{n+i+2}$ to the corresponding vertex $v_i$. The resulting graph is $g'$ (see Figure \ref{fig:opresentation}).
 \end{defi}
 \begin{rem}
	Note that each vertex labelling of a graph $g$ leads to a (possibly) different o-presentation. We will refer to the set of o-presentations of $g$ by $\tilde{g}$ and write $g' \in \tilde{g}$ to indicate that $g'$ is an o-presentation of $g$. The example in Figure \ref{fig:opresentation} clarifies the bijective correspondence between o-presentations and labellings of a graph.
 \end{rem}
 \begin{defi}[Indicator Cycle]
	\label{def:indicatorCycle}
	Given a graph $g$, a cycle $C$ is called an \textit{indicator cycle} of $g$ if $|g|+3 \leq |C| \leq 2|g|+2$. 
 \end{defi}
 We will often just call $C$ an indicator cycle if the graph $g$ is understood from the context.

 \begin{defi}[Capable Structure over Graphs]
	\label{def:capableStructure}
	We call an arithmetical structure ($\mathcal{G},\tau$) a \textit{capable} structure over graphs if it satisfies the following three conditions:
    \begin{description}
      \item[($C1$)] Arithmetic can be defined in ($\mathcal{G},\tau$); in particular, the following predicates are definable:
	 \begin{enumerate}
		\item The family $\mathcal{N}$ of graphs which do not contain edges i.e. are made of isolated points.
		\item The predicate $\psi_{+}(x,y,z)$ which holds if and only if $x,y,z \in \mathcal{N}$ and $|x|+|y|=|z|$.
		\item The predicate $\psi_{\times}(x,y,z)$ which holds if and only if $x,y,z \in \mathcal{N}$ and $|x| \times |y| =|z|$.
	 \end{enumerate}
	\item[($C2$)] The following predicates related to o-presentations are definable in ($\mathcal{G},\tau$):
	 \begin{enumerate}
	 \item The predicate $\psi_{opres}(y,x)$ which holds if and only if $y$ is an o-presentation of $x$, also written $y \in \tilde{x}$.
	 \item The predicate $\psi_{edgeOP}(x,i,j)$ which holds if and only if $i,j \in \mathcal{N}$ and there exists a graph $y$ such that $x \in \tilde{y}$ and in the vertex labelling induced on $y$ by the o-presentation $x$, the vertices $v_{|i|}$ and $v_{|j|}$ in $y$ have an edge.
	 \end{enumerate}
   \item[($C3$)] The predicate $sameCard(x,y)$ which holds if and only if $x$ and $y$ have the same number of vertices, is definable in ($\mathcal{G},\tau$).
   \end{description}
 \end{defi}
 \begin{obs}
 	\label{lem:cardN}
 	We will write $|x|_g$ to denote the member of $\mathcal{N}$ which has the same cardinality as $x$. Note that the binary relation $card(x,y)$ mapping a graph $x$ to $|x|_g=y$ is definable :
 	\[ card(x,y):= \mathcal{N}(y) \wedge sameCard(x,y)\]
 	Hence we will freely use $|x|_g$ in formulae over a capable structure. Where it is clear from context that we are referring to $|x|_g$ and not $|x|$, we will abuse notation by dropping the subscript. 
 \end{obs}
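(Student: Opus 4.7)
The plan is essentially to verify that the displayed formula does what it claims, as both ingredients have already been supplied by the definition of a capable structure. The formula
\[ card(x,y) := \mathcal{N}(y) \wedge sameCard(x,y) \]
is a conjunction of two predicates that are each postulated to be definable in any capable structure: $\mathcal{N}(y)$ is definable by clause (1) of condition $(C1)$, and $sameCard(x,y)$ is definable by condition $(C3)$. Hence the conjunction is itself a first-order formula in the vocabulary $\tau$, so definability is immediate.

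It remains to check correctness, i.e.\ that for each graph $x$ there is a unique $y$ satisfying the formula and that this $y$ is precisely $|x|_g$. By Definition \ref{defi:uniqueGraph}, the family $\mathcal{N}$ contains exactly one graph of each cardinality, namely $N_k$ for each $k \in \mathbb{N}$. Therefore, if $n = |x|$, the only member of $\mathcal{N}$ having the same cardinality as $x$ is $N_n$, which is by definition $|x|_g$. Thus $card(x,y)$ holds iff $y = |x|_g$, as required.

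There is no serious obstacle here; the observation is effectively a bookkeeping step that packages conditions $(C1)$ and $(C3)$ into a convenient function-like predicate $card(\cdot,\cdot)$. The only thing worth flagging is that writing $|x|_g$ inside subsequent formulae is a harmless abbreviation: any occurrence of a term $|x|_g$ in a formula $\psi(\ldots,|x|_g,\ldots)$ is to be replaced by $\exists y\, (card(x,y) \wedge \psi(\ldots,y,\ldots))$, which is well-defined since $card$ is functional in its second argument. This justifies the notational convention adopted at the end of the statement.
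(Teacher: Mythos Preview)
Your proposal is correct and matches the paper's own treatment: the observation is essentially self-proving, with the displayed formula being justified by conditions $(C1)$ and $(C3)$ of a capable structure. Your additional remarks on uniqueness of $y$ and on unpacking the $|x|_g$ abbreviation via existential quantification are correct elaborations that the paper leaves implicit.
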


 \begin{rems}
 \begin{enumerate}
	\item The use of $\mathcal{N}$ as the representation of numbers in the above definition is only because it is a natural choice which aids understanding. The three graph orders $\leq_i,\leq_s,\leq_m$ considered in this paper are all equivalent over $\mathcal{N}$ and correspond to the order over the natural numbers.  Replacement of $\mathcal{N}$ by an arbitrary $\mathcal{D} \subseteq \mathcal{G}$ (with appropriate modifications to the requirements for defining addition and multiplication) does not affect the results in this section.
	\item Similarly, the definition of o-presentation used corresponds to conversion of edge information in a graph $g$ into subgraph information about its o-presentation $g'$, which is a natural way of extracting the internal structure in graph orders. The o-presentation appropriate for arbitrary structures over graphs depends on the structure in question; all that is required is the labelling of the vertices of a graph in a definable way and capture of edge information.
 \end{enumerate}	
 \end{rems}

 \begin{thm}
   \label{thm:capableImpliesMdp}
   Any structure ($\mathcal{G},\tau$) over graphs which is capable has the maximal definability property.
 \end{thm}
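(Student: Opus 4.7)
The plan is to internalize the map $U\!N : \mathcal{G} \to \mathbb{N}$ inside $(\mathcal{G},\tau)$: once a formula $\psi_{U\!N}(x,z)$ expressing ``$z \in \mathcal{N}$ and $|z|_g = U\!N(x)$'' is available, any arithmetical predicate can be pulled back by replacing each graph variable with its $U\!N$-image and simulating arithmetic on those images via $(C1)$. Since $\leq_t$ is by definition the pullback of $\leq$ on $\mathbb{N}$ along $U\!N$, the bijection $U\!N$ is an isomorphism from $(\mathcal{G},plus_t,times_t)$ to $(\mathbb{N},+,\times)$, so every first-order formula over the latter corresponds to exactly one over the former defining the same relation.

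First I would construct $\psi_{U\!N}$. Given $x$, pick an o-presentation $y \in \tilde x$ using $\psi_{opres}$ from $(C2)$. This $y$ induces a vertex labelling $v_1,\ldots,v_n$ of $x$ with $n = |x|_g \in \mathcal{N}$ available by $(C3)$, and $\psi_{edgeOP}(y,i,j)$ tells us, for every pair $i > j$ in $\{1,\ldots,n\}$, whether the labelled edge $v_iv_j$ exists in $x$. The number representation of $x$ corresponding to this labelling is
\[ m(y) \;=\; 2^{\binom{n}{2}} \;+\; \sum_{n \ge i > j \ge 1} b_{ij}(y) \cdot 2^{pos(i,j)}, \]
where $b_{ij}(y) = 1$ iff $\psi_{edgeOP}(y,i,j)$ holds and $pos(i,j)$ is the rank of $(i,j)$ under the reverse lexicographic ordering of pairs. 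Since $(C1)$ provides full first-order arithmetic on $\mathcal{N}$, exponentiation, the positional function $pos$, and the bounded pair-indexed sum are all first-order expressible, using standard G\"odel $\beta$-function coding of running partial sums as natural numbers. Taking the minimum over all o-presentations of $x$ yields $U\!N(x)$:
\[ \psi_{U\!N}(x,z) \;:=\; z \in \mathcal{N} \,\wedge\, \exists y\,(\psi_{opres}(y,x) \wedge |z|_g = m(y)) \,\wedge\, \forall y'\,(\psi_{opres}(y',x) \supset |z|_g \le m(y')). \]

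With $\psi_{U\!N}$ in hand, translation is mechanical. Fix an arithmetical $R \subseteq \mathcal{G}^k$ and choose a first-order formula $\phi_R(n_1,\ldots,n_k)$ over $(\mathbb{N},+,\times)$ with $R(g_1,\ldots,g_k) \iff \phi_R(U\!N(g_1),\ldots,U\!N(g_k))$, which exists by the isomorphism above. By $(C1)$, $\phi_R$ rewrites as a formula $\tilde\phi_R(z_1,\ldots,z_k)$ of $(\mathcal{G},\tau)$ whose variables range over $\mathcal{N}$, with $\psi_{+}$ replacing $+$ and $\psi_{\times}$ replacing $\times$, and satisfying $\tilde\phi_R(z_1,\ldots,z_k) \iff \phi_R(|z_1|_g,\ldots,|z_k|_g)$. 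Then
\[ \psi_R(x_1,\ldots,x_k) \;:=\; \exists z_1,\ldots,z_k \in \mathcal{N}\; \bigwedge_{i=1}^{k} \psi_{U\!N}(x_i,z_i) \,\wedge\, \tilde\phi_R(z_1,\ldots,z_k) \]
defines $R$ in $(\mathcal{G},\tau)$, establishing the maximal definability property.

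The hard part will be the arithmetical construction of $m(y)$: expressing a sum indexed by the $\binom{n}{2}$ pairs $(i,j)$ inside first-order logic requires a G\"odel-style coding of the sequence of partial sums, with each step extracting the bit $b_{ij}(y)$ via $\psi_{edgeOP}$ and adding $b_{ij}(y) \cdot 2^{pos(i,j)}$ to the running total. This is standard but bookkeeping-intensive, and is where the interaction between $(C1)$ and $(C2)$ is essential. A minor additional concern is the degenerate case $x = \emptyset_g$, for which $U\!N(\emptyset_g) = 0$ does not fit the general formula and must be handled by an explicit disjunct in $\psi_{U\!N}$.
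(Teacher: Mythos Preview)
Your proof is correct and follows essentially the same route as the paper: both define the encoding relation $y = U\!G(U\!N(x))$ (the paper calls it $\psi_{enc}$, you call it $\psi_{U\!N}$) by matching the edge information extracted from an o-presentation of $x$ via $\psi_{edgeOP}$ against the bits of the number $|y|_g$, using $(C1)$ for all the requisite arithmetic and $(C3)$ for the cardinality, and then pull back an arbitrary arithmetical formula along this encoding. The only cosmetic difference is that the paper verifies bit-by-bit agreement using pre-packaged arithmetical predicates $\phi_{U\!N}$ and $\phi_{edge}$ (and lets $\phi_{U\!N}$ absorb the minimality check), whereas you synthesize the number representation $m(y)$ explicitly as a bounded sum and minimize over o-presentations; both are standard first-order arithmetic manoeuvres of the same strength.
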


 To prove the theorem, we need to show that the predicates $plus_t$ and $times_t$ are definable using the relations $\tau$ for any arithmetical structure ($\mathcal{G},\tau$) over graphs which is capable.  Equivalently, we can prove that for every recursive predicate $R \subseteq \mathcal{G}^n$ over graphs, there exists a formula $\psi_{R}(\bar{x})$ (where $|\bar{x}|=n$)  over the vocabulary $\tau$ such that for any $n$-tuple of graphs $\bar{g}$, 
 \[ R(\bar{g}) \iff (\mathcal{G},\tau) \models \psi_{R}(\bar{g})\]
Since $plus_t$ and $times_t$ are recursive predicates, the result follows.
 \definecolor{c00}{RGB}{50,50,50} 
 \definecolor{c01}{RGB}{225,128,0} 
 \definecolor{c02}{RGB}{42, 108, 255} 
 \begin{figure}
  \begin{tikzpicture}

  \tikzstyle{circleG}=[draw, line width=1.5,color=c01]
  \tikzstyle{circleN}=[line width=1.5, color=c02]
  \tikzstyle{node1} = [draw, circle, fill, scale=0.3]
  \tikzstyle{fn} = [draw, -{Latex[length=10,width=8]}, thick, color=c00]

   \draw[circleG] (0,0) circle [radius=3];
   \node (G) at (120:3.3) {$\mathcal{G}$};

   \node at (2,1.5) {$R$};
   \begin{scope}[rotate=-45]
    \draw[circleG] (1,0.5) ellipse (1 and 2);
   \end{scope}

	\node at (-2,1.6) {$\mathcal{N}$};
    \node[rotate=-15] at (-1.3,.7) {\tiny $U\!G(U\!N(\mathcal{G}))$};

   \begin{scope}[rotate=45]    
    \draw[circleG] (-1,0) ellipse (1.5 and 2.5);
    
    \draw[circleG] (-1.3,0) ellipse (1 and 2);

    \draw[circleG] (-1.1,0) ellipse (.5 and 1);
    \path[postaction={decorate, 
    decoration={
        text along path,
        raise = 1ex,
        text={|\tiny| UG(UN(R))},
        text align=left
            }
        }] (-1,0) ++(175:0.5 and 1) arc(200:30:0.5 and 0.6);

   \end{scope}
   
   \node[node1,label=below:$g$] (g) at (2,0.5) {};
   \node[node1,label=left:$g'$] (UGUNg) at (-0.7,-0.7) {};
   
   \node[node1,label=below:$g_1$] (g1) at (0.5,2) {};
   \node[node1,label=left:$g_1'$] (UGUNg1) at (0,-2.2) {};

  \begin{scope}[xshift=8cm]
    \draw[circleN] (0,0) circle [radius=3];
    \node (G) at (60:3.3) {$\mathbb{N}$};
    
    \draw[circleN] (-0.6,0.6) circle [radius=2];
      \node[rotate=-45] at (1.1,2) {$U\!N(\mathcal{G})$};

    \draw[circleN] (-1,0.2) circle [radius=1];
    \path[postaction={decorate,
    		decoration={text along path, 
    					raise = 1ex,
    					text={UN(R)},
    					text align=left}}]
			(-1,0.2) ++(80:1)  arc [start angle=80, end angle=0, radius=1];
			
	\node[node1,label=below:\tiny $U\!N(g)$] (UNg) at (-1.2,0.5) {};
	\node[node1,label=right:\tiny $U\!N(g_1)$] (UNg1) at (1,1.1) {};
	
  \end{scope}
  
	\draw[bend left=20,fn]  (g) to node [auto] {$U\!N$} (UNg);
	\draw[bend left=10,fn]  (UNg) to node [auto] {$U\!G$} (UGUNg);
	
	\draw[bend left=20,fn]  (g1) to node [auto] {$U\!N$} (UNg1);
	\draw[bend left=40,fn]  (UNg1) to node [auto] {$U\!G$} (UGUNg1);
   
 \end{tikzpicture}
 \caption{Maps $U\!N$ and $U\!G$ and how they act on $R \subseteq \mathcal{G}$. For any graph $g \in R$ it is the case that $U\!G(U\!N(g))=g' \in U\!G(U\!N(R))$. For any graph $g_1 \notin R$ correspondingly $U\!G(U\!N(g_1))=g_1' \in (U\!G(U\!N(\mathcal{G})) \setminus U\!G(U\!N(R)))$.}
 \label{fig:UGUN}
 \end{figure}
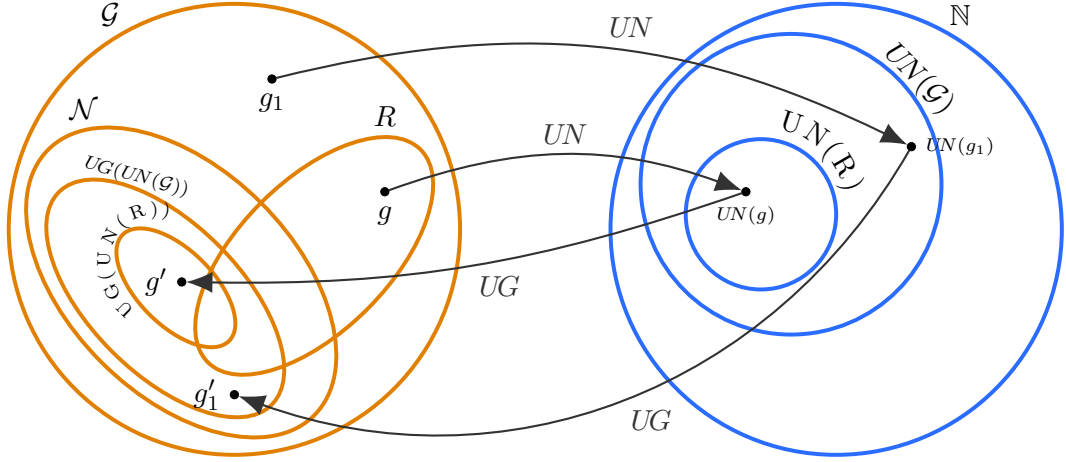

 We explain the proof idea behind the construction of $\psi_{R}$ assuming that $R$ is unary (see Figure \ref{fig:UGUN}). This consists of two major steps:

 \noindent \underline{Step 1 :}\\
  Corresponding to every graph $g$ there exists the graph $g'=U\!G(U\!N(g)) \in \mathcal{N}$. From the definitions of the maps $U\!N$ and $U\!G$ it is clear that the map $U\!G \circ U\!N$ is a bijection between the sets $\mathcal{G}$ and $U\!G(U\!N(\mathcal{G}))$. The fact that $U\!G(U\!N(\mathcal{G})) \subseteq \mathcal{N}$ allows us to use the definability of arithmetic to capture the image $U\!G(U\!N(R))$ of the relation $R$ using a formula $\psi^{trans}_{U\!G(U\!N(R))}$ over $\tau$. 

 \noindent \underline{Step 2 :}\\
  What remains to be done is to show that the predicate $\psi^{trans}_{enc}(x,y)$ which holds if and only if $y=U\!G(U\!N(x))$ is definable in ($\mathcal{G},\tau$). Once this is done, we immediately get the defining formula $\psi_{R}$:
	\[ \psi_{R}(x) := \exists y \; \psi_{enc}(x,y) \; \wedge \; \psi^{trans}_{U\!G(U\!N(R))}(y)\]
  The formula $\psi_{enc}$ uses condition $C2$ to verify that the edge information contained in $x$ and $y$ match.

  We now show how to define $\psi^{trans}_{U\!G(U\!N(R))}$.

  Since $R$ is a recursive predicate, by Definition \ref{defi:refPredGraphs} there exists a machine $M$ which accepts the $U\!N$ encodings of the set of graphs which belong to $R$. 
  \[ R(\bar{g}) \iff U\!N(\bar{g}) \in L(M)\]

 \noindent We recall the following result which is a weakening of the MRDP theorem \cite{matiyasevich1971diophantine}.
 \begin{thm}
 \label{thm:recInArith}
	Every recursive predicate $R$ on numbers is definable in arithmetic.
 \end{thm}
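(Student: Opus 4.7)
The plan is to prove the theorem by arithmetizing the behavior of a Turing machine. Since $R$ is recursive, by definition there exists a halting Turing machine $M$ whose language is exactly $U\!N(R)$; equivalently, on input $\bar n$ (coded as a string), $M$ halts and accepts iff $R(\bar n)$ holds. The goal is to produce an arithmetical formula $\phi_R(\bar x)$ which, on input $\bar n$, asserts the existence of a valid accepting computation of $M$.

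The first and central step is to introduce a \emph{sequence coding device} inside $(\mathbb{N},\phi_+,\phi_\times)$. The standard tool is G\"odel's $\beta$-function, $\beta(a,b,i) := a \bmod (1+(i+1)b)$, which is definable from $+$ and $\times$ using only bounded quantification and divisibility (itself definable from $+,\times$). The Chinese Remainder Theorem guarantees that for every finite sequence $s_0,s_1,\dots,s_k$ of natural numbers there exist $a,b$ such that $\beta(a,b,i) = s_i$ for all $i \leq k$. In this way, finite sequences of naturals (of arbitrary but finite length) are encodable by pairs of naturals, and the predicate ``the $i$-th entry of the sequence coded by $(a,b)$ equals $y$'' is arithmetical.

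Next I would fix an arithmetical encoding of a Turing machine \emph{configuration} (state, head position, and tape contents) as a single natural number; since $M$ has a fixed finite alphabet and a fixed finite set of states, the one-step transition relation $\mathrm{Next}_M(c,c')$ (``configuration $c'$ is the legal successor of $c$ under $M$'') can be written as an arithmetical formula: it is a finite disjunction over the transition table of $M$, and each disjunct only asserts simple arithmetic relationships between the numerical codes of $c$ and $c'$ (extracting and rewriting a single tape cell, updating the head position by $\pm 1$, and updating the state). Let $\mathrm{Init}_M(\bar n,c)$ say that $c$ codes the initial configuration of $M$ on input $\bar n$, and let $\mathrm{Acc}_M(c)$ say that $c$ codes an accepting halting configuration; both are clearly arithmetical.

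Finally, I would assemble the formula:
\[
\phi_R(\bar x) \;:=\; \exists a\,\exists b\,\exists k\;\Bigl[\mathrm{Init}_M(\bar x,\beta(a,b,0)) \;\wedge\; \mathrm{Acc}_M(\beta(a,b,k)) \;\wedge\; \forall i < k\,\mathrm{Next}_M(\beta(a,b,i),\beta(a,b,i+1))\Bigr].
\]
This says: there is a computation history, coded by $(a,b)$ and of length $k+1$, beginning in the initial configuration on input $\bar x$, ending in an accepting configuration, with every consecutive pair related by the transition relation of $M$. By soundness and completeness of the arithmetization, $\phi_R(\bar n)$ holds in $(\mathbb{N},\phi_+,\phi_\times)$ iff $M$ accepts $U\!N(\bar n)$, iff $R(\bar n)$ holds. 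The one genuine obstacle is purely notational: verifying that the $\beta$-function, divisibility, and configuration-updating operations can indeed be expressed using only $\phi_+$ and $\phi_\times$; this is the classical content of G\"odel's coding lemma and is standard, so I would invoke it rather than recapitulate the arithmetization of $\beta$ in detail.
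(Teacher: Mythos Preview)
Your proof is correct: the classical G\"odel--Kleene arithmetization of Turing machine computations via the $\beta$-function is exactly the standard way to establish this result, and your sketch contains all the essential ingredients (sequence coding via the Chinese Remainder Theorem, arithmetical definability of the one-step transition relation, and the existential assertion of an accepting computation history).

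The paper, however, does not prove this theorem at all. It simply recalls it as a known result, citing it as ``a weakening of the MRDP theorem'' (Matiyasevich's solution of Hilbert's tenth problem). So the comparison is between your self-contained G\"odel-style argument and the paper's appeal to a much deeper external result. Your route is more elementary and more appropriate pedagogically: it uses only the $\beta$-function lemma and basic coding, and it yields a $\Sigma_1$ definition of any recursively enumerable set (hence a $\Delta_1$ definition of any recursive set). The MRDP route is overkill here---it gives the stronger conclusion that every r.e.\ set is in fact \emph{Diophantine} (existentially definable with only bounded universal quantifiers eliminated)---but the paper only needs first-order definability, which your argument already delivers.

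One small slip: in your final sentence you write ``$M$ accepts $U\!N(\bar n)$'', but $U\!N$ is the paper's encoding of \emph{graphs} as numbers, whereas Theorem~\ref{thm:recInArith} is purely about predicates on numbers. You should simply say ``$M$ accepts (the string encoding of) $\bar n$''; the $U\!N$ map plays no role here.
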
 
 \noindent Thus there is an arithmetic formula $\phi_{U\!N(R)}(\bar{x})$ such that for any tuple $\bar{n}$ of numbers, 
 \[ (\mathbb{N}, \phi_{+}, \phi_{\times}) \models \phi_{U\!N(R)}(\bar{n}) \iff \bar{n} \in U\!N(R)\]

 This gives us an important corollary:
\begin{cor}
  The following relations over graphs are arithmetical:
  \begin{enumerate}
    \item Each of the graph orders $\leq_i,\leq_s,\leq_m$.
    \item The constant $P_3$.
    \item The relation $sameSize(x,y)$ which holds if and only if $x$ and $y$ have the same number of edges.
    \item The relation $disjointUnion(z,x,y)$ which holds if and only if $z$ is the disjoint union of $x$ and $y$.
  \end{enumerate}
  This implies that all structures over graphs dealt with in this paper are arithmetical.
\end{cor}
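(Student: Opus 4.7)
The plan is to reduce each item to the weakened MRDP theorem (Theorem~\ref{thm:recInArith}) by working through the bijection $U\!N : \mathcal{G} \to \mathbb{N}$. Since $\leq_t$ was defined in Definition~\ref{defi:totalOrderOnGraphs} precisely so that $U\!N$ is an order isomorphism between $(\mathcal{G}, \leq_t)$ and $(\mathbb{N},\leq)$, the relations $plus_t$ and $times_t$ are the pullbacks of $+$ and $\times$, i.e.\ $(\mathcal{G}, plus_t, times_t) \cong (\mathbb{N}, \phi_+, \phi_\times)$ via $U\!N$. Consequently, a relation $R \subseteq \mathcal{G}^k$ is arithmetical if and only if $U\!N(R) \subseteq \mathbb{N}^k$ is arithmetical in the usual sense, i.e.\ definable in $(\mathbb{N}, \phi_+, \phi_\times)$. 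Combined with Theorem~\ref{thm:recInArith}, it therefore suffices to exhibit, for each relation listed in the corollary, a halting Turing machine that decides its $U\!N$-image.

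For the graph orders I would argue as follows. Given the number representation $U\!N(g)$ of any graph $g$, one can recover the adjacency matrix of the canonical labelled representative in polynomial time: the encoding stores, by construction, the entries of the strictly lower triangular part in a fixed order induced by $L_{g'}$. Hence, given $U\!N(g_1)$ and $U\!N(g_2)$, a Turing machine can decode both graphs and then enumerate all injections $V(g_1) \hookrightarrow V(g_2)$ to decide $g_1 \leq_i g_2$, all injective maps together with all subsets of the edge set of the image to decide $g_1 \leq_s g_2$, and all partitions of $V(g_2)$ into connected branch sets to decide $g_1 \leq_m g_2$. All three procedures halt on every input. The constant $P_3$ corresponds to the single number $U\!N(P_3)$, whose singleton set is trivially recursive (and therefore definable by the formula $x = \overline{U\!N(P_3)}$ in arithmetic, using the definability of individual numbers).

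For $sameSize(x,y)$, a machine simply counts the number of $1$-bits following the leading $1$ in the binary expansion of $U\!N(x)$ and of $U\!N(y)$ and checks equality, which is plainly decidable. For $disjointUnion(z,x,y)$, the machine decodes all three graphs and then checks (i) $|z| = |x| + |y|$, (ii) $\|z\| = \|x\| + \|y\|$, and (iii) that there is a partition of $V(z)$ into two sets $A, B$ with $|A|=|x|$ and $|B|=|y|$ inducing graphs isomorphic to $x$ and $y$ respectively and with no edges between them; graph isomorphism is decidable (by exhaustive search over permutations), so this is again recursive.

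Having shown each of these relations is recursive under $U\!N$, Theorem~\ref{thm:recInArith} produces an arithmetic defining formula $\phi$ on the number side; translating $\phi$ through the isomorphism $U\!N$ yields a first-order formula over $(\mathcal{G}, plus_t, times_t)$ defining the original graph relation. The final clause of the corollary is then immediate: the three structures introduced in the paper, namely $(\mathcal{G}, \leq_i, P_3)$, $(\mathcal{G}, \leq_s)$, and $(\mathcal{G}, \leq_m, sameSize)$, use only relations from the list just treated, so each such structure is arithmetical. No step here is genuinely difficult; the only point requiring care is ensuring that the ``decoding'' step really is effective from $U\!N$, which follows directly from Definition~\ref{defi:uniqueNumber} since the binary representation of $U\!N(g)$ \emph{is}, by construction, a transcript of the adjacency matrix of the canonical labelling of $g$.
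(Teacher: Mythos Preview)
Your approach is exactly the one the paper intends: the corollary is placed immediately after Theorem~\ref{thm:recInArith} and the displayed line about $\phi_{U\!N(R)}$, with no further argument, so the implicit proof is precisely ``these relations are obviously recursive on $U\!N$-encodings, hence arithmetical''. Your fleshing-out of why each relation is decidable is fine.

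One small slip worth fixing: $U\!N$ is \emph{not} a bijection onto $\mathbb{N}$ (not every natural number is the canonical code of some graph), so $U\!N$ is only an order-embedding of $(\mathcal{G},\leq_t)$ into $(\mathbb{N},\leq)$, not an isomorphism, and $plus_t, times_t$ are not literally the $U\!N$-pullbacks of $+,\times$. The actual isomorphism is the rank map $\rho(g)=|\{g' : U\!N(g')<U\!N(g)\}|$. This does not break your argument, since $\rho$ and $U\!N$ are recursively inter-translatable (membership in $U\!N(\mathcal{G})$ is decidable, as the paper notes via $\phi_{U\!N}$ in Theorem~\ref{thm:reqGraphPredInArith}), so $U\!N(R)$ is definable in arithmetic iff $\rho(R)$ is; but the sentence claiming $U\!N$ itself is the isomorphism should be corrected.
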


 The condition $C1$ that arithmetic is definable gives the following corollary:
 \begin{cor}
 \label{cor:translateArithInduced}
	For every formula $\phi(\bar{x})$ in arithmetic there is a formula $\psi^{trans}_{U\!G}(\bar{x})$ in ($\mathcal{G},\tau$) such that 
	\[(\mathbb{N},\phi_{+},\phi_{\times}) \models \phi(\bar{n}) \iff (\mathcal{G},\tau) \models \psi^{trans}_{U\!G}(U\!G(\bar{n})) \]
 \end{cor}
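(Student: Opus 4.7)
The plan is to define the translation $\psi \mapsto \psi^{trans}_{U\!G}$ by structural induction on arithmetic formulas and then verify correctness by a parallel induction, using condition ($C1$) at every step. The point is that $U\!G$ is a bijection between $\mathbb{N}$ and the definable family $\mathcal{N} \subseteq \mathcal{G}$, and the graph predicates $\psi_{+}$ and $\psi_{\times}$ furnished by ($C1$) make this bijection a first-order isomorphism between $(\mathbb{N},\phi_{+},\phi_{\times})$ and $(\mathcal{N},\psi_{+},\psi_{\times})$ regarded as a sub-interpretation of $(\mathcal{G},\tau)$. So what is being constructed is a standard relative interpretation, with $\mathcal{N}$ playing the role of the number sort inside $\mathcal{G}$.

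Concretely, I would proceed as follows. For the atomic cases, send $\phi_{+}(x,y,z)$ to $\psi_{+}(x,y,z)$ and $\phi_{\times}(x,y,z)$ to $\psi_{\times}(x,y,z)$, and send an equality $x = y$ to $x = y$ (equality is part of the underlying logic). For Boolean connectives, commute the translation with $\wedge,\vee,\neg,\supset$ homomorphically. For the quantifier cases, relativize to $\mathcal{N}$: put
\[ (\forall x \; \phi)^{trans}_{U\!G} \;:=\; \forall x \; \mathcal{N}(x) \supset \phi^{trans}_{U\!G}, \qquad (\exists x \; \phi)^{trans}_{U\!G} \;:=\; \exists x \; \mathcal{N}(x) \wedge \phi^{trans}_{U\!G}. \]
The relativization is legal because $\mathcal{N}$ is definable by ($C1$). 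This yields a $\tau$-formula with exactly the same free variables as $\phi$.

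Correctness is then proved by induction on $\phi$. The base case is immediate from the definition of $\psi_{+}$ and $\psi_{\times}$: for $n_1,n_2,n_3 \in \mathbb{N}$, $\phi_{+}(n_1,n_2,n_3)$ holds in $(\mathbb{N},\phi_{+},\phi_{\times})$ iff $n_1+n_2 = n_3$ iff $|N_{n_1}|+|N_{n_2}|=|N_{n_3}|$ iff $\psi_{+}(U\!G(n_1),U\!G(n_2),U\!G(n_3))$ holds in $(\mathcal{G},\tau)$, and analogously for multiplication. Booleans are routine. For quantifiers, use that $U\!G : \mathbb{N} \to \mathcal{N}$ is a bijection: an existential witness $n \in \mathbb{N}$ for $\phi$ corresponds to the witness $U\!G(n) \in \mathcal{N}$ for the translated formula, and conversely any witness satisfying $\mathcal{N}(x)$ is of the form $U\!G(n)$ for a unique $n \in \mathbb{N}$, by the definition of $\mathcal{N}$ and $U\!G$. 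The universal case is dual.

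There is no real obstacle here; the argument is a textbook relative interpretation, and every ingredient has already been provided by condition ($C1$). The only mild subtlety is remembering to relativize quantifiers to $\mathcal{N}$ rather than letting them range over all of $\mathcal{G}$, but this is standard. The corollary is therefore best viewed as a bookkeeping consequence of ($C1$) that we record for repeated use in the sequel, where translating arithmetic facts into the graph vocabulary will be done tacitly.
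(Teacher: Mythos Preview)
Your proposal is correct and matches the paper's intent: the paper states this corollary without proof, treating it as an immediate consequence of condition ($C1$), and what you have written is exactly the standard relative-interpretation argument that unpacks this implication. There is nothing to add; your inductive translation with quantifiers relativized to $\mathcal{N}$ is precisely the mechanism the paper relies on tacitly.
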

 The above corollary, which can be used to exploit the power of arithmetic now available to us in vocabularies over graphs, will be critically used later. 

 Applying this translation to $\phi_{U\!N(R)}$ gives us the graph formula $\psi^{trans}_{U\!G(U\!N(R))}$. Such translated formulae define relations which we call number theoretic (see Remark \ref{rem:numberTheoreticPred}).

 We now show how to define $\psi_{enc}$. To do so, we need number theoretic predicates to identify the image set $U\!N(\mathcal{G}) \subseteq \mathbb{N}$, identify the cardinality of $g$ and draw out the edge information $E(g)$ from $U\!N(g)$. 
 This has already been accomplished in previous work:
 \begin{thm}[\cite{ramanujam2016definability}]
 \label{thm:reqGraphPredInArith}
	The following predicates are definable in arithmetic:
	\begin{enumerate}
		\item  $\phi_{U\!N}(x)$ holds if and only if $x$ is a number which represents a graph as given in Definition \ref{defi:uniqueNumber}.
		\item $\phi_{edge}(x,i,j)$ holds if and only if there is a graph $g$ such that $x=U\!N(g)$ and $v_iv_j$ is an edge in the order induced by the map $U\!N$.
		\item $\phi_{length}(n,x)$ holds if and only if the length of the binary representation of $x$ is $n$. (We will just write $length(x)$ to denote $n$.)
	\end{enumerate}
 \end{thm}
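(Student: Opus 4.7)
The plan is to reduce everything to three well-worn arithmetical primitives: exponentiation, bit extraction, and coding of finite sequences. Exponentiation $2^{n}$ and the binomial coefficient $\binom{n}{2} = n(n-1)/2$ are easily definable in $(\mathbb{N}, \phi_+, \phi_\times)$, and from $2^n$ one defines $\phi_{length}(n,x)$ by the clause $(x = 0 \wedge n = 0) \vee (\exists y,z \; \phi_\times(2,z,y) \wedge 2^{n-1} \leq x \wedge x < y \wedge y = 2^n)$. Using division-with-remainder one then defines $bit(x,k,b)$ asserting that the $k$-th least significant bit of $x$ is $b \in \{0,1\}$: there exist $q, r$ with $x = 2^{k+1}q + 2^k b + r$, $b \leq 1$, and $r < 2^k$. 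These gadgets make the third item of the theorem immediate.

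For $\phi_{edge}(x,i,j)$, I first use $\phi_{length}$ to recover the cardinality $n$ of the underlying graph from the equation $length(x) = \binom{n}{2} + 1$. Working out the positional formula from Definition \ref{defi:uniqueNumber}, the pair $(v_a, v_b)$ with $b < a$ occupies position $\binom{n}{2} - \binom{a}{2} + (a - b)$ counted from the top of the $\binom{n}{2}$-bit payload (just below the leading $1$), which translates to the bit at index $\binom{a}{2} - (a - b)$ from the least significant end. Thus, assuming $j < i$ (and symmetrizing otherwise), $\phi_{edge}(x,i,j)$ is simply $\phi_{U\!N}(x) \wedge bit(x, \binom{i}{2} - (i-j), 1)$.

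The substantive work is in $\phi_{U\!N}(x)$, which must certify that $x$ is the \emph{minimum} over all vertex relabellings of the graph it encodes. First, $x$ has a valid \emph{shape}: either $x = 0$, or $x = 1$, or there exists $n \geq 2$ with $length(x) = \binom{n}{2} + 1$ and $bit(x, \binom{n}{2}, 1)$. Given that, I need to universally quantify over permutations $\pi$ of $[n]$; this is standard in arithmetic via Gödel's beta function, which lets one encode a finite sequence by a single number and impose the bijectivity of $\pi$ by $\forall i \leq n \; \exists! k \leq n \; \pi(i) = k$ etc. Given $\pi$, I define the \emph{permuted encoding} $x_\pi$ as the unique number whose bit at the position corresponding to $(v_a, v_b)$ with $b < a$ equals the bit of $x$ at the position corresponding to the (reordered) pair $(v_{\pi^{-1}(a)}, v_{\pi^{-1}(b)})$; this $x_\pi$ is expressible as a bounded arithmetical sum $\sum_{b < a \leq n} 2^{p(a,b)} \cdot bit(x, \text{position of } \pi^{-1}\text{-image})$, plus the leading $2^{\binom{n}{2}}$. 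Then $\phi_{U\!N}(x)$ asserts the shape condition together with $\forall \pi \; (perm(\pi, n) \supset x \leq x_\pi)$.

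The main obstacle is simply the bookkeeping in the permutation clause: writing out the formula that says ``$x_\pi$ is the number whose bit pattern is obtained from that of $x$ by relabelling via $\pi$'' requires one bounded universal over pairs $(a,b)$, an inner use of the beta function to look up $\pi^{-1}(a), \pi^{-1}(b)$, and an arithmetisation of the position formula. Since each ingredient (beta function, exponentiation, binomial coefficients, and bounded summation) is definable in $(\mathbb{N}, \phi_+, \phi_\times)$, the whole construction stays inside arithmetic and the theorem follows.
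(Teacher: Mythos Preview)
Your construction is correct: exponentiation, bit extraction, the positional formula $\binom{a}{2}-(a-b)$ for the pair $(v_a,v_b)$, and the permutation-minimality clause for $\phi_{U\!N}$ all go through in $(\mathbb{N},\phi_+,\phi_\times)$ as you describe. Two cosmetic points: your displayed formula for $\phi_{length}$ is garbled (the $\exists y,z\;\phi_\times(2,z,y)$ clause does not say $y=2^n$), and for $x_\pi$ it is cleaner to existentially quantify it and pin down each of its bits by a bounded universal over pairs, rather than to invoke bounded summation---but both routes are standard.

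The paper, however, does not prove this statement at all: it is imported from the cited reference and used as a black box. More to the point, a one-line argument is available from the paper's own toolkit and is arguably what is intended: each of the three predicates is manifestly decidable by a Turing machine (for $\phi_{U\!N}$ one reads off $n$ from the bit-length and checks minimality by brute force over the $n!$ permutations), and Theorem~\ref{thm:recInArith}, stated immediately above, already gives that every recursive predicate on $\mathbb{N}$ is arithmetically definable. Your explicit unwinding is not wasted effort---it shows the predicates are in fact bounded formulas in an exponential, hence far below the full arithmetical hierarchy---but the paper neither needs nor exploits that refinement.
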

 We can now define the binary relation $y=U\!G(U\!N(x))$ by the formula $\psi_{enc}(x,y)$:  
 \begin{flalign*}
	 	\psi_{enc}(x,y) := & \mathcal{N}(y) \; \wedge \; \psi^{trans}_{graphOrder}(y,|x|_g) \; \wedge \; \psi^{trans}_{U\!N}(y) \; \wedge \\
	     &\exists z \;[ z \in \tilde{x} \; \wedge \; \forall i,j \in \mathcal{N}\; ( N_1 \leq_i i <_i j \leq_i |x|_g) \supset \\ 
	 	& \quad \psi^{trans}_{edge}(y,i,j) \iff 
	 	 \psi_{edgeOP}(z,i,j) ]
 \end{flalign*} 
 The formula is composed of the following subformulae:
 \begin{enumerate}
   	\item $\psi^{trans}_{U\!N}$ and $\psi^{trans}_{edge}$ are translations of arithmetical formulae from Theorem \ref{thm:reqGraphPredInArith}.
   	\item $z \in \tilde{x}$ and $\psi_{edgeOP}(z,i,j)$ are from condition $C2$.
   	\item The graph $|x|_g \in \mathcal{N}$ which has the same cardinality as $x$ uses condition $C3$.
   	\item $\psi^{trans}_{graphOrder}$ is the translation of the following arithmetic formula:
 \[\phi_{graphOrder}(n,m) := length (n) = 1+m(m-1)/2\].
 \item The use of the induced subgraph order $\leq_i$ in $ N_1 \leq_i i <_i j \leq_i |x|_g$ follows from the fact that the order $\leq$ among numbers is definable using addition and ($\mathcal{N},\leq_i$) is isomorphic to ($\mathbb{N},\leq$).
   \end{enumerate}

	Given a graph $x$, $U\!N(x)=n_x$ is a number which has bit length $1+|x|(|x|-1)/2$. Applying $U\!G$ to $n_x$ should give us $y$. Thus the number $n=U\!G^{-1}(y)$ should have bit length $1+|x|(|x|-1)/2$. This condition is taken care of by the formula $\psi^{trans}_{graphOrder}(y,|x|)$. Effectively, what this amounts to is identifying the cardinality of $x$. The formula $\psi^{trans}_{U\!N}(y)$ verifies that $n$ belongs to the set $U\!N(\mathcal{G})$. Finally, we need to check that the edge information is correct to conclude that $n_x=n$. This is done using the fact that there is a witnessing o-presentation $z$ of $x$ such that the edge information in $z$ matches with the edge information in $n$ (which is accessed via $y$ in the formula). 
	This concludes the definability of $\psi_{enc}(x,y)$.
 
 We can now write the required formula $\psi_R$ in ($\mathcal{G},\tau$). In the general case of an $n-ary$ predicate $R$ it takes the form:
 \[ \psi_{R}(\bar{x}) := \exists \bar{y} \; \bigwedge_{i=1}^n \psi_{enc}(x_i,y_i) \; \wedge \; \psi^{trans}_{U\!G(U\!N(R))}(\bar{y})\]
 This concludes the proof of Theorem \ref{thm:capableImpliesMdp}.

\begin{cor}
\label{cor:capableMeansRecPred}
  For any ($\mathcal{G},\tau$) which is a capable structure over graphs, every recursive predicate $R \subseteq \mathcal{G}^n$ over graphs is definable in ($\mathcal{G},\tau$).
\end{cor}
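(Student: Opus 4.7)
The plan is to derive the corollary directly from Theorem \ref{thm:capableImpliesMdp}, which already gives us the maximal definability property for capable structures. Thus it suffices to show that every recursive predicate $R \subseteq \mathcal{G}^n$ is arithmetical with respect to the total order $\leq_t$ of Definition \ref{defi:totalOrderOnGraphs}; once this is done, the definability of $R$ in $(\mathcal{G},\tau)$ follows immediately from the m.d.p.

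To establish that a recursive $R$ is arithmetical, I would exploit the fact that the map $U\!N: \mathcal{G} \to \mathbb{N}$ is by construction an isomorphism between $(\mathcal{G}, \leq_t)$ and $(\mathbb{N}, \leq)$, and consequently carries the ternary relations $plus_t, times_t$ to standard addition and multiplication on $\mathbb{N}$. Hence a predicate over graphs is arithmetical (w.r.t.\ $\leq_t$) if and only if its image under $U\!N$ is arithmetical (in the classical sense) over $\mathbb{N}$. By Definition \ref{defi:refPredGraphs}, $R$ being recursive means that $U\!N(R) \subseteq \mathbb{N}^n$ is a recursive predicate on numbers. Theorem \ref{thm:recInArith} then yields an arithmetic formula $\phi_{U\!N(R)}(\bar{x})$ defining $U\!N(R)$ in $(\mathbb{N}, \phi_+, \phi_\times)$. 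Translating $\phi_{U\!N(R)}$ symbol for symbol using the isomorphism (replacing $\phi_+, \phi_\times$ by $plus_t, times_t$) produces a formula over $(\mathcal{G}, plus_t, times_t)$ defining $R$ itself, witnessing that $R$ is an arithmetical graph relation.

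Having shown $R$ is arithmetical, Theorem \ref{thm:capableImpliesMdp} applied to the capable structure $(\mathcal{G},\tau)$ provides a $\tau$-formula $\psi_R(\bar{x})$ defining $R$ in $(\mathcal{G},\tau)$. I do not anticipate any real obstacle here, as all the substantive work — the translation of bits of arithmetic into graph vocabulary, the definability of $\psi_{enc}$ via o-presentations, and the extraction of edge information — has already been carried out in the proof of Theorem \ref{thm:capableImpliesMdp}; the corollary is essentially a restatement specialized from arithmetical to the subclass of recursive predicates, using the classical fact that recursive predicates on $\mathbb{N}$ are first-order arithmetical.
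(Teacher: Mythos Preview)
Your overall strategy is sound and matches how the paper treats the corollary: Remark~\ref{rem:numberTheoreticPred}(2) already records that recursive graph relations are arithmetical, so once Theorem~\ref{thm:capableImpliesMdp} gives the maximal definability property, the corollary is immediate. In fact the paper's proof of Theorem~\ref{thm:capableImpliesMdp} goes the other way round: it constructs $\psi_R$ directly for an arbitrary recursive $R$ and then observes that $plus_t$ and $times_t$ are recursive to obtain the m.d.p. So the corollary is literally contained in that proof, and your deduction from the theorem statement is the natural converse route.

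There is one technical slip you should repair. The map $U\!N:\mathcal{G}\to\mathbb{N}$ is \emph{not} an isomorphism onto $(\mathbb{N},\leq)$: it is injective and order-preserving (by the definition of $\leq_t$), but its image $U\!N(\mathcal{G})$ is a proper recursive subset of $\mathbb{N}$ (this is exactly what the predicate $\phi_{U\!N}$ of Theorem~\ref{thm:reqGraphPredInArith} detects). The isomorphism witnessing $(\mathcal{G},\leq_t)\cong(\mathbb{N},\leq)$ is rather the map $\rho$ sending $g$ to the rank of $U\!N(g)$ inside $U\!N(\mathcal{G})$, and $plus_t,times_t$ are the pullbacks of $+,\times$ along $\rho$, not along $U\!N$. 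Your argument is easily salvaged: since $U\!N(\mathcal{G})$ is decidable, the bijection $U\!N\circ\rho^{-1}:\mathbb{N}\to U\!N(\mathcal{G})$ (enumeration of $U\!N(\mathcal{G})$ in increasing order) is recursive, hence arithmetical; composing with it shows that $U\!N(R)$ recursive implies $\rho(R)$ recursive, hence arithmetical, hence $R$ is an arithmetical graph relation. With that correction, your proof goes through.
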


\section{Defining Arithmetical Predicates in Graph Orders}
\label{sec:graphOrders}
 In this section, we will show that the structures ($\mathcal{G},\leq_i,P_3)$ , ($\mathcal{G},\leq_s$) and ($\mathcal{G},\leq_m,sameSize$) are capable. The proof uses basic predicates already known to be definable in the induced subgraph order \cite{wires2016definability} and the subgraph order \cite{ramanujam2016definability}. In the case of the subgraph order, we give defining formulae in Appendix \ref{app:basicPredSubgraph} and for proof of correctness refer the reader to \cite{ramanujam2016definability}. In the case of basic predicates definable in the induced subgraph, we refer the reader to \cite{wires2016definability}. The result for ($\mathcal{G},\leq_m,sameSize$) follows as a corollary of the result for the subgraph order. 

 Recall from Definition \ref{def:capableStructure} that a capable structure satisfies three conditions $C1$,$C2$ and $C3$. Condition $C1$ is known to hold for the induced subgraph and subgraph orders \cite{wires2016definability,ramanujam2016definability} and $C3$ follows from the observation below:

\begin{obs}
	\label{obs:C3fromC1}
	The predicate $sameCard(x,y)$ is definable in each of the structures ($\mathcal{G},\leq_i,P_3)$ , ($\mathcal{G},\leq_s$) and ($\mathcal{G},\leq_m$) assuming the definability of the family $\mathcal{N}$:
	\[ sameCard(x,y) := \forall z \; \mathcal{N}(z)  \supset z \leq x \iff z \leq y\]
	where $\leq$ is the appropriate order.
\end{obs}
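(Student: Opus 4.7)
The plan is to verify that the right-hand side of the given definition captures exactly the property ``$x$ and $y$ have the same number of vertices'' for each of the three orders $\leq_i, \leq_s, \leq_m$. The key lemma I would establish first is the following monotonicity fact: for any graph $g$ and any $k \in \mathbb{N}$, we have $N_k \leq g$ (under any of $\leq_i, \leq_s, \leq_m$) if and only if $k \leq |g|$.

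For the forward direction of this lemma, I would observe that each of the operations O1, O2, O3 never increases the number of vertices (O1 leaves it unchanged, O2 and O3 each decrease it by one), so if $N_k$ is obtained from $g$ by a finite sequence of applications of these operations, then $k = |N_k| \leq |g|$. For the converse, if $k \leq |g|$, I can delete $|g|-k$ vertices from $g$ via operation O2 to obtain a graph on $k$ vertices, then delete all remaining edges via operation O1 to obtain $N_k$. Since this procedure only uses O1 and O2, it witnesses $N_k \leq_s g$, and hence also $N_k \leq_m g$. In the induced subgraph case one must avoid O1; but in any graph on $k$ vertices one can proceed differently, simply deleting all non-isolated vertices and adding back isolated vertices by deleting the remaining ones—actually even simpler, pick any $k$ vertices of $g$ that form an independent set if one exists; when $g$ is arbitrary we instead note that $N_k$ can be obtained as an induced subgraph of $g$ only when $g$ contains an independent set of size $k$. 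This is already handled by the relevant earlier work, where $\mathcal{N}$ is defined and the claim $N_k \leq_i g \iff k \leq |g|$ (which holds because every graph on $n$ vertices trivially contains $N_n$ as an induced subgraph by taking all its vertices and forgetting all edges, but wait, induced subgraph does not forget edges)—so for the induced subgraph case I rely on the definability of $\mathcal{N}$ and the known cardinality predicate of \cite{wires2016definability} via $P_3$, combined with the fact that in $(\mathcal{G},\leq_i)$ the characterization ``$N_k \leq_i g$ iff $g$ has an independent set of size $k$'' must be replaced by a version using $\mathcal{N}$'s total order property, giving $N_k \leq_i g \iff |N_k| \leq |g|$ at least for $g \in \mathcal{N}$; the general case is the subject of $C3$ in the induced setting and is handled separately in the paper.

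Given the lemma, the set $\{z \in \mathcal{N} : z \leq x\}$ equals $\{N_0, N_1, \ldots, N_{|x|}\}$, so it is determined uniquely by $|x|$. Hence the universally quantified biconditional $\forall z \in \mathcal{N}\; (z \leq x \iff z \leq y)$ holds precisely when $|x| = |y|$, which is what $sameCard(x,y)$ asserts. The only hypothesis consumed is the definability of $\mathcal{N}$, which is part of condition $C1$ and is explicitly assumed in the statement.

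I expect no serious obstacle in the subgraph and minor cases, where the monotonicity lemma is immediate from the definition of the operations. The only care needed is in the induced subgraph case, where edge deletion is not available; there the proof must appeal either to a direct reduction already present in the literature or, as the paper apparently does later, to a separate argument leveraging the constant $P_3$. Modulo that caveat, the observation is a one-line verification once the behaviour of $\mathcal{N}$ under the order is understood.
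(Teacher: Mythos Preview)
Your verification for $\leq_s$ and $\leq_m$ is correct and is exactly the justification the paper omits: in both orders $N_k \leq g$ holds precisely when $k \leq |g|$, so the set $\{z \in \mathcal{N} : z \leq x\}$ is determined by $|x|$ and the displayed formula defines $sameCard$.

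You are also right to be uneasy about the induced-subgraph case, and in fact the difficulty is not a mere ``caveat'' but a genuine error in the observation as stated. Under $\leq_i$ one has $N_k \leq_i g$ if and only if $g$ contains an independent set of size $k$, so the formula defines ``$x$ and $y$ have the same independence number,'' not the same cardinality. Concretely, $K_3$ and $K_4$ both have independence number $1$, hence satisfy the formula, yet $|K_3| \neq |K_4|$. So the displayed formula does not establish $C3$ for $(\mathcal{G},\leq_i,P_3)$. Your instinct to fall back on Wires' separate definition of $|x|=|y|$ (recorded in the paper as Lemma~\ref{lem:basicPredInduced1}(3)) is the correct repair, and the paper's overall argument is unaffected because that lemma supplies $C3$ for the induced order independently of this observation. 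Your write-up would be cleaner if you stated this directly rather than hedging: the formula works for $\leq_s$ and $\leq_m$, fails for $\leq_i$, and the $\leq_i$ case is covered by the cited prior result.
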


 We will concentrate on showing that $C2$ holds.\\
 The proof strategy for showing that $C2$ holds involves constructing an o-presentation $y \in \tilde{x}$ from the disjoint union of $x$ with another graph $g$. However, this graph $g$ is a certain disjoint union of paths in the induced subgraph order, while it is a disjoint union of large cycles in the subgraph order. The difference arises from the differences in the covering relation alluded to in Remark \ref{rem:coverSubVsInd}. The differences in the predicates which can be easily defined in one order as compared to the other also lead to differences in the kind of intermediate predicates constructed. We deal with the induced subgraph order first and later get to the subgraph order.  

\subsection{Induced Subgraph Order} 
\label{sub:induced_subgraph_order}
The fact that the induced subgraph order satisfies the three conditions ($C1$),($C2$) and ($C3$) follows as a corollary of Wires' result (see Theorem 12.3 in Wires \cite{wires2016definability}) on the definability of every graph property in the first order theory of the associated small category of graphs. Many of the explicit formulae and intermediate relations required are also already present in his work. In this section, we flesh out the details, giving explicit formulae in case they are not available in literature or when they are useful in understanding the strategy employed. The strategy carries over to the subgraph order dealt with in the next section with some small changes. We do not give explicit formulae for some graph theoretical relations of independent interest defined by Wires and in this case refer the reader to Wires \cite{wires2016definability} for details. 

Beginning with this section, many number theoretic predicates will be used and we extend our abuse of notation from dropping the subscript in $|x|_g$ to the use of subformulae such as $n+1 < m < 2n+3$, $m=n^2+4$ and $C_n \leq_i x$; in all such cases, there is a subformula of the form $m,n \in \mathcal{N}$ stating that the variables are members of $\mathcal{N}$, and these variables are treated as numbers with the understanding that corresponding formulae can be written without the abuse of notation by condition ($C1$).   

\underline{($\mathcal{G},\leq_i,P_3$) satisfies $C1$}:\\
 This is implicit in the work of Wires \cite{wires2016definability}. Only the formula for the definability of multiplication remains to be shown.
 \begin{lem}[Wires \cite{wires2016definability}]
 \label{lem:basicPredInduced1}
  The following predicates are definable in the induced subgraph order:
  \begin{enumerate}
	\item The families $\mathcal{N},\mathcal{T}, \mathcal{P}$ of isolated points, trees and paths  respectively.
	\item $\psi_{+}(x,y,z)$ holds if and only if $x,y,z \in \mathcal{N}$ and $|x|+|y|=|z|$.
	\item $|x|=|y|$ holds if and only if $x$ and $y$ have the same cardinality (i.e. same number of vertices, also known as order of the graph).
  \end{enumerate}
 \end{lem}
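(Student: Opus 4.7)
The plan is to reduce the lemma to the definability results in Wires \cite{wires2016definability}, supplying the clean pieces explicitly and citing the rest. The starting point is the observation that $(\mathcal{G}, \leq_i)$ has a unique nontrivial automorphism, namely complementation $g \mapsto g^c$, so the role of the constant $P_3$ in our structure is precisely to break this symmetry. First I would define the extremal elements $\emptyset_g$ (the minimum of $\leq_i$) and $N_1$ (the unique atom). The two covers of $N_1$ are $K_2$ and its complement $N_2$; I would distinguish them using $P_3$, which is itself not self-complementary, so the relationship each of the two covers bears to $P_3$ inside $\leq_i$ is asymmetric.

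With $K_2$ available as a definable constant, the family $\mathcal{N}$ is given by $\mathcal{N}(x) := K_2 \nleq_i x$, i.e.\ the graphs with no edge. Following Wires, connectedness $conn(x)$ is definable directly in $\leq_i$ via the covering relation, and from this the families of paths $\mathcal{P}$, cycles $\mathcal{C}$ and trees $\mathcal{T}$ fall out by inductive structural characterisations: a cycle $C_n$ for $n \geq 3$ is a connected graph every proper connected induced subgraph of which is a path; a path is a tree every proper connected induced subgraph of which is again a path; and a tree is a connected graph containing no induced cycle. These are all carried out explicitly in \cite{wires2016definability} and I would cite them rather than reprove them.

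The relation $|x|=|y|$ is then immediate, since $N_k \leq_i g$ iff $|g| \geq k$, giving
\[ |x|=|y| := \forall z \; (\mathcal{N}(z) \supset (z \leq_i x \iff z \leq_i y)). \]
For $\psi_{+}$, note that $\mathcal{N}$ forms a chain under $\leq_i$ with successor $N_k \lessdot_i N_{k+1}$, so by Observation \ref{obs:totalOrderDefble} every $N_k$ is a definable constant and the chain $(\mathcal{N}, \leq_i)$ is order-isomorphic to $(\mathbb{N}, \leq)$. Wires establishes the full arithmetic of this chain inside $(\mathcal{G}, \leq_i, P_3)$, so the addition graph $\psi_{+}(x,y,z) := x,y,z \in \mathcal{N} \wedge |x|+|y|=|z|$ is first order definable.

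The main obstacle, and the only step that genuinely requires the constant $P_3$, is isolating $K_2$ from its complement $N_2$; after that the rest of the hierarchy $(\mathcal{N}, \mathcal{T}, \mathcal{P})$ and the cardinality relation follow by routine inductive characterisations along the covering relation, and the definability of $\psi_{+}$ reduces to transferring arithmetic from $(\mathbb{N}, \leq)$ to the chain $(\mathcal{N}, \leq_i)$ as in \cite{wires2016definability}.
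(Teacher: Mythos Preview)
The paper does not give a proof of this lemma; it is stated with attribution to Wires \cite{wires2016definability} and used as a black box, the paper only supplementing it afterwards with an explicit construction of $\psi_\times$. In that sense your plan of reducing everything to \cite{wires2016definability} is exactly what the paper does.

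There is, however, a genuine error in the one piece you do spell out. You assert that $N_k \leq_i g$ iff $|g|\geq k$ and derive
\[
|x|=|y| \;:=\; \forall z\;\bigl(\mathcal{N}(z)\supset(z\leq_i x \iff z\leq_i y)\bigr).
\]
But in the \emph{induced} subgraph order $N_k \leq_i g$ holds precisely when $g$ contains $k$ pairwise non-adjacent vertices, i.e.\ when the independence number $\alpha(g)\geq k$; for example $N_2\nleq_i K_3$ even though $|K_3|=3$. Your formula therefore defines $\alpha(x)=\alpha(y)$, not $|x|=|y|$. The one-line argument is correct for $\leq_s$ and $\leq_m$, where edge deletion lets one strip any $g$ down to $N_{|g|}$, but it fails for $\leq_i$. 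The paper itself flags this in the Discussion: ``the predicate $|x|=|y|$ \dots\ is easy in the subgraph order but seems to take a lot of work in the induced subgraph order.'' (The same slip appears in the paper's Observation~\ref{obs:C3fromC1}, but for the lemma at hand the paper is simply deferring to Wires, whose argument for $|x|=|y|$ in $(\mathcal{G},\leq_i,P_3)$ is substantially more involved.) So item~(3) is not ``immediate'' and must be cited, not derived by your formula.
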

 It remains to define the predicate $\psi_{\times}(x,y,z)$.
 Instead of the formula $\psi_{\times}(x,y,z)$, we can equivalently define the square predicate $\psi_{sq}(x,y)$ which holds if and only if $x,y \in \mathcal{N}$ and $|x|=|y|^2$. To do so, we construct a tree $t_n$ given a numerical parameter $n$. The tree $t_n$ has $V(t_n)=\{v_0\} \cup \{ v_1,v_2,...,v_n\} \cup \bigcup_{i=1}^n \{v_{i,1},v_{i,2},...,v_{i,n-1} \}$, $E(t_n)= \{ v_0v_1,v_0v_2,...,v_0v_n\} \cup \bigcup_{i=1}^n \{ v_iv_{i,1},v_iv_{i,2},...,v_iv_{i,n-1}\}$. Informally, $t_n$ has a root $v_0$ which has degree $n$ and each of its neighbours also has degree $n$. It is easy to see that $|t_n|=1+n+n(n-1)=n^2+1$.
 
 We need to define the family of stars, which are trees containing a vertex which is incident on all edges. See Figure \ref{fig:NKPCS5} for the star $S_5$ on five vertices. 
 \begin{obs}
 \label{obs:starInIndSub}
 	The predicate $\mathcal{S}(x)$ which holds if and only if $x$ is a star is definable in the induced subgraph order.
 	\[ \mathcal{S}(x) :=\mathcal{T}(x) \; \wedge \; P_4 \nleq_i x\]
  As usual, it is easy to see that the conditions specified are necessary. In a tree, a path is present as a subgraph iff it is present as an induced subgraph. Any graph containing $P_4$ as a subgraph cannot be a star.\\
  Next we define a tree which has maximum degree $n$ and maximum path subgraph $P_5$, and then take the largest (by order) such  tree. For any fixed $n$, there exists a unique such tree $t_n$. 
 \end{obs}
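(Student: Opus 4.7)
The plan is to reduce the claim to a straightforward graph-theoretic biconditional: a graph is a star precisely when it is a tree containing no induced $P_4$. All the ingredients needed to assemble the defining formula are already in hand — the family $\mathcal{T}$ of trees is definable by Lemma~\ref{lem:basicPredInduced1}, $\leq_i$ is a primitive, and $P_4$ is a definable constant because the family $\mathcal{P}$ of paths is definable and totally ordered under $\leq_i$, so Observation~\ref{obs:totalOrderDefble} applies. Hence the whole issue is whether the biconditional $\mathcal{S}(x) \iff (\mathcal{T}(x) \wedge P_4 \nleq_i x)$ actually holds.

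The necessity direction is immediate. A star $S_n$ is certainly a tree, and its only vertex of degree $>1$ is the centre $c$. Any four-vertex induced subgraph either avoids $c$, yielding an independent set, or contains $c$ together with three leaves, yielding $K_{1,3}$; neither is $P_4$. So stars satisfy the formula.

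For sufficiency I would exploit the standard fact that in a tree every path is automatically induced: a chord inside a path in $x$ would close a cycle, contradicting acyclicity. Consequently, for a tree $x$, $P_4 \leq_s x \iff P_4 \leq_i x$, so assuming $P_4 \nleq_i x$ is the same as saying $x$ has diameter at most $2$. Then I would argue that a tree of diameter at most $2$ is a star: if $|x|\leq 2$ the statement is trivial, and for $|x|\geq 3$ take a vertex $v$ of maximum degree $d\geq 2$; if some vertex $u$ lies at distance $\geq 2$ from $v$, pick a neighbour $w$ of $v$ on the $v$–$u$ path and a second neighbour $v'\neq w$ of $v$, and then $v'\,v\,w\,u$ is a path on four vertices in $x$, hence an induced $P_4$, contradicting the hypothesis. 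Therefore $v$ is adjacent to every other vertex and $x$ is a star.

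The main (modest) obstacle is simply keeping the small-cardinality cases $N_1$, $K_2$, $P_3$ consistent with whatever convention the paper adopts for the family $\mathcal{S}$; under the natural convention that these count as degenerate stars, nothing else has to be checked. Once the biconditional is verified, substituting the definable ingredients $\mathcal{T}$ and $P_4$ produces the displayed formula and completes the definability claim.
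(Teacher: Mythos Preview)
Your proposal is correct and follows essentially the same approach as the paper: both hinge on the observation that in a tree a path occurs as a subgraph iff it occurs as an induced subgraph, so excluding an induced $P_4$ forces diameter at most $2$ and hence a star. You have simply fleshed out the details (the explicit diameter argument and the small-cardinality cases) that the paper leaves to the reader in its terse justification.
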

 
 \begin{flalign*}
	stree'(x,n) := &\mathcal{T}(x) \; \wedge \; maxDeg(x,n) \; \wedge \; maxPath5(x)\\
	&where\\
	&maxDeg(x,n) := \mathcal{N}(n) \; \wedge \; S_{n} \leq_i x \; \wedge \; \forall m\; (\mathcal{N}(m) \; \wedge \; n < m \supset S_{m} \nleq_i x)\\
	&maxPath5(x) :=  P_5 \leq_i x \; \wedge \; \forall m \; (\mathcal{N}(m) \; \wedge \; 5 < m \supset P_{m} \nleq_i x)\\
	stree(x,n) := &stree'(x,n) \; \wedge \; \forall y\; stree(y,n) \supset |x| \leq |y|
 \end{flalign*}

 Given $n$, the unique $x$ satisfying $stree(n,x)$ is the tree $t_n$ described above. 
 Let $x$ be a tree satisfying $stree'(n,x)$. Fix some arbitrary vertex $v_0$ as root of this tree. The degree condition implies that $v_0$ has at most $n$ neighbouring vertices and the $maxPath5$ condition ensures that the maximum depth of the tree is two. The tree of maximum cardinality satisfying $stree'$ is thus exactly the tree $t_n$.
  We can define the relation $\psi_{sq}(x,y)$ now:
 \[ \psi_{sq}(x,y) := \mathcal{N}(x) \; \wedge \; \mathcal{N}(y) \; \wedge \; \; \exists z \; stree(z,y) \; \wedge \; x \lessdot_i |z| \]
 This ends the proof that ($\mathcal{G},\leq_i,P_3$) satisfies condition $C1$.

 \underline{($\mathcal{G},\leq_i,P_3$) satisfies $C2$}:\\
 We assume that the predicate $\tilde{\mathcal{G}}(x)$ if and only if $x$ is an o-presentation is definable. The proof of definability of $\tilde{\mathcal{G}}$ is postponed to the end of this subsection. Note that one can directly define $\psi_{opres}$ and then define $\tilde{\mathcal{G}}$ in terms of $\psi_{opres}$; the route we take simplifies the presentation.

 We recall some basic predicates known to be definable in the induced subgraph order.  
 \begin{lem}[Wires \cite{wires2016definability}]
 \label{lem:basicPredInduced2}
	The following predicates are definable in ($\mathcal{G},\leq_i,P_3$).
	\begin{enumerate}
		\item The family $\mathcal{C}$ of cycles. 
		\item $maximalComp(y,x)$ holds if and only if $x$ is a maximal connected component of $y$.
		\item $cover(y,x,n)$ holds if and only if there are exactly $n-1$ graphs between $x$ and $y$ in the order and $x \leq_i y$. Also denoted $x \lessdot_i^n y$.
		\item $\mathcal{C}_{\rightarrow 1}(x)$ if and only if $x$ is the connected graph formed by adding one extra vertex and one extra edge to a cycle. 
		\item $conn(x)$ holds if and only if $x$ is a connected graph.
	\end{enumerate}
 \end{lem}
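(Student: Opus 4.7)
The plan is to treat item 5, the connectedness predicate $conn$, as the technical heart of the lemma; once it is in hand, items 1--4 follow by short explicit formulae that combine $conn$ with arithmetic, the family $\mathcal{P}$ of paths, and cardinality already supplied by Lemma \ref{lem:basicPredInduced1}. The hard part will therefore be item 5.

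For item 5, I would follow Wires' strategy. A graph is disconnected precisely when its set of induced subgraphs admits a ``product''-style decomposition corresponding to splitting the vertex set into two nonempty parts with no edges between them. Making this decomposition first-order definable without circularly invoking $conn$ is the technical content of Wires' construction; the fixed constant $P_3$ is essential here, because $(\mathcal{G},\leq_i)$ has the complementation automorphism $g \mapsto g^c$, and without $P_3$ one cannot even distinguish, for instance, the connected $K_2$ from the disconnected $N_2$. For the explicit formula I refer to \cite{wires2016definability}; this is the main obstacle of the lemma.

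Once $conn$ is available, item 2 is immediate from the definition of a component as a maximal connected induced subgraph:
\[ maximalComp(y,x) := conn(x) \; \wedge \; x \leq_i y \; \wedge \; \forall z \; (conn(z) \; \wedge \; z \leq_i y) \supset \neg (x <_i z). \]
For item 3, the key observation is that in $\leq_i$ every cover $a \lessdot_i b$ is the deletion of exactly one vertex, so a chain of $n$ covers from $x$ to $y$ exists iff $x \leq_i y$ and $|y| = |x|+n$ (any induced embedding of $x$ into $y$ can be filled in one vertex at a time); hence
\[ cover(y,x,n) := \mathcal{N}(n) \; \wedge \; x \leq_i y \; \wedge \; |y| = |x|+|n|, \]
the cardinality subformula being available via $\psi_+$ and Observation \ref{lem:cardN}, and $\lessdot_i$ itself being definable via Observation \ref{obs:defCoverInPoset}. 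For item 1, I use the characterization that a cycle is a connected graph on $\geq 3$ vertices each of whose single-vertex deletions yields the path on $|x|-1$ vertices:
\[ \mathcal{C}(x) := conn(x) \; \wedge \; |x| \geq 3 \; \wedge \; \forall y \; (y \lessdot_i x) \supset (\mathcal{P}(y) \; \wedge \; |y|+1=|x|). \]
This rules out paths (interior-vertex deletion disconnects), stars $S_n$ with $n \geq 4$ (deleting the centre yields the independent set $N_{n-1}$), and cliques $K_n$ with $n \geq 4$ (deletions give $K_{n-1}$, not a path), while $K_3 = C_3$ and every $C_n$ with $n \geq 4$ satisfy it.

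Finally, for item 4, I characterize ``cycle-with-pendant'' graphs as connected graphs having exactly one more vertex than their \emph{unique} contained cycle:
\[ \mathcal{C}_{\rightarrow 1}(x) := conn(x) \; \wedge \; \exists y \; (\mathcal{C}(y) \; \wedge \; y \leq_i x \; \wedge \; |x|=|y|+1 \; \wedge \; \forall z \; (\mathcal{C}(z) \; \wedge \; z \leq_i x) \supset z = y). \]
Correctness: if the extra vertex has degree $0$ in $x$ then $x$ is disconnected (ruled out by $conn(x)$); if it has degree $\geq 2$ a chord-induced cycle of smaller length appears, contradicting uniqueness; only the pendant-edge case survives, as required.
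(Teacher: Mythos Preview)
The paper does not prove this lemma; it is stated as a result of Wires and the reader is explicitly referred to \cite{wires2016definability} for the details. Your sketch is correct and in fact supplies more than the paper does: your explicit formulae for $maximalComp$, $\mathcal{C}$, and $\mathcal{C}_{\rightarrow 1}$ are all valid, and your decision to treat $conn$ as the one nontrivial item and defer it to Wires is exactly the paper's stance.

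One small remark on item 3: the lemma's phrasing ``exactly $n-1$ graphs between $x$ and $y$'' is loose --- taken literally it does not coincide with $|y|=|x|+n$ (for instance between $N_1$ and $P_3$ there are two graphs, $N_2$ and $K_2$, yet $|P_3|-|N_1|=2$) --- but your formula captures the intended meaning of $x \lessdot_i^n y$ as an $n$-fold cover, which is precisely how the predicate is used downstream (in $psum$ and in $constructFromPaths$). So the discrepancy is in the paper's wording, not in your argument.
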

 Notice that from the definability of $\mathcal{C}_{\rightarrow 1}(x)$ we also have definability of the graph $C_{j \rightarrow 1}$ which stands for the member of $\mathcal{C}_{\rightarrow 1}$ of order $j+1$ because the family is totally ordered by number of vertices and for similar reasons as Observation \ref{obs:totalOrderDefble}. Additionally, given a parameter $n$, we can obtain $C_{n \rightarrow 1}$.
 Using the basic predicates, we define intermediate predicates helpful in defining $\psi_{opres}$. 
 \begin{lem}[Wires]
 \label{lem:disjointLargeCycles}
	The following predicates are definable in the induced subgraph order:
	\begin{enumerate}
		\item $csum(x,n)$ holds if and only if $n \in \mathcal{N}$ and $x=\bigcup_{i=1}^n C_{n+i+2}$.
		\item $psum(x,n)$ holds if and only if $x=\bigcup_{i=1}^n P_{n+i+1}$.
	\end{enumerate}
 \end{lem}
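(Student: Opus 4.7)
The plan is to construct explicit formulas for $csum(x,n)$ and $psum(x,n)$ using the basic predicates from Lemmas~\ref{lem:basicPredInduced1} and~\ref{lem:basicPredInduced2} together with the definability of arithmetic in $(\mathcal{G},\leq_i,P_3)$ established just above. In particular, I will rely on the definability of $\mathcal{N}$, $\mathcal{C}$, $\mathcal{P}$, $maximalComp$, and cardinality-equality, and on the fact that each constant $C_m$ and $P_m$ is individually definable because $\mathcal{C}$ and $\mathcal{P}$ form totally ordered definable families under $\leq_i$ (cf.\ Observation~\ref{obs:totalOrderDefble}).

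For $csum(x,n)$ I would assert the conjunction of: (a) $n \in \mathcal{N}$; (b) every maximal connected component of $x$ is a cycle; (c) for every $m \in \mathcal{N}$ satisfying the arithmetical bound $n+3 \leq m \leq 2n+2$, the cycle $C_m$ is an induced subgraph of $x$; and (d) $|x| = n^2 + n(n+1)/2 + 2n$. The arithmetical quantification in (c) and the polynomial expression in (d) are available by condition~($C1$). Correctness follows from the rigidity of cycles under induced subgraph embeddings: $C_m \leq_i C_k$ forces $m = k$, so (b) and (c) together force each $C_m$ with $n+3 \le m \le 2n+2$ to literally appear as a component of $x$. Since the sum $\sum_{m=n+3}^{2n+2} m$ already equals $n^2 + n(n+1)/2 + 2n$, condition (d) precludes any further components or repetitions, and hence $x = \bigcup_{i=1}^n C_{n+i+2}$.

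For $psum(x,n)$ the template is similar but there is one subtle point: since $P_j \leq_i P_k$ whenever $j \leq k$, asserting only that each $P_{n+i+1}$ is an induced subgraph is too weak to separate distinct components, as all required paths could sit inside the longest one. I would therefore strengthen the middle clause and assert directly, using $maximalComp$, that for every $i$ with $1 \leq i \leq n$ the path $P_{n+i+1}$ is a maximal connected component of $x$. Combined with the clause that every maximal component of $x$ is a path (using $\mathcal{P}$ and $maximalComp$) and with the cardinality condition $|x| = n^2 + n(n+1)/2 + n = \sum_{i=1}^n (n+i+1)$, the presence of the $n$ distinct maximal components $P_{n+2}, P_{n+3}, \ldots, P_{2n+1}$ already saturates the vertex budget, so the component multiset of $x$ must be exactly this set.

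The main obstacle is the \emph{uniqueness of decomposition}: for cycles it is immediate from the fact that cycles are rigid under induced subgraph embeddings, but for paths one must explicitly shift from ``induced subgraph'' to ``maximal component'' in the statement of clause (c), and then verify by a cardinality pigeonhole that no extra components survive. Everything else is routine: each ingredient is either a basic predicate from the cited lemmas, a constant definable as an element of a totally ordered definable family, or an arithmetical translation available by condition~($C1$).
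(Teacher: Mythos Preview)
Your treatment of $csum$ is correct and matches the paper's approach essentially verbatim (in fact your cardinality $n^2+n(n+1)/2+2n$ is the right value; the paper reuses its o\nobreakdash-presentation $cardCond$ with $3n$, which is a slip).

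Your $psum$ argument, however, has a genuine gap. The predicate $maximalComp(y,x)$ from Lemma~\ref{lem:basicPredInduced2} does not say ``$x$ is a connected component of $y$''; it says ``$x$ is a $\leq_i$\nobreakdash-maximal connected induced subgraph of $y$''. For the target graph $x=\bigcup_{i=1}^n P_{n+i+1}$ these are very different: since $P_j \leq_i P_k$ whenever $j\le k$, the \emph{only} graph satisfying $maximalComp(x,\cdot)$ is the longest path $P_{2n+1}$. Hence your clause ``$P_{n+i+1}$ is a maximal connected component of $x$ for every $i$'' is false on the intended model as soon as $n\ge 2$, and your formula defines the empty set rather than the singleton $\{\bigcup_i P_{n+i+1}\}$. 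The pigeonhole step you describe never gets off the ground because the hypothesis it needs is not actually expressed by the predicate you invoke.

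The paper sidesteps this by not trying to name the path components directly. Instead it bootstraps from $csum$: take $y=\bigcup_{i=1}^n C_{n+i+2}$, require $x\lessdot_i^{\,n} y$ (i.e.\ $x$ is obtained from $y$ by deleting $n$ vertices), and forbid any cycle as an induced subgraph of $x$. Since $y$ has $n$ cycle components and killing each cycle requires removing at least one of its vertices, the $n$ deletions are forced to hit each cycle exactly once, turning $C_{n+i+2}$ into $P_{n+i+1}$. This avoids the comparability-of-paths issue entirely. If you want to repair your direct approach, you would need the genuine ``component'' predicate $comp$ (defined via disjoint union), not $maximalComp$.
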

 \begin{proof}
	We can construct the object
	$\bigcup_{i=1}^n C_{n+i+2}$ as follows:
	\begin{flalign*}
	 	csum(x,n) := & \forall z \; maximalComp(x,z) \; \supset \; \mathcal{C}(z) \; \wedge \\
	 	&  cardCond(x,n) \;  \wedge \; allCycles(x,n)\\
	 	&\text{where}\\
	 	cardCond(x,n):= & \mathcal{N}(n) \; \wedge \; |x| = n^2 + n(n+1)/2 + 3n \\
	 	allCycles(x,n) := &\forall m \in \mathcal{N} \; (n+3 \leq m \leq 2n+2) \; \supset \; C_{m} \leq_i x
	 \end{flalign*}
	 
	 If $x=\bigcup_{i=1}^n C_{n+i+2}$ then it clearly satisfies the formula $csum(n,x)$. Suppose $x$ satisfies $csum(x,n)$. Therefore it contains every cycle $C_m$ for $n+3 \leq m \leq 2n+2$ as induced subgraph. Suppose a copy of $C_{m}$ and $C_{m'}$ ($m \neq m')$) present in $x$ share a common vertex $v$. Consider the subgraph $g$ formed by the vertices of these copies: it is connected. Hence the component of $x$ containing $g$ is not a cycle, which contradicts $\forall z \; maximalComp(x,z) \supset \mathcal{C}(z)$. Hence any copy of $C_m$ is disjoint from a copy of $C_{m'}$ present inside $x$. But the cardinality condition imposed by $cardCond$ implies that there is a unique copy of each $C_m$. In fact, there are no other vertices in $x$ apart from vertices belonging to the cycles $C_m$. There cannot be any edges between different cycles, again because of the $maximalComp$ condition. Hence $x$ is exactly the graph $\bigcup_{i=1}^n C_{n+i+2}$.

	 Next we show how to construct the graph $\bigcup_{i=1}^n P_{n+i+1}$, which is formed by deleting one vertex from each cycle in the graph $\bigcup_{i=1}^n C_{n+i+2}$. 
	\[ psum(x,n) := \exists y \; y=\bigcup_{i=1}^n C_{n+i+2} \; \wedge \; x \lessdot^n_i y \; \wedge \; \forall z \; \mathcal{C}(z) \supset z \nleq_i x\]
	i.e. we get the appropriate graph by enforcing the condition that no cycle is an induced subgraph.
 \end{proof}

 We can now define $\psi_{opres}(y,x)$ which holds if and only if $y$ is an o-presentation of $x$:
 \begin{flalign*}
		\psi_{opres}(y,x) := & \tilde{\mathcal{G}}(y) \; \wedge \; constructFromPaths(y,x)\\
		&where\\
		cardCond(x,y) := & |y|_g= |x|_g^2 + |x|_g(|x|_g+1)/2 + 3|x|_g \\
		constructFromPaths(y,x) := & \exists z \; z = x \cup \bigcup_{i=1}^{|x|} P_{|x|+1+i} \; \wedge \; z \lessdot_i^{|x|} y 
 \end{flalign*}
 The formula $\psi_{opres}$ states that $y$ is an o-presentation of appropriate order and deletion of $|x|$ vertices from $y$ gives the disjoint union of $x$ with paths of cardinality $|x|+2$ to $2|x|+1$. Let $y$ be an o-presentation obtained by addition of $|x|$ new vertices $V_{new}=\{ v_1,v_2,...,v_{|x|}\}$ to $z$. Let $V(z)=V(x) \cup V_P$ where $V_P$ are the vertices of $\bigcup_{i=1}^{|x|} P_{|x|+1+i}$. The cardinality of $y$ is $|x|^2+|x|(|x|+1) + 3|x|$ and hence it must contain each cycle in the set of indicator cycles $\{ C_{|x|+i+2} | 1 \leq i  \leq |x| \}$ as induced subgraph. None of these cycles is present in $z$ as induced subgraph since neither $x$ nor any of the graphs $P_{|x|+1+i}$ contain them. Hence every such cycle has to be created in $y$ through addition of new edges between the newly added vertices $V_{new}$ and $V(z)$. Since the creation of each indicator cycle requires at least one new vertex and the number of new vertices is equal to the number of indicator cycles, the only way to get $\bigcup_{i=1}^{|x|} C_{|x|+i+2}$ as an induced subgraph of $y$ is to add two edges connecting the ends of the path $P_{|x|+i+1}$ to $v_i$, for every $i$. This gives us the graph $y' = x \cup \bigcup_{i=1}^{|x|} C_{|x|+2+i}$ where $V(y')=V(z) \cup V_{new}$. We need to add some more edges between $V_{new}$ and the vertices $V(z)$ in $y'$ to get an o-presentation. But by the properties of o-presentations, there is exactly one more edge between each vertex in $V_{new}$ and the vertices $V(x)$ of $z$. Thus the graph $y$ must be an o-presentation of $x$.  

 Moving on to the last predicate $\psi_{edgeOP}(x,i,j)$ , we first need the following intermediate predicate:

\begin{figure}
 \begin{tikzpicture}
 \def \r {1cm}
 \def \s {0.5}
 \def \d {3.5}
 \def \et {1}
 \tikzstyle{node1} = [draw, circle, fill=c01, scale=\s]
 \tikzstyle{line1}=[draw,line width=\et,color=black!60]
  \newcommand{\polygonC}[3]{
    \node[label=below:#2]{
    \begin{tikzpicture}
     \def \n {#1}
     \foreach \i in {1,...,\n}
     {
     \node[node1] ($\i$) at ({360/\n * (\i  - 1)}:\r) {};
     }
    \end{tikzpicture}
    };
  }
  \draw[line1] (0,0) circle [radius=\r];
  \polygonC{10}{$C_{n+i+2}$};
  \begin{scope}[xshift=4cm]
    \draw[line1] (0,0) circle [radius=\r];
    \polygonC{10}{$C_{n+j+2}$};
  \end{scope}
  \node[node1,label=above:$v_i$] (vi) at (1.5,1) {};
  \node[node1,label=above:$v_j$] (vj) at (2.5,1) {};
  \draw[line1] (1,0)--(vi)--(vj)--(3,0);
  \end{tikzpicture}
  \caption{The $CP_4C$ graph corresponding to an edge between vertices $v_i$ and $v_j$.}
 \label{fig:CP4C}
\end{figure}

 \begin{lem}[Wires \cite{wires2016definability}]
 \label{lem:CP4C}
	$CP_4C(x,i,j)$ if and only if $i,j \in \mathcal{N},3< i < j$ and $x$ is formed by adding to the graph $C_i \cup C_j$ two additional vertices $v_1,v_2$ and the edge $v_1v_2$, one edge between $C_i$ and $v_1$ and one edge between $C_j$ and $v_2$. We denote $x$ by $CP_4C(i,j)$.
 \end{lem}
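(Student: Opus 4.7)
My plan is to prove the biconditional by structural analysis of the possible ways $x$ can be assembled from the constraints of the formula $CP_4C(x,i,j)$. The formula requires $x$ to be connected, to cover $C_{i \rightarrow 1} \cup C_{j \rightarrow 1}$ by a single edge addition ($\lessdot_{se}$), and to satisfy a universal closure saying that any $addVert$-extension of $C_{i \rightarrow 1}$ or $C_{j \rightarrow 1}$ sitting inside $x$ must itself be a $\mathcal{C}_{\rightarrow 2}$-graph. The strategy is to first verify the formula on the intended graph, and then show that the universal closure is rigid enough to force exactly that shape.

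For the forward direction, suppose $x$ is the graph described in the statement: cycles $C_i$ and $C_j$ linked by the edge $v_1 v_2$ together with the pendant edges joining $v_1$ to a vertex $u_i$ of $C_i$ and $v_2$ to a vertex $u_j$ of $C_j$. Connectedness is immediate. Deleting the edge $v_1 v_2$ gives $C_{i \rightarrow 1} \cup C_{j \rightarrow 1}$, and one checks no other single edge deletion does, so $C_{i \rightarrow 1} \cup C_{j \rightarrow 1} \lessdot_{se} x$. For the universal clause, let $z \leq_s x$ be an $addVert$-extension of $C_{i \rightarrow 1}$: the only vertex of $x$ outside $C_{i \rightarrow 1}$ with an edge into $C_{i \rightarrow 1}$ is $v_2$, attached through $v_1$, so $z$ is a cycle $C_i$ with a two-edge path attached at $u_i$, which is exactly a member of $\mathcal{C}_{\rightarrow 2}$. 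The argument for $C_{j \rightarrow 1}$ is symmetric.

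The main obstacle is the reverse direction's case analysis. Assume $x$ satisfies the formula. The cover condition $C_{i \rightarrow 1} \cup C_{j \rightarrow 1} \lessdot_{se} x$ combined with connectedness forces $x$ to be $C_{i \rightarrow 1} \cup C_{j \rightarrow 1}$ plus a single new edge $e$, with one endpoint in each component. I would then case-analyze on the placement of the endpoints of $e$, writing $w_i$ for the pendant vertex of $C_{i \rightarrow 1}$ and $u_i$ for its unique neighbour on $C_i$, and similarly $w_j, u_j$. Either $e = w_i w_j$, giving exactly the claimed graph, or at least one endpoint of $e$ lies on a cycle vertex or on a non-pendant attachment. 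In every such deviant case, I expect to exhibit a witness $z \leq_s x$ that is an $addVert$-extension of $C_{i \rightarrow 1}$ or $C_{j \rightarrow 1}$ but fails to lie in $\mathcal{C}_{\rightarrow 2}$: either $z$ creates a second vertex of degree three near the cycle (violating $double3star \nleq_s z$), or $z$ loses its Hamiltonian path because the attached two-vertex appendage branches (violating $P_{|z|} \leq_s z$).

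The delicate work thus lies in verifying, for each geometric placement of the extra edge, that a witness $z$ violating one of the two $\mathcal{C}_{\rightarrow 2}$ clauses can actually be built as a subgraph of $x$. The defining clauses of $\mathcal{C}_{\rightarrow 2}$ appear to have been designed precisely to exclude these unwanted configurations, so the verification should reduce to a finite and transparent enumeration. Once the pendant-to-pendant placement is isolated as the only configuration surviving the universal closure, $x$ is determined up to isomorphism, which justifies the notation $x = CP_4C(i,j)$.
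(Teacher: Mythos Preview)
Your proposal analyses the \emph{subgraph-order} formula for $CP_4C$---the one that appears later in the paper as Lemma~\ref{lem:definingCP4C}---rather than the induced-subgraph formula that is the subject of Lemma~\ref{lem:CP4C}. The formula whose correctness is being established here is
\[
conn(x)\ \wedge\ 3<i<j\ \wedge\ |x|=i+j+2\ \wedge\ C_{i\rightarrow 1}\cup C_j \lessdot_i x\ \wedge\ C_{j\rightarrow 1}\cup C_i \lessdot_i x,
\]
where $\lessdot_i$ is the cover relation of the \emph{induced} subgraph order. The predicates you invoke---$\lessdot_{se}$, $addVert$, and the particular clauses of $\mathcal{C}_{\rightarrow 2}$ involving $double3star$ and $P_{|z|}\leq_s z$---belong to the subgraph-order development in Section~\ref{sub:subgraph_order} and are not available at this point of the induced-subgraph argument.

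Consequently the paper's argument is structurally different from yours. From $C_{i\rightarrow 1}\cup C_j \lessdot_i x$ it infers that $x$ is obtained from $C_{i\rightarrow 1}\cup C_j$ by adding a single new vertex $v$ together with some set of edges incident on $v$ (this is what an induced-subgraph cover means); connectivity then forces at least one edge from $v$ to $C_j$, and the requirement that deleting a single vertex yield $C_{j\rightarrow 1}\cup C_i$ as an induced subgraph forces exactly one such edge and forces the unique edge from $v$ into $C_{i\rightarrow 1}$ to land on the pendant vertex. Your case analysis on the placement of a single bridging edge and the $\mathcal{C}_{\rightarrow 2}$ closure is essentially the argument for Lemma~\ref{lem:definingCP4C}, but for the present lemma it addresses the wrong defining formula in the wrong order.
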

 \begin{proof} 
 \begin{flalign*}
	CP_4C(x,i,j) := &  conn(x) \; \wedge \; \mathcal{N}(i) \; \wedge \; \mathcal{N}(j) \;  \wedge\; 3<i<j \; \wedge\\
	& |x|=i+j+2 \; \wedge \; C_{i \rightarrow 1} \cup C_j \lessdot_i x \; \wedge \\
	& C_{j \rightarrow 1} \cup C_i \lessdot_i x
 \end{flalign*}
 We show that any $x$ satisfying the above formula is $CP_4C(i,j)$.
 From the definition, $x$ has to be obtained by adding one new vertex $v$ and some number of edges which are incident on $v$ to $g=C_{i \rightarrow 1} \cup C_j$. Let $v_1$ be the unique degree 1 vertex of $g$. Notice that there is only one copy of $C_j$ present as subgraph in $x$ because of cardinality constraints. There must be at least one edge between $v$ and $C_j$ (connectivity constraint). However, if there were multiple edges, we cannot get $C_{j \rightarrow 1}$ as induced subgraph of $x$.
 By the connectivity constraint, we must also have an edge between $v$ and $C_{i \rightarrow 1}$. Suppose there is an edge between $v$ and some vertex $v_2$ in $C_{i \rightarrow 1} $ which is not $v_1$ i.e. $v_2$ is a cycle vertex. Then it is impossible to obtain $C_{j \rightarrow 1} \cup C_i$ from $x$ by deleting a single vertex (since both $v$ and $v_1$ are non-cycle vertices attached to $C_i$ in $x$). Thus there are no such edges $vv_2$. However, there must be an edge $vv_1$ due to the connectivity constraint. Thus the only graph satisfying the formula is $CP_4C(i,j)$.
 \end{proof}

 We can now write 
 \begin{flalign*}
	\psi_{edgeOP}(x,i,j) :=& \exists y \; x \in \tilde{y}\; \wedge \; \exists m \; (|x|=m^2 +m(m+1)/2 +3m) \; \wedge \\
	 	& CP_4C(m+i+2,m+j+2) \leq_i x
 \end{flalign*}

 The existence of an edge between vertices $v_i$ and $v_j$ in the graph $y$ is captured by the presence of a $CP_4C$ induced subgraph in $x$ (which is an o-presentation of $y$) with appropriate parameters and this is stated by the formula $\psi_{edgeOP}$. 

 It remains to define $\tilde{\mathcal{G}}$. To do so, we need some additional basic predicates.
 \begin{lem}[Wires \cite{wires2016definability}]
 \label{lem:basicPredInduced3}
	The following predicates are definable in ($\mathcal{G},\leq_i,P_3$).
	\begin{enumerate}
		\item $z=x \cup y$ if and only if $z$ is the disjoint union of $x$ and $y$.
		\item $C_{\rightarrow 2}(x)$ if and only if $x$ is formed from the graph $g$ which satisfies $C_{\rightarrow 1}(g)$ by adding an additional vertex and joining it to the unique vertex in $g$ which has degree 1.
		\item $pointedCycleSum(x,y,z)$ if and only if $x$ and $y$ are incomparable cycles and $z$ is formed by starting with the graph $x \cup y$ and adding one extra vertex $v$ and two extra edges, one from $v$ to any vertex of $x$ and another from $v$ to any vertex of $y$. We will write $z=x +_p y$ for short. 
	\end{enumerate}
 \end{lem}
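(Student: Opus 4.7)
All three predicates can be assembled from the vocabulary already made available by Lemmas~\ref{lem:basicPredInduced1} and~\ref{lem:basicPredInduced2}, namely arithmetic on $\mathcal{N}$, the cover relation $\lessdot_i$, plus $conn$, $maximalComp$, $\mathcal{C}$ and $\mathcal{C}_{\rightarrow 1}$. The only genuinely new content lies in (1), from which (3) becomes routine; (2) is a one-line description using a Hamiltonian-path trick.

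For (1), I would adopt the ``count each connected type as a component'' approach of Wires. The first step is to define a multiplicity predicate $maxCopies(x,w,n)$ meaning ``$x$ contains exactly $n$ pairwise disjoint copies of the connected graph $w$ as maximal components'', built in turn from the $mult$, $uniqueCompCard$ and $edgeMaximal$ gadgets hinted at in the abbreviations: the idea is to pick, among all graphs $\leq_i x$ whose maximal components all have the same cardinality as $w$, the edge-maximal one, and read off the number $n$. Given $maxCopies$, the disjoint union predicate is
\[ disjointUnion(z,x,y) := \forall w \; conn(w) \supset \exists n_1,n_2,n_3 \; (maxCopies(x,w,n_1) \wedge maxCopies(y,w,n_2) \wedge maxCopies(z,w,n_3) \wedge \psi_{+}(n_1,n_2,n_3)), \]
using $\psi_{+}$ from $C1$ to express $n_3 = n_1 + n_2$. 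This forces the multisets of components of $z$ and of $x \sqcup y$ to agree, which characterises disjoint union up to isomorphism.

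For (2), the key observation is that ``cycle $C_k$ with a pendant path of length $2$'' is precisely the graph on $k+2$ vertices having an induced Hamiltonian path: walking from the tip of the pendant through the branch vertex and once around the cycle is a chord-free traversal of all vertices. Attaching the new vertex at any other spot of a $\mathcal{C}_{\rightarrow 1}$ graph leaves either a degree-3 branch vertex on the cycle or two separate pendants, and in both cases the largest induced path has strictly fewer than $|x|$ vertices. Hence
\[ \mathcal{C}_{\rightarrow 2}(x) := \exists y \; \mathcal{C}_{\rightarrow 1}(y) \wedge y \lessdot_i x \wedge P_{|x|} \leq_i x. \]

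For (3), using (1) let $w$ be the disjoint union of $x$ and $y$; then any $z$ with $w \lessdot_i z$ is obtained by adding a single new vertex $v$ together with some nonempty set of edges incident to $v$. Requiring $z$ to be connected forces $v$ to have at least one neighbour in each of $x$ and $y$. If $v$ acquired a second neighbour in, say, $x = C_i$, then the walk from $v$ along two paths inside the cycle would form a new induced cycle whose length is strictly less than $i$ (and hence different from $x$ and $y$, assuming $x \neq y$). We therefore kill all such configurations by disallowing extraneous induced cycles:
\[ pointedCycleSum(x,y,z) := \mathcal{C}(x) \wedge \mathcal{C}(y) \wedge x \nleq_i y \wedge y \nleq_i x \wedge \exists w \; disjointUnion(w,x,y) \wedge w \lessdot_i z \wedge conn(z) \wedge \forall C \; (\mathcal{C}(C) \wedge C \leq_i z) \supset (C = x \vee C = y). \]
The main obstacle is unambiguously (1): defining $maxCopies$ purely in the order, with no direct access to the edge relation of the underlying graph, is the one delicate step, and I would rely on Wires' explicit construction for it rather than reprove it from scratch. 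Once $maxCopies$ is in hand, items (2) and (3) reduce to the short formulas above.
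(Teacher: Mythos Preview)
The paper does not supply its own proof of this lemma; it simply invokes Wires~\cite{wires2016definability}. Your plan for (1) is therefore exactly what the paper does, and since you explicitly defer $maxCopies$ to Wires there is nothing further to compare.

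However, your formulas for (2) and (3) are both incorrect. For (2), the condition $P_{|x|}\leq_i x$ can \emph{never} hold for a $\mathcal{C}_{\rightarrow 2}$ graph: an induced subgraph on all $|x|$ vertices is $x$ itself, so $P_{|x|}\leq_i x$ forces $x$ to be a path. Your ``chord-free traversal'' $v,u,c_0,c_1,\dots,c_{k-1}$ is not chord-free---the cycle edge $c_{k-1}c_0$ is present. You have transplanted the paper's \emph{subgraph}-order formula $P_{|x|}\leq_s x$ (Lemma~\ref{lem:predForCP4C}) into the induced setting, where it no longer makes sense. For (3), the ``no extraneous induced cycle'' test is too weak. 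Take $x=C_4$, $y=C_5$, let $w=x\cup y$, and form $z$ by adding a vertex $v$ adjacent to the two \emph{opposite} vertices of the $C_4$ and to one vertex of the $C_5$. Then $w\lessdot_i z$, $z$ is connected, and the only induced cycles in $z$ are $C_4$ (the original copy and two further copies through $v$) and $C_5$; yet $v$ has degree~$2$ into $x$, so $z$ is not the pointed cycle sum. A similar failure occurs already with $x=C_3$ when $v$ is joined to all three triangle vertices. The induced-cycle spectrum does not detect multiplicity of edges from $v$ into a single cycle when the resulting new cycles happen to have length $|x|$ or $|y|$; one needs an additional constraint (for instance, forbidding $\mathcal{C}_{\rightarrow 1}$-type subgraphs with the wrong attachment, as the paper does in related constructions) to rule this out.
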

 Definability of constants  holds for the family $\mathcal{C}_{\rightarrow 2}$ by observation \ref{obs:totalOrderDefble}.
 Using the above, we can define another intermediate predicate: 
 \begin{lem}
 \label{lem:bicycle}
	The family $bicycle(x)$ if and only if $x$ is formed by adding an edge between two unequal cycles, is definable.
 \end{lem}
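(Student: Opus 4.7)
The plan is to characterize a bicycle by the two graphs one obtains by deleting each endpoint of its bridge. If $x$ is the bicycle formed by joining $C_i$ and $C_j$ (with $3 \leq i < j$) via a single bridge edge $uv$, where $u \in V(C_i)$ and $v \in V(C_j)$, then removing $u$ simultaneously turns $C_i$ into the path $P_{i-1}$ and severs the bridge, giving the disconnected induced subgraph $P_{i-1} \cup C_j$; symmetrically, removing $v$ yields $C_i \cup P_{j-1}$. Each of these sits as an upper cover of $x$ in $\leq_i$, since each is obtained by deleting a single vertex.

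Using the predicates already available in $(\mathcal{G},\leq_i,P_3)$ --- connectedness, cardinality, the definable families $\mathcal{N},\mathcal{C},\mathcal{P}$ with their cardinality-indexed constants $C_n, P_n$, disjoint union $\cup$, and the cover relation $\lessdot_i$ --- I would write
\begin{flalign*}
bicycle(x) \,:=\, & conn(x) \;\wedge\; \exists i,j \in \mathcal{N}\;(3 \leq i < j) \;\wedge\; |x| = i+j \\
&\wedge\; \exists y\;(y \lessdot_i x \;\wedge\; y = P_{i-1} \cup C_j) \\
&\wedge\; \exists z\;(z \lessdot_i x \;\wedge\; z = C_i \cup P_{j-1}).
\end{flalign*}
The forward direction follows immediately from the observation above.

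For the converse, let $u_0$ and $v_0$ be vertices of $x$ witnessing $x - u_0 = P_{i-1} \cup C_j$ and $x - v_0 = C_i \cup P_{j-1}$. A comparison of the cycle-component sizes in these two graphs forces $u_0 \neq v_0$, and a short case analysis on where $v_0$ sits inside $x-u_0$ (path part versus cycle part), and dually where $u_0$ sits inside $x-v_0$, shows that the only combination compatible with the two descriptions of $x - \{u_0,v_0\}$ is $u_0$ lying in the $C_i$-component of $x-v_0$ and $v_0$ in the $C_j$-component of $x-u_0$. The two resulting partitions of $V(x)\setminus\{u_0,v_0\}$ must then agree, so $V(C_i)\setminus\{u_0\}$ coincides with the path vertex set $V(P_{i-1})$ coming from $x-u_0$, and symmetrically on the $j$-side. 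The condition that $\{u_0\} \cup V(P_{i-1})$ induce $C_i$ in $x-v_0$ forces $u_0$ to be adjacent to exactly the two endpoints of $P_{i-1}$, while the disconnectedness of $x-v_0$ excludes any edge from $u_0$ into $V(P_{j-1})$; symmetric statements hold for $v_0$. Connectedness of $x$ finally forces the bridge $u_0 v_0$, so $x$ is exactly the bicycle joining $C_i$ and $C_j$.

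The main subtlety, and where the most careful bookkeeping is needed, lies in the case analysis in the converse direction: one has to exclude spurious configurations such as $C_i$ and $C_j$ linked by two or more edges (which have no single vertex whose deletion leaves $P_{i-1} \cup C_j$), and to verify that matching the two partition descriptions of $x - \{u_0,v_0\}$ really does pin down the neighbourhoods of $u_0$ and $v_0$ completely.
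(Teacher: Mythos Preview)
Your proof is correct, though it takes a different route from the paper's. (One minor slip: you call $P_{i-1}\cup C_j$ and $C_i\cup P_{j-1}$ ``upper covers of $x$'' when you mean lower covers; the formula $y\lessdot_i x$ is written correctly, so this is purely verbal.)

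The paper instead enforces $C_i\leq_i x$, $C_j\leq_i x$, $|x|=i+j$ and $conn(x)$, which forces $x$ to arise from $C_i\cup C_j$ by adding edges between the two cycles; it then rules out multiple bridge edges by forbidding, as induced subgraphs of $x$, any upper cover of $C_{i\rightarrow 1}$ (respectively $C_{j\rightarrow 1}$) other than $C_{i\rightarrow 2}$ (respectively $C_{j\rightarrow 2}$). Your approach characterises the bicycle by its two vertex-deletion lower covers $P_{i-1}\cup C_j$ and $C_i\cup P_{j-1}$ and then recovers the full structure via a case analysis on $x-\{u_0,v_0\}$. What your approach buys is that it avoids the auxiliary $\mathcal{C}_{\rightarrow 1}$ and $\mathcal{C}_{\rightarrow 2}$ families entirely, relying only on paths, cycles, disjoint union, cardinality, and the cover relation --- all of which are available at this point. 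The paper's approach, on the other hand, has a shorter correctness argument: once cardinality pins down the vertex set, forbidding the non-$\mathcal{C}_{\rightarrow 2}$ covers of $C_{i\rightarrow 1}$ directly prevents a second edge meeting $C_i$, with no case analysis needed. Your converse direction is where the real work lies, and the key step --- using $i\neq j$ to get distinct component sizes in $x-\{u_0,v_0\}$, hence a unique partition matching both descriptions --- is sound.
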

 \begin{proof}
 \begin{flalign*}
	bicycle(x) := & conn(x) \; \wedge \; \exists y \; \exists z \; \mathcal{C}(y) \; \wedge \; \mathcal{C}(z) \; \wedge \; y \neq z \\
	& |x|=|y|+|z| \; \wedge \; y \leq_i x \; \wedge \; z \leq_i x \; \wedge \\
	& \forall w \; (w \neq C_{|y| \rightarrow 2} \; \wedge \; C_{|y| \rightarrow 1} \lessdot_i w) \supset w \nleq_i x \\
	& \forall w \; (w \neq C_{|z| \rightarrow 2} \; \wedge \; C_{|z| \rightarrow 1} \lessdot_i w) \supset w \nleq_i x 
 \end{flalign*}
 Since the two cycles $y$ and $z$ are induced subgraphs and the cardinality constraint implies that there are no other vertices apart from the cycle vertices, $x$ is restricted to graphs which are formed by adding edges between $z$ and $y$. There is at least one edge due to the connectedness constraint. We avoid multiple edges by avoiding induced subgraphs which contain two edges incident on either cycle.
 \end{proof}
 Now we can define the set of o-presentations.

 \begin{lem}[Wires \cite{wires2016definability}]
	\label{lem:defineTildegInduced}
	The predicate $\tilde{\mathcal{G}}(x)$ is definable in the induced subgraph order.
 \end{lem}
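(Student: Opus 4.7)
I would define $\tilde{\mathcal{G}}(x)$ by characterizing o-presentations structurally and then expressing the characterization as a first-order formula in the induced subgraph order. Recall from Definition~\ref{def:opres} that an o-presentation of $y$ with $|y|=n$ attaches the $n$ indicator cycles $C_{n+3},C_{n+4},\ldots,C_{2n+2}$ to distinct vertices of $y$ via single edges. Hence an o-presentation $x$ must satisfy: $|x|=n^2+n(n+1)/2+3n$ for some $n\in\mathcal{N}$; the disjoint union $\bigcup_{i=1}^{n} C_{n+i+2}$ embeds in $x$ as an induced subgraph; and the remaining $n$ external vertices are joined to the cycles by a perfect matching of single attachment edges.

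The plan is to assemble the formula as a conjunction of these requirements, using predicates defined earlier. The cardinality condition is immediate from ($C1$). To force the disjoint cycles I use $\exists z \; csum(z,n) \wedge z \leq_i x$, where $csum$ is from Lemma~\ref{lem:disjointLargeCycles}; since $|z|=n^2+n(n+1)/2+2n$, this leaves exactly $n$ external vertices in $x$, and the induced-subgraph condition forbids edges between different cycles of $z$. To force at least one attachment per cycle I require $\forall i \in \mathcal{N} \; (n+3 \leq i \leq 2n+2) \supset C_{i\rightarrow 1} \leq_i x$, invoking $\mathcal{C}_{\rightarrow 1}$ from Lemma~\ref{lem:basicPredInduced2} together with definability of constants along totally ordered families (Observation~\ref{obs:totalOrderDefble}).

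To rule out non-canonical attachments, I forbid several bad induced-subgraph configurations. The clause $\forall w \; bicycle(w) \supset w \nleq_i x$ (Lemma~\ref{lem:bicycle}) blocks two indicator cycles from being joined by an inter-cycle edge. A clause asserting that no induced subgraph of $x$ is a $pointedCycleSum$ of two indicator cycles (Lemma~\ref{lem:basicPredInduced3}) blocks two cycles from sharing an external attachment vertex. Finally, for each indicator size $k$ I would introduce two auxiliary forbidden families: first, a multiple-attachment family of connected graphs on $k+1$ vertices containing $C_k$ but not equal to $\mathcal{C}_{\rightarrow 1}$, which rules out an external vertex being adjacent to more than one vertex of a single cycle (and thus also forbids chords inside indicator cycles); second, a two-distinct-pendants family of graphs on $k+2$ vertices possessing two distinct $\mathcal{C}_{\rightarrow 1}$ predecessors sharing the same cycle, which rules out two attachment edges on one cycle.

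The main obstacle is twofold. The smaller difficulty is defining the auxiliary forbidden families cleanly using only induced-subgraph containment and the cover relation $\lessdot_i$; each requires a careful conjunction of cardinality, cycle-containment, connectivity, and cover conditions. The substantive difficulty is the completeness direction: one must prove that once the cardinality and cycle conditions are in place and all listed forbidden configurations are absent, the $n$ external vertices and $n$ indicator cycles are necessarily joined by a perfect matching of single attachment edges, so that $x$ is indeed an o-presentation. This requires a pigeonhole counting argument on the $n$ external vertices together with an exhaustive case analysis on the possible edge patterns between an external vertex and an indicator cycle, showing that every pattern other than ``one edge to exactly one cycle'' is excluded by the forbidden families.
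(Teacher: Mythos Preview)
Your approach is essentially the paper's: both define $\tilde{\mathcal{G}}(x)$ as a conjunction of (i) the cardinality condition $|x|=n^2+n(n+1)/2+3n$, (ii) the induced presence of $\bigcup_{i=1}^{n}C_{n+i+2}$, (iii) the induced presence of each $C_{n+i+2\rightarrow 1}$, and (iv) the exclusion of a short list of forbidden induced configurations, followed by the same pigeonhole argument for completeness. Your two auxiliary forbidden families together accomplish exactly what the paper does with the single clause $noMultiEdge$, which forbids every $\lessdot_i$-cover of $C_{n+i+2\rightarrow 1}$ other than $C_{n+i+2\rightarrow 2}$; the paper's packaging is more economical, but the effect is the same.

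One point does need tightening. Your clause $\forall w\;bicycle(w)\supset w\nleq_i x$ forbids \emph{all} induced bicycles, but a legitimate o-presentation of a graph $y$ that itself contains an induced bicycle (possible as soon as $|y|\geq 7$) would then be rejected; the same objection applies to bicycles formed by an indicator cycle together with a small cycle through its attachment vertex inside $y$. The paper handles this by restricting the prohibition to large bicycles via the size condition $|y|>2n$, so that bicycles living inside the $n$-vertex core are still permitted. You already restrict the pointed-cycle-sum clause to indicator-sized cycles; the bicycle clause needs the analogous size restriction rather than a blanket ban.
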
 
 \begin{proof}
 \begin{flalign*}
	\tilde{\mathcal{G}}(x) :=& \exists n \; cardCond(x,n) \; \wedge \; hasC1s(x,n) \; \wedge \\
	& hasUnionOfCycles(x,n)\; \wedge \; noMultiEdge(x,n) \\
	& \wedge \; noPointedCycleSums(x,n)\; \wedge \; noBicycles(x,n)\\
	&\text{where}\\
	cardCond(x,n):= & \mathcal{N}(n) \; \wedge \; |x| = n^2 + n(n+1)/2 + 3n \\
	hasC1s(x,n) := & \forall i \; (1 \leq i \leq n) \; C_{i+n+2 \rightarrow 1} \leq_i x  \\
	hasUnionOfCycles(x,n) :=& \bigcup_{i=1}^n C_{n+i+2} \leq_i x \\
	noMultiEdge(x,n) := & \forall (1 \leq i \leq n) \; \forall z \; (C_{n+i+2 \rightarrow 2} \neq z \; \wedge \;   C_{n+i+2 \rightarrow 1} \lessdot_i z) \\
	& \quad \supset z \nleq_i x\\
	noPointedCycleSums(x,n) := & \forall (1 \leq i < j \leq n) \; C_{n+i+2}+_p C_{n+j+2} \nleq_i \\
	noBicycles(x,n) := & \forall y \; (bicycle(y) \; \wedge \; |y| > 2n) \supset y \nleq_i x 
 \end{flalign*}
 In order to show that a graph $x$ is an o-presentation, we need to show that the vertex set $V$ of $x$ can be partitioned into two sets $V_1$ and $V_2$ such that:
 \begin{enumerate}
	\item Let $|V_2|=n$. The graph $g$ induced on $V_1$ is $\bigcup_{i=1}^n C_{n+i+2}$. We denote the vertex set of each large cycle $C_{n+i+2}$ by $V'_i$ and $\bigcup_i V'_i = V_1$.
	\item There is a bijection $f:\{ V'_1,V'_2,...,V'_n\} \rightarrow V_2$ such that there is an edge from a unique vertex of $V'_i$ to $f(V'_i)$ and there are no other edges between the large cycles in $x$ and the vertex set $V_2$.
 \end{enumerate}

 The formula $cardCond$ states that the graph has as many vertices as required to contain as induced subgraph a graph on $n$ vertices and cycles of order $n+i+2$ for each $i$ between $1$ and $n$. 

 $hasUnionOfCycles$ states that the disjoint union of all the required cycles is an induced subgraph. Because of the cardinality constraint already imposed, this implies that there is a unique copy of each cycle in $x$. Let $V_1$ be the set of vertices which induce the graph $\bigcup_{i=1}^n C_{n+i+2}$. The remaining vertices of $x$ form the set $V_2$, which has cardinality $n$. No restriction is placed on the edges between the non-cycle vertices $V_2$. It remains to place appropriate restrictions on $V_1$ in order to make sure that the resulting graph $x$ is of the required form. The formula $hasC1s$ states that the $C_{\rightarrow 1}$ are induced subgraphs, thus there is at least one edge from every indicator cycle to the rest of the graph. The formula $noMultiEdge$ ensures that there are no multiple edges between a indicator cycle and the rest of the graph while $noPointedCycleSums$ ensures that two different indicator cycles dont have an edge to a vertex $v$ external to the two cycles. At this point, the contraints ensure that there is exactly one edge incident on each indicator cycle which has its other end elsewhere. But this does not rule out the possibility of two cycles directly connected by an edge. To rule this out, we have  $noBicycles$. Note that by design all indicator cycles are of different length. Thus the edge enforced via $hasC1s$ must be between a indicator cycle and a vertex in $V_2$. Together, this implies the existence of the bijection $f$ between indicator cycles and vertices in $V_2$. 	 This ends the proof that ($\mathcal{G},\leq_i,P_3$) satisfies condition $C2$.
 \end{proof}

This completes the construction of explicit formulae required to show that the induced order is capable.

 \begin{lem}[Wires \cite{wires2016definability}]
 \label{lem:indSubIsCapable}
 	The structure ($\mathcal{G},\leq_i,P_3$) is a capable structure.
 \end{lem}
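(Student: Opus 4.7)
The plan is to verify each of the three conditions $(C1)$, $(C2)$, $(C3)$ in the definition of a capable structure separately, assembling the pieces already developed in this subsection. Condition $(C3)$ requires defining $sameCard(x,y)$, which is immediate once $\mathcal{N}$ is definable: two graphs have the same cardinality iff they are comparable to exactly the same members of $\mathcal{N}$, as formalized in the observation following $(C3)$. So once $(C1)$ is in hand, $(C3)$ is essentially free.

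For $(C1)$, I would invoke Wires' results, which give the definability of $\mathcal{N}$, $\mathcal{T}$, $\mathcal{P}$, the addition predicate $\psi_+$, and the binary cardinality predicate $|x| = |y|$ (Lemma \ref{lem:basicPredInduced1}). The only missing ingredient is $\psi_\times$, which I would reduce to the squaring predicate $\psi_{sq}(x,y)$ asserting $|x| = |y|^2$. For this, I would construct, for each $n$, the definable tree $t_n$ that is characterized by being the largest tree with maximum degree $n$ and maximum induced path $P_5$; an easy count shows $|t_n| = n^2+1$, and hence $\psi_{sq}(x,y)$ can be written as the assertion that $x$ is the unique predecessor of $|t_n|$ in $\mathcal{N}$, exactly as in the construction displayed in the excerpt. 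Then $\psi_\times(x,y,z)$ is standard from $\psi_+$ and $\psi_{sq}$ via the identity $4xy = (x+y)^2 - (x-y)^2$ (or any similar reduction).

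The bulk of the work goes into $(C2)$. For $\psi_{opres}(y,x)$, I would follow the two-step strategy from the excerpt: first define the family $\tilde{\mathcal{G}}$ of all o-presentations, and then assert that $y \in \tilde{\mathcal{G}}$ and that deleting exactly $|x|$ vertices from $y$ yields a graph of the form $x \cup \bigcup_{i=1}^{|x|} P_{|x|+i+1}$, which exists by Lemma \ref{lem:disjointLargeCycles}. The cardinality condition forces each indicator cycle $C_{|x|+i+2}$ to reappear in $y$, and since exactly $|x|$ new vertices are available, each such cycle must be completed by a single new vertex attached to both endpoints of the corresponding path; this gives the required bijective attachment between indicator cycles and the labelled vertices of $x$. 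Once $\tilde{\mathcal{G}}$ and $\psi_{opres}$ are in place, $\psi_{edgeOP}(x,i,j)$ is handled by the Lemma \ref{lem:CP4C} gadget: an edge between $v_i$ and $v_j$ in the underlying graph is detected by the presence of $CP_4C(|x|+i+2,|x|+j+2)$ as an induced subgraph of the o-presentation $x$.

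The main obstacle, and where I expect the care to lie, is the definability of $\tilde{\mathcal{G}}$ itself (Lemma \ref{lem:defineTildegInduced}). The challenge is to characterize, using only the induced-subgraph relation, graphs whose vertex set splits into a copy of $\bigcup_{i=1}^n C_{n+i+2}$ together with $n$ additional vertices, each attached to a distinct indicator cycle by a single edge and with no other edges between the cycle block and the attachment vertices. I would build the defining formula out of the conjuncts $cardCond$, $hasC1s$, $hasUnionOfCycles$, $noMultiEdge$, $noPointedCycleSums$, and $noBicycles$ shown in the excerpt, where the last two forbid, respectively, an external vertex joined to two different indicator cycles and a direct edge between two indicator cycles (the latter using the definable $bicycle$ predicate of Lemma \ref{lem:bicycle}). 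Verifying that these six conjuncts together force exactly the desired structure, and no spurious edges among the attachment vertices that would destroy the o-presentation property, is the delicate combinatorial core of the proof; once it is completed, $(C2)$ follows and the three conditions together establish that $(\mathcal{G},\leq_i,P_3)$ is capable.
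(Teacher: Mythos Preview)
Your proposal is correct and follows essentially the same route as the paper: the lemma is just the summary of the preceding work in the subsection, and you have correctly identified the pieces ($C1$ via Lemma~\ref{lem:basicPredInduced1} plus the $t_n$ construction for squaring, $C3$ via Observation~\ref{obs:C3fromC1}, and $C2$ via $\tilde{\mathcal{G}}$, $constructFromPaths$, and $CP_4C$). One small clarification on your last paragraph: there is nothing to verify about ``spurious edges among the attachment vertices''---the definition of an o-presentation places no restriction on edges within $V_2$, since those edges are precisely the edges of the graph being o-presented; the six conjuncts only need to pin down the cycle block $V_1$ and the single-edge attachments between $V_1$ and $V_2$, which the paper's argument does.
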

 

\subsection{Subgraph Order}\hspace*{\fill} \\ 
\label{sub:subgraph_order}

	\underline{($\mathcal{G},\leq_s$) satisfies $C1$}:\\
	This has already been proved in \cite{ramanujam2016definability}.

	\underline{($\mathcal{G},\leq_s$) satisfies $C2$}:\\
	We need to show the definability of the formulae $\psi_{opres}$ and $\psi_{edgeOP}$. Similar to the case of the induced subgraph, the result follows from the definability of the intermediate predicates $\tilde{\mathcal{G}}$, $constructFromCycles$ and $CP_4C$. Some of the basic predicates necessary to define three predicates are already known from previous work \cite{ramanujam2016definability}. Two additional graph theoretic relations, namely a ternary relation $disjointUnion(z,x,y)$ which holds if and only if $z$ is the disjoint union of $x$ and $y$ and a binary relation $sameSize(x,y)$ which holds if and only if $x$ and $y$ have the same number of edges are used. 

  First we complete the proof assuming the definability of the three intermediate predicates. This is followed by three subsections where we show the definability of $CP_4C$ in the subgraph order, definability of $\tilde{\mathcal{G}}$ in the structure ($\mathcal{G},\leq_s,sameSize$), definability of $constructFromCycles$ in ($\mathcal{G},\leq_s,disjointUnion,sameSize$). This shows that the structure ($\mathcal{G},\leq_s,disjointUnion,sameSize$) is capable. In the last subsection, we show that the relations $disjointUnion$ and $sameSize$ are definable in the subgraph order, completing the proof that the subgraph order is capable.

 \begin{lem}[\cite{ramanujam2016definability}]
	\label{lem:oldBasicPredSubgraph}
	The following predicates are definable in the subgraph order:
	\begin{enumerate}
		\item $x \lessdot_s y, x \lessdot_{sv} y, x \lessdot_{se} y$ if and only if $y$ is  an upper cover, an upper cover formed by adding one more vertex, an upper cover formed by adding one more edge (respectively) to $x$. 
		\item The family $soc(x)$ if and only if $x$ is a disjoint union of cycles.
 		\item The families $\mathcal{N}, \mathcal{T},\mathcal{K},\mathcal{P},\mathcal{C},\mathcal{S}$ (isolated points, trees, cliques, paths, cycles,  stars respectively).
		\item $conn(x)$ if and only if $x$ is a connected graph.
		
	\end{enumerate}	
 \end{lem}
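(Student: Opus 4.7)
The plan is to bootstrap definability of these predicates from the well-foundedness of $\leq_s$ and the small-graph skeleton of the order, following the strategy of \cite{ramanujam2016definability}.

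I would begin by naming constants bottom-up. The least element is $\emptyset_g$; its unique cover is $N_1$; the two covers of $N_1$ are $N_2$ and $K_2$, distinguished by the fact that $N_2 \leq_s K_2$ but not conversely. The family of edgeless graphs is then $\mathcal{N}(x) := K_2 \nleq_s x$, and by Observation~\ref{obs:totalOrderDefble} every $N_k$ becomes a definable constant since $(\mathcal{N},\leq_s)$ is a discrete well-order. The $\mathcal{N}$-rank of a graph thus gives a first-order handle on cardinality, and I would use this to split the cover relation: $x \lessdot_{sv} y$ holds when $x \lessdot_s y$ and the associated $\mathcal{N}$-cardinality strictly grows, while $x \lessdot_{se} y$ holds when $x \lessdot_s y$ and cardinality is preserved.

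Next I would define $conn$. A graph is connected precisely when it is not expressible as a non-trivial disjoint union, and in the subgraph order this can be detected by noting that if $x$ is disconnected then $x$ decomposes as $a \cup b$ for two proper subgraphs whose cardinalities sum to $|x|$ and between whose vertex sets no edge is present; this is rendered in $\leq_s$ using cardinality bookkeeping and forbidden-subgraph witnesses such as $P_3$ bridging the halves. Once $conn$ is in place, the families $\mathcal{K}, \mathcal{P}, \mathcal{C}, \mathcal{T}, \mathcal{S}$ are pinned down by standard structural conditions combining connectedness, cardinality, and forbidden subgraphs (for instance, $\mathcal{C}$ as the connected graphs all of whose connected proper subgraphs are paths, and $\mathcal{S}$ as the trees whose maximal path is $P_3$). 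The predicate $soc(x)$ then says simply that every maximal connected piece of $x$ belongs to $\mathcal{C}$.

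The main obstacle is getting $conn$ off the ground, since the subgraph order offers no direct access to vertices, edges, or a disjoint-union operation; everything has to be reconstructed from embedding relationships alone. The trick used in \cite{ramanujam2016definability} is to exploit cardinality jumps across carefully chosen cover chains together with rigid small-graph probes to simulate a decomposition predicate, after which $conn$ and the remaining families follow from routine forbidden-subgraph characterizations.
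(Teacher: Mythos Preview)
Your dependency order is inverted relative to the paper, and this creates a real gap at $conn$. In the paper (see Appendix~\ref{app:basicPredSubgraph}) the chain is $pac \to soc \to forest \to \mathcal{T} \to conn$, not $conn \to$ everything else. The key observation you are missing is that $pac(x) := S_4 \nleq_s x$ already isolates the degree-$\leq 2$ graphs without any reference to connectedness; $soc$ is then carved out of $pac$ by an edge-maximality condition per cardinality level, $forest$ is defined as having no $soc$ subgraph, and $\mathcal{T}$ is the set of edge-maximal forests for their cardinality. With $\mathcal{T}$ in hand, $conn(x)$ is simply the existence of a spanning tree: $\exists y\,(\mathcal{T}(y)\wedge |y|=|x|\wedge y\leq_s x)$. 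Your proposed route---detecting disconnectedness via a decomposition $x=a\cup b$ with ``$P_3$ bridging the halves''---requires disjoint union, which the paper only defines much later (Lemma~\ref{lem:disjointUnion}) using heavy machinery including $maxCopies$ and arithmetic; this is not available at the bootstrap stage, and the hand-wave about ``cardinality jumps across cover chains'' does not describe what \cite{ramanujam2016definability} actually does.

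Two smaller points. First, $N_1$ has a \emph{unique} cover $N_2$ in $\leq_s$ (you cannot add an edge to a single vertex), so your claim that $N_2$ and $K_2$ are both covers of $N_1$ is incorrect; the paper distinguishes $K_2$ from $N_3$ among the covers of $N_2$ by the fact that $K_2$ has a unique upper cover. Second, your definition of $soc$ as ``every maximal connected piece is a cycle'' presupposes $conn$ and $maximalComp$, whereas the paper's $soc$ needs neither---this is precisely what lets the bootstrap work.
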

 For the sake of completeness, the defining formulae for the basic predicates above are given in the appendix.
 Having introduced the necessary basic predicates, we prove the following conditional result:

 \begin{lem}
	\label{lem:condC2Proof}
	The predicates $\psi_{opres}$ and $\psi_{edgeOP}$ are definable in the structure ($\mathcal{G},\leq_s,disjointUnion,sameSize$) assuming the definability of the following predicates:
	\begin{enumerate}
		\item $CP_4C(x,i,j)$ holds if and only if $i,j \in \mathcal{N}, 3< i <j$ and $x$ is constructed from the graph $C_{i \rightarrow 1} \cup C_ {j \rightarrow 1}$ by adding one additional edge between the unique degree one vertices of $C_{i \rightarrow 1}$ and $C_{j \rightarrow 1}$.
		\item $\tilde{\mathcal{G}}(x)$ holds if and only if $x$ is an o-presentation of some graph.
		\item $constructFromCycles(x,y)$ holds if and only if $x$ is constructed by adding $|y|$ edges to the graph $g$ which is the disjoint union of $y$ and all indicator cycles corresponding to $y$.
	\end{enumerate}
 \end{lem}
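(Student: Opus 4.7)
The plan is to define $\psi_{opres}$ and $\psi_{edgeOP}$ as near-conjunctions of the three hypothesised predicates together with some polynomial bookkeeping on vertex counts, mirroring the strategy used for the induced subgraph order. For $\psi_{opres}(y,x)$ I would take $\psi_{opres}(y,x) := \tilde{\mathcal{G}}(y) \wedge constructFromCycles(y,x)$. The first conjunct asserts that $y$ is an o-presentation of \emph{some} underlying graph $x'$; by hypothesis the second asserts that $y$ is obtained from $z := x \cup \bigcup_{i=1}^{|x|} C_{|x|+i+2}$ by adding exactly $|x|$ edges, so in particular $z \leq_s y$ and $|z|=|y|$. Since the function $n\mapsto n+\sum_{i=1}^n(n+i+2)$ is strictly monotone, matching vertex counts in $y$ forces $|x|=|x'|$; the indicator cycles in $y$ are then identified unambiguously by their pairwise-distinct lengths in $\{|x|+3,\dots,2|x|+2\}$, so the cycle parts of $z$ coincide with those of the decomposition $y = x'\cup\bigcup C_{|x'|+i+2}$ plus $|x'|$ bridging edges. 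Hence $x$ and $x'$ occupy the same residual vertex set inside $y$ and (by the $|x|$-edge slack) have the same number of edges; both being subgraphs of $y$ on that vertex set forces $x=x'$.

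For $\psi_{edgeOP}(x,i,j)$, the key observation is that an edge $v_iv_j$ in the underlying graph $y$ of an o-presentation $x$ appears in $x$ exactly as a bridge between the indicator cycles $C_{|y|+i+2}$ and $C_{|y|+j+2}$, attached at their respective pendant vertices $v_i$ and $v_j$; this configuration is precisely a $CP_4C(|y|+i+2,|y|+j+2)$ subgraph. I therefore set
\[
\psi_{edgeOP}(x,i,j) := \exists y\,\bigl[\,\psi_{opres}(x,y)\,\wedge\,\exists m\,(|x|=m^2+m(m+1)/2+3m)\,\wedge\,CP_4C(m+i+2,m+j+2)\leq_s x\bigr],
\]
where the polynomial identity forces $m=|y|$ and hence indexes the indicator cycles correctly.

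The main obstacle is the converse direction for $\psi_{edgeOP}$: an \emph{a priori} embedding of $CP_4C(|y|+i+2,|y|+j+2)$ into $x$ could use cycles of $x$ other than the two intended indicator cycles and so produce a spurious witness. The argument rests on the rigidity of o-presentations. Each indicator cycle of $x$ has a unique length strictly exceeding $|y|$; any other cycle in $x$ either lies entirely inside the subgraph $y$ (hence has length at most $|y|$) or would need to traverse the pendant edge joining an indicator cycle to its attachment vertex $v_k$ twice, which is impossible because that edge is the unique connection. Consequently a $C_{|y|+k+2}$ appearing in the $CP_4C$ copy must be \emph{the} $k$-th indicator cycle of $x$, its pendant vertex in the copy must be the attachment vertex $v_k\in V(y)$, and so the bridging edge of the $CP_4C$ realises an actual edge $v_iv_j$ of $y$. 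This uniqueness argument is the only non-routine step; the rest is a direct assembly of the three given intermediate predicates with the $disjointUnion$ and $sameSize$ primitives supplied by the ambient structure.
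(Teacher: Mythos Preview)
Your proposal is correct and follows essentially the same approach as the paper: you use the identical defining formulae $\psi_{opres}(y,x):=\tilde{\mathcal G}(y)\wedge constructFromCycles(y,x)$ and $\psi_{edgeOP}(x,i,j):=\exists y\,[x\in\tilde y\wedge\exists m\,(|x|=m^2+m(m+1)/2+3m)\wedge CP_4C(m+i+2,m+j+2)\leq_s x]$. Your correctness argument for $\psi_{edgeOP}$ is in fact more detailed than the paper's, which simply asserts that a $CP_4C$ graph occurs as a subgraph of an o-presentation iff it occurs as an induced subgraph; your bridge argument (each indicator cycle is joined to the rest by a single edge, hence no long cycle can cross into it) is exactly what underlies that assertion. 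For $\psi_{opres}$ the paper argues directly that the $|x|$ added edges must each connect an indicator cycle to a vertex of $x$, whereas you compare $x$ with the underlying graph $x'$ of the o-presentation via vertex and edge counts; both reach the same conclusion.
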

 \begin{proof}
	\ZEROTWOpsiUNDERSCOREopres
	The proof of correctness of the formula is similar to that in the induced subgraph case. Let $g$ be the graph $y \cup \bigcup_i^{|y|} C_{|y|+i+2}$. The only way to get an o-presentation from the graph $g$ by adding $|y|$ edges is to connect each indicator cycle to a vertex of $y$. The resulting graph has to be an o-presentation of $y$.

	\ZEROTHREEpsiUNDERSCOREedgeOP
	The formula $\psi_{edgeOP}$ we use in the subgraph order is just the formula used in the induced order with $\leq_i$ replaced by $\leq_s$.
	The correctness of the formula follows from the fact that from the construction of o-presentations, it is clear that an indicator cycle (and the $CP_4C$ graphs) is a subgraph of an o-presentation iff it is an induced subgraph of that o-presentation.
 \end{proof}

 We now take up the definability of $CP_4C$, $\tilde{\mathcal{G}}$ and $constructFromCycles$ in that order.

 \subsection{Defining $CP_4C$ in the Subgraph Order}

 We will need the following intermediate predicates:
 \begin{lem}
 \label{lem:predForCP4C}
	The following predicates are definable in the subgraph order:
	\begin{enumerate}
		\item $maximalComp(y,x)$ if and only if $x$ is a maximal component of $y$ under the subgraph order.
		\item $addVert(x,y)$ if and only if $y$ is a connected graph and $x$ is a connected graph formed by adding one additional vertex and one additional edge to $y$.
		\item $\mathcal{C}_{\rightarrow 1}(x)$ if and only if $x$ is the connected graph formed by adding one additional vertex and one additional edge to a cycle.
		\item $\mathcal{C}_{\rightarrow 2}(x)$ if and only if $x$ is a graph which is formed by a taking a graph $g$ from the family $\mathcal{C}_{\rightarrow 1}$ and adding an additional vertex and connecting it to the unique degree one vertex in $g$.
		\item $twoC1s(x,i,j)$ holds if and only if $i,j \in \mathcal{N}$, $3<|i|<|j|$ and $x$ is the graph $C_{i \rightarrow 1} \cup C_{j \rightarrow 1}$. 
	\end{enumerate}
	
 \end{lem}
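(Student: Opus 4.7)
The plan is to prove the five items by defining each predicate in turn, reusing earlier items in the list so that each formula has only basic predicates (from Lemma \ref{lem:oldBasicPredSubgraph}) and already-defined higher-level predicates at hand. The main obstacle will be item (4), $\mathcal{C}_{\rightarrow 2}$, where the cover-based construction alone does not yet pin down the desired attachment point.

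\medskip
\noindent\textbf{Items (1)--(3).} For $maximalComp(y,x)$, the natural definition is to say $x$ is a connected subgraph of $y$ that is not strictly contained in any other connected subgraph of $y$:
\[
maximalComp(y,x) := conn(x) \wedge x \leq_s y \wedge \forall z\; (conn(z) \wedge z \leq_s y) \supset \neg(x <_s z).
\]
Correctness follows since every connected subgraph of $y$ sits inside a unique maximal one. For $addVert(x,y)$, I would use the already-definable cover relations $\lessdot_{sv}$ (add an isolated vertex) and $\lessdot_{se}$ (add an edge): take
\[
addVert(x,y) := conn(x) \wedge conn(y) \wedge \exists z\; y \lessdot_{sv} z \wedge z \lessdot_{se} x.
\]
Since $y$ is connected and $z = y \cup N_1$, the unique way of reconnecting $z$ by adding a single edge to obtain a connected $x$ is to attach the new isolated vertex to some vertex of $y$, which is exactly adding one vertex and one edge to $y$. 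For $\mathcal{C}_{\rightarrow 1}$, I then just set $\mathcal{C}_{\rightarrow 1}(x) := \exists y\; \mathcal{C}(y) \wedge addVert(x,y)$, using the family $\mathcal{C}$ from Lemma \ref{lem:oldBasicPredSubgraph}.

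\medskip
\noindent\textbf{Item (4), the hard step.} The naive formula $\exists y\; \mathcal{C}_{\rightarrow 1}(y) \wedge addVert(x,y)$ is too loose: the new vertex could be attached to any vertex of $y$, whereas we want it attached only to the unique degree-one vertex of $y$. To pin this down I would add two constraints that, taken together, force the correct attachment. First, I would require $P_{|x|} \leq_s x$, i.e.\ $x$ has a Hamiltonian path. Attaching the new vertex to the degree-one vertex produces a cycle with a tail of length two, which has an obvious Hamiltonian path (go along the tail, then around the cycle). Attaching to the sole degree-three vertex of $y$ instead would leave a cycle with two pendants at a common vertex, which has no Hamiltonian path. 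Second, to rule out attaching the new vertex to a cycle vertex distinct from the tail's anchor (which can still admit a Hamiltonian path), I would require $double3star \nleq_s x$, using the forbidden-subgraph characterisation of this remaining bad case. This yields
\[
\mathcal{C}_{\rightarrow 2}(x) := \exists y\; \mathcal{C}_{\rightarrow 1}(y) \wedge addVert(x,y) \wedge P_{|x|} \leq_s x \wedge double3star \nleq_s x,
\]
assuming $double3star$ is already defined as an earlier constant; if not, it is definable as a specific small graph by Observation \ref{obs:totalOrderDefble} applied to, e.g., the family of trees of its cardinality ordered by a suitable definable criterion.

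\medskip
\noindent\textbf{Item (5).} For $twoC1s(x,i,j)$ I would combine the cardinality condition, the requirement that both $C_{i \rightarrow 1}$ and $C_{j \rightarrow 1}$ (which are constants by Observation \ref{obs:totalOrderDefble} applied to $\mathcal{C}_{\rightarrow 1}$) are subgraphs, and the requirement that every maximal component of $x$ lies in $\mathcal{C}_{\rightarrow 1}$:
\[
twoC1s(x,i,j) := |x| = i+j+2 \wedge C_{i \rightarrow 1} \leq_s x \wedge C_{j \rightarrow 1} \leq_s x \wedge \forall z\; maximalComp(x,z) \supset \mathcal{C}_{\rightarrow 1}(z).
\]
The cardinality forces the two $\mathcal{C}_{\rightarrow 1}$ subgraphs to be vertex-disjoint and to exhaust $V(x)$; the maximal-component clause forces no extra edges between them. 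Hence $x$ must be exactly $C_{i \rightarrow 1} \cup C_{j \rightarrow 1}$, completing the proof.
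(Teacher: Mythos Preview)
Your proposal is correct and matches the paper's proof essentially line for line: the same defining formulas for all five predicates, the same Hamiltonian-path-plus-forbidden-$double3star$ strategy for $\mathcal{C}_{\rightarrow 2}$, and the same incomparability-plus-cardinality argument for $twoC1s$. One small clarification on item (4): the $double3star$ exclusion only eliminates attachments to the two cycle vertices \emph{adjacent} to the anchor (the non-adjacent attachments already fail the $P_{|x|}$ test and in fact do \emph{not} contain $double3star$), so the case split is ``desired graph'' versus ``pendant at a neighbour of the degree-$3$ vertex'', exactly as the paper phrases it.
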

 \begin{proof}
 The proof of correctness of the defining formulae for the first three predicates is straightforward and essentially follows from their definitions.
	\THREESIXmaximalComp
	\ZEROEIGHTaddVert
 \ONETHREECrightarrowONE

 We need to define a particular constant graph at this juncture. The graph $double3star$ (see Figure \ref{fig:double3star}) has vertex set $V=\{v_1,v_2,v_3,v_4,v_5,v_6 \}$ and edge set\\
 $E=\{ v_1v_2,v_3v_2,v_2v_4,v_4v_5,v_4v_6\}$. The definability of this constant is straighforward and given at the end of Appendix \ref{app:basicPredSubgraph}. 

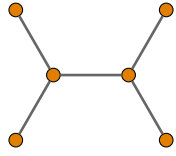
\begin{figure}
 \begin{tikzpicture}
 \def \r {1cm}
 \def \s {0.5}
 \def \d {3.5}
 \def \et {1}
 \tikzstyle{node1} = [draw, circle, fill=c01, scale=\s]
 \tikzstyle{line1}=[draw,line width=\et,color=black!60]
 
 \node[node1] (v1) at (0,0) {};
 \node[node1] (v2) at (1,0) {};
 \node[node1] (v3) at ({120:\r}) {};
 \node[node1] (v4) at ({2*120:\r}) {};
 \begin{scope}[xshift=1cm]
   \node[node1] (v5) at ({60:\r}) {};
   \node[node1] (v6) at ({300:\r}) {};
 \end{scope}
 \draw[line1] (v5)--(v2)--(v1)--(v3);
 \draw[line1] (v4)--(v1);
 \draw[line1] (v6)--(v2);
 \end{tikzpicture}
  \caption{The graph $double3star$.}
 \label{fig:double3star}
\end{figure}

\ONEFOURCrightarrowTWO
 Let $g$ be a graph satisfying the formula $\mathcal{C}_{\rightarrow 2}$. It is a connected graph formed by adding one more vertex and one more edge to a graph $g' \in \mathcal{C}_{\rightarrow 1}$. The formula states that a path of the same order as $g$ is a subgraph of $g$. This implies that we can delete one edge from $g$ to get $P_{|g|}$. There are only two possible such graphs. One of them is the graph we require and the other is the graph $g''$ which is formed from $g'$ as follows. Let $v$ be the vertex of degree three in $g'$ and $u$ be a vertex of degree two which is adjacent to $v$. By deleting the edge $uv$ we can get $P_{|g|}$. But this graph contains $double3star$ as a subgraph, which is disallowed by the formula. The correctness of the formula follows.   
 \ZEROSEVENTWOCONEs
 We now show the correctness of the above formula. Let $x,i,j$ be a tuple satisfying the formula. Every maximal component of $x$ is a graph from the family $\mathcal{C}_{\rightarrow 1}$.
 Since the members of the $\mathcal{C}_{\rightarrow 1}$ family are pairwise incomparable under the subgraph order, any $C_{k \rightarrow 1}$ which is a subgraph of $x$ must be a component of $x$. Hence $C_{i \rightarrow 1}$ and $\mathcal{C}_{j \rightarrow 1}$ are components of $x$. But according to condition $|x|=i+j+2$, there cannot be any other vertices in $x$ except for those belonging to the copy of $C_{i \rightarrow 1}$ or to the copy of $C_{j \rightarrow 1}$. Hence $x$ is exactly the graph $C_{i \rightarrow 1} \cup C_{j \rightarrow 1}$.
 \end{proof}

 Now we can define $CP_4C$:
 \begin{lem}
	\label{lem:definingCP4C}
	The predicate $CP_4C(x,i,j)$ is definable in the subgraph order.
 \end{lem}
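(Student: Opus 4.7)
The plan is to define $CP_4C(x,i,j)$ using the formula already tabulated in the macro block, namely: $x$ is connected, $i,j \in \mathcal{N}$ with $3 < i < j$, the disjoint union $C_{i \rightarrow 1} \cup C_{j \rightarrow 1}$ is an edge-cover of $x$ (i.e. $C_{i \rightarrow 1} \cup C_{j \rightarrow 1} \lessdot_{se} x$), and for every subgraph $z \leq_s x$ satisfying $addVert(z,C_{i \rightarrow 1})$ or $addVert(z, C_{j \rightarrow 1})$, we have $\mathcal{C}_{\rightarrow 2}(z)$. Each conjunct is definable by Lemma~\ref{lem:predForCP4C} and the definability of constants in the totally ordered families $\mathcal{C}_{\rightarrow 1}$. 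Note that the edge-cover clause implicitly fixes $|x|=i+j+2$, so no separate cardinality condition is needed.

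To verify the formula, I would first dispatch the easy direction. Let $x^*$ denote the intended graph, obtained from $C_{i \rightarrow 1} \cup C_{j \rightarrow 1}$ by adding the edge $pq$ between the pendants $p$ of $C_{i \rightarrow 1}$ and $q$ of $C_{j \rightarrow 1}$. The connectedness and edge-cover clauses are immediate. For the $\forall z$ clause, any subgraph $z \leq_s x^*$ with $addVert(z, C_{i \rightarrow 1})$ contains a copy of $C_{i \rightarrow 1}$; since $i<j$ the only copy in $x^*$ is the entire $C_{i \rightarrow 1}$ component, so $z$ equals $C_{i \rightarrow 1}$ together with one additional vertex of $C_{j \rightarrow 1}$ joined by an edge of $x^*$. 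The only such edge is $pq$, so the added vertex is $q$ attached to $p$, giving a cycle with a 2-path attached, i.e. a member of $\mathcal{C}_{\rightarrow 2}$.

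The main obstacle is the converse. Given $x$ satisfying the formula, from $C_{i \rightarrow 1} \cup C_{j \rightarrow 1} \lessdot_{se} x$ and $conn(x)$ I would first argue that the single extra edge of $x$ is of the form $uv$ with $u \in C_{i \rightarrow 1}$ and $v \in C_{j \rightarrow 1}$. The remaining task is to show $u$ is the pendant of $C_{i \rightarrow 1}$ and, symmetrically, that $v$ is the pendant of $C_{j \rightarrow 1}$. For this I would instantiate the $\forall z$ clause with $z$ equal to $C_{i \rightarrow 1}$ together with vertex $v$ and edge $uv$; this $z$ is a subgraph of $x$ and witnesses $addVert(z, C_{i \rightarrow 1})$, so the formula forces $z \in \mathcal{C}_{\rightarrow 2}$. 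A three-way case analysis on $u$ then closes the argument: if $u$ is the existing pendant of $C_{i \rightarrow 1}$, then $z$ is a cycle with a 2-path dangling off a cycle vertex, which lies in $\mathcal{C}_{\rightarrow 2}$; if $u$ is the cycle vertex to which the pendant is attached, then $z$ is a cycle with two pendants sharing a cycle vertex; and if $u$ is any other cycle vertex, $z$ is a cycle with two pendants on distinct cycle vertices. In each of the last two subcases, the degree sequence (or the absence of a vertex of degree one adjacent to a vertex of degree two adjacent to a cycle vertex of degree three) rules out membership in $\mathcal{C}_{\rightarrow 2}$, so only the pendant case survives. Applying the same argument with the roles of $C_{i \rightarrow 1}$ and $C_{j \rightarrow 1}$ swapped pins down $v$ as the pendant of $C_{j \rightarrow 1}$, identifying $x$ uniquely as the claimed graph.
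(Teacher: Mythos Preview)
Your proposal is correct and follows essentially the same approach as the paper: both use the identical defining formula (the macro \texttt{TWOSEVENCPUNDERSCOREFOURC}), establish that the added edge must go between the two components by connectedness, and then use the $\forall z$ clause with $addVert$ to force the endpoints to be the pendant vertices via the $\mathcal{C}_{\rightarrow 2}$ constraint. Your treatment is in fact more careful than the paper's terse version---you spell out the forward direction (which the paper omits) and make the three-way case analysis on $u$ explicit, whereas the paper simply asserts that attaching at a non-pendant vertex produces a forbidden $addVert$ witness.
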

 \begin{proof}
  \TWOSEVENCPUNDERSCOREFOURC
  Let $(x,i,j)$ be a tuple satisfying the above formula. We start with $C_i \cup C_j$ and add an edge to get a connected graph. Suppose we connect a vertex from $C_i$ of degree more than 1 to a vertex of $C_j$. The resulting graph contains a graph $g$ as subgraph which satisfies $addVert(g,C_{i \rightarrow 1})$ but does not satisfy $\mathcal{C}_{\rightarrow 2}(g)$. This contradicts the last condition of the formula. Hence the edge has to be added between the unique degree one vertices of $C_i$ and $C_j$.
 \end{proof}

We show the definability of $\tilde{\mathcal{G}}$ using the subgraph order and the $sameSize$ relation in the next subsection. 
 \subsection{Defining $\tilde{\mathcal{G}}$ in the structure ($\mathcal{G},\leq_s,sameSize$)}

 It is possible to define $\tilde{\mathcal{G}}$ in the subgraph order without use of $sameSize$, but the proof becomes messier. We define the necessary intermediate predicates next.
 \begin{lem}
 \label{lem:interPredTildeg}

	The following predicates are definable in the subgraph order:
	\begin{enumerate}
		\item $soc2(x,i,j)$ holds if and only if $x$ is made of the disjoint union of the cycles $C_i$ and $C_j$.
		\item $bicycle(x)$ holds if and only if $x$ is the connected graph formed by adding an edge to a graph $g$ which satisfies $soc2(g,i,j)$ for some values of $i,j$.
		\item $pointedCycleSum(x,i,j)$ holds if and only if $x$ is formed from the graph $g=C_i \cup C_j$ by addition of a vertex $v$, an edge connecting $v$ to $C_i$ and another edge connecting $v$ to $C_j$. We will denote the unique $x$ which is the pointed cycle sum of $C_i$ and $C_j$ by $C_i +_p C_j$.
    \item $csum(x,n)$ holds if and only if $n \in \mathcal{N}$ and $x = \bigcup_{i=1}^n C_{n+i+2}$.
	\end{enumerate}
 \end{lem}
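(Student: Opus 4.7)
I will treat the four predicates in turn, each built directly from the basic predicates of Lemma~\ref{lem:oldBasicPredSubgraph} and the already available definability of arithmetic over $\mathcal{N}$. For $soc2$, I would demand that $x$ be a disjoint union of cycles whose only cycle subgraphs are $C_i$ and $C_j$, of total cardinality $i+j$:
\[
soc2(x,i,j) := \mathcal{N}(i) \wedge \mathcal{N}(j) \wedge soc(x) \wedge |x|=i+j \wedge C_i \leq_s x \wedge C_j \leq_s x \wedge \forall y\,(\mathcal{C}(y) \wedge y \leq_s x \supset y = C_i \vee y = C_j).
\]
Since $soc(x)$ makes every component of $x$ a cycle and $C_m \leq_s C_k$ forces $m=k$, the last clause pins every component of $x$ to be $C_i$ or $C_j$, and the cardinality then forces exactly one copy of each.

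Building on this, I would take $bicycle(x) := conn(x) \wedge \exists y\, \exists i,j\, (soc2(y,i,j) \wedge y \lessdot_{se} x)$: adding a single edge to $C_i \cup C_j$ either stays inside one cycle (leaving the graph disconnected) or bridges the two cycles, so $conn(x)$ picks out exactly the intended graphs. For $pointedCycleSum$, I would witness the intermediate graph $z = C_i \cup C_j \cup \{v\}$ via $C_i \cup C_j \lessdot_{sv} z$ (the $\lessdot_{sv}$ cover adds exactly one isolated vertex), add two edges via $z \lessdot_{se}^2 x$, and use the freshly defined $bicycle$ predicate to rule out the unwanted extensions:
\[
pointedCycleSum(x,i,j) := conn(x) \wedge \exists z\,(C_i \cup C_j \lessdot_{sv} z \wedge z \lessdot_{se}^2 x) \wedge \forall w\,(bicycle(w) \supset w \nleq_s x).
\]
A case analysis on the two added edges completes the proof: if neither edge is incident to $v$ then $v$ is isolated (contradicting $conn(x)$); if both edges go from $v$ to the same cycle, the other cycle is split off; if at least one of the edges is a direct $C_i$--$C_j$ edge, then the two cycles together with that edge form a bicycle subgraph. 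The only remaining, and intended, case is one edge $v$--$C_i$ and one edge $v$--$C_j$, and this graph is free of bicycle subgraphs since every cycle in it must pass through $v$.

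For $csum$, the approach mirrors the analogous lemma in the induced order (Lemma~\ref{lem:disjointLargeCycles}): demand $n \in \mathcal{N}$, that every maximal component of $x$ is a cycle, that $C_m \leq_s x$ for each $n+3 \le m \le 2n+2$, and that $|x| = \sum_{i=1}^n (n+i+2)$, expressible as an arithmetic condition on $|x|$ and $n$ via condition ($C1$). Because $C_m \leq_s C_k$ forces $m=k$, each required indicator cycle must occupy its own cycle component, and the cardinality bound then rules out any further components.

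The main obstacle is the $pointedCycleSum$ case, where care is needed in the case analysis: the critical subcase is a graph built from $z$ by adding a $v$--$C_i$ edge together with a direct $C_i$--$C_j$ edge. This graph is connected, so $conn(x)$ alone does not exclude it; the argument must rely on observing that deleting $v$ leaves a genuine bicycle subgraph, which makes the bicycle-exclusion strong enough. The $csum$ case by contrast is a routine arithmetic bookkeeping exercise once condition ($C1$) is in hand.
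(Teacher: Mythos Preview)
Your proposal is correct and matches the paper's proof almost verbatim: the defining formulae you give for $soc2$, $bicycle$, $pointedCycleSum$, and $csum$ are exactly those used in the paper, and your correctness arguments follow the same case analyses. One small slip: in justifying that $C_i +_p C_j$ contains no bicycle subgraph you write ``every cycle in it must pass through $v$'', but $C_i$ and $C_j$ themselves are cycles not through $v$; the correct reason is that these are the \emph{only} cycles in $C_i +_p C_j$ (any cycle through $v$ would need a second path between $C_i$ and $C_j$, which does not exist) and there is no direct edge between them.
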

 \begin{proof}
	\FIVETHREEsocTWO
	Let ($x,i,j$) be a tuple satisfying the formula. Then $x$ is a $soc$ on $i+j$ vertices. Suppose $i \neq j$, then $C_i$ and $C_j$ are incomparable. Then it is sufficient to state that both $C_i$ and $C_j$ are subgraphs of $x$, because $C_i \cup C_j$ is also forced as a subgraph of $x$. Along with the cardinality constraint, this implies that $x$ is exactly $C_i \cup C_j$.

	 If $i=j$ and $i=n_1 \times n_2$, then the graph $C_i \cup C_{n_1} \cup ... \cup C_{n_1}$ i.e. the disjoint union of $C_i$ with $n_2$ copies of $C_{n_1}$ also satisfies the conditions encforced so far. Enforcing the condition that every cycle subgraph must be either $C_i$ or $C_j$ disallows this latter graph and the proof of correctness follows.
	\ONEONEbicycle
	The proof of correctness of the above formula follows from the definition.
	\FOURFIVEpointedCycleSum
	Let ($x,i,j$) be a tuple satisfying the formula. It is constructed from the graph $C_i \cup C_j$ by adding one more vertex $v$ and two more edges $e_1,e_2$. Suppose wlog that $e_1$ connects $v$ to one of the cycles $C_i$. Then $e_2$ must join the other cycle $C_j$ to either $v$ or $C_i$. Suppose it connects $C_i$ and $C_j$, then there is a bicycle subgraph of $x$ which is disallowed. Therefore $e_2$ must connect $C_j$ to $v$. Hence the graph $x$ is the required graph.

We can use the same formula as we used in \ref{lem:disjointLargeCycles} with the formula $allCycles'$ being the formula obtained by replacing occurrences of $\leq_i$ in $allCycles$ by $\leq_s$ to define $csum$:
  \TWOEIGHTcsum
  where
  \ZERONINEallCyclesPRIME
  The formula $maximalComp$ is from \ref{lem:predForCP4C} and $cardCond$ is by Observation \ref{obs:cardCond}.
  The formula $allCycles'$ which states that every cycle is present as a subgraph is equivalent to every cycle being present as an induced subgraph under the additional condition that every maximal component is a cycle.    
 \end{proof}

 Along with the predicates defined above, we will use some arithmetic. We note here that relations definable in arithmetic will be translated by use of Corollary \ref{cor:translateArithInduced} into graph formulae. These translated number theoretic relation are not to be confused with arithmetical graph relations; when we talk of using arithmetic in this section, we always mean number theoretic relations.

  By the fact that the subgraph order satisfies condition $C1$,we have:
 \begin{obs}
 \label{obs:cardCond}
	The following predicate is definable in the subgraph order:
	\ONEFIVEcardCond
 \end{obs}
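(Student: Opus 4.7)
The plan is to reduce this observation to the already-established definability of arithmetic in the subgraph order. The statement asserts that a particular polynomial equation between graph cardinalities is first-order definable, and this should follow directly from condition $C1$ together with the definability of the cardinality map to $\mathcal{N}$.

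First I would observe that the binary relation on natural numbers $R(n, m) \iff m = n^2 + n(n+1)/2 + 3n$ is a recursive relation on $\mathbb{N}$, and hence by Theorem~\ref{thm:recInArith} there exists an arithmetic formula $\phi_R(n, m)$ defining it in $(\mathbb{N}, \phi_+, \phi_\times)$. Applying Corollary~\ref{cor:translateArithInduced} then yields a graph formula $\psi_R^{trans}(u, v)$ over the subgraph order such that for $u, v \in \mathcal{N}$, the formula $\psi_R^{trans}(u, v)$ holds if and only if $|v| = |u|^2 + |u|(|u|+1)/2 + 3|u|$.

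Second, by Observation~\ref{obs:C3fromC1} the predicate $sameCard$ is definable in $(\mathcal{G}, \leq_s)$, and hence by Observation~\ref{lem:cardN} the term $|x|_g$, picking out the unique member of $\mathcal{N}$ with the same cardinality as $x$, is a definable function. Combining these, I would simply define
\[ cardCond(x,y) := \psi_R^{trans}(|x|_g, |y|_g), \]
which expresses exactly the desired polynomial relation between the cardinalities of $x$ and $y$.

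The main (and rather negligible) obstacle is just bookkeeping: one might equivalently unfold the translation and write the formula directly using existentially quantified witnesses in $\mathcal{N}$ for $|x|_g^2$, for $|x|_g(|x|_g+1)$, for its halving (which exists since the product of consecutive integers is always even), and for $3|x|_g$, then sum these via $\psi_+$ to equal $|y|_g$. Either way, there is no substantive technical difficulty, since the entire observation is a direct consequence of $C1$ combined with the definability of $|x|_g$.
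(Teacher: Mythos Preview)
Your proposal is correct and follows essentially the same approach as the paper: the paper simply remarks that the observation follows from the fact that the subgraph order satisfies condition $C1$ (definability of arithmetic) together with Corollary~\ref{cor:translateArithInduced}, without spelling out any further details. Your write-up is in fact more explicit than the paper's, but the underlying idea is identical.
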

 We now exhibit the defining formula for $\tilde{\mathcal{
 G}}(x)$.

 \begin{lem}
	\label{lem:definingTildegSubgraph}
	The predicate $\tilde{\mathcal{G}}(x)$ is definable in the structure ($\mathcal{G},\leq_s,sameSize$).
 \end{lem}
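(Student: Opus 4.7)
The plan is to define $\tilde{\mathcal{G}}(x)$ by mirroring the induced-subgraph formula of Lemma~\ref{lem:defineTildegInduced}, translating each clause to the subgraph order using the intermediate predicates $csum$, $addVert$, $\mathcal{C}_{\rightarrow 1}$, $\mathcal{C}_{\rightarrow 2}$, $bicycle$, $pointedCycleSum$ already shown definable, together with $sameSize$ for edge bookkeeping. The formula will assert that there exists $n \in \mathcal{N}$ such that $cardCond(x,n)$ holds, the disjoint union $\bigcup_{i=1}^n C_{n+i+2}$ occurs inside $x$ as a subgraph (via $csum$), and each indicator cycle has at least one outgoing pendant edge so that $C_{n+i+2 \rightarrow 1} \leq_s x$ for $1 \le i \le n$.

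On top of these positive conditions I impose three negative clauses that forbid the standard pathological configurations. First, a \emph{no multi-edge} clause says that for every $i$ and every $z$ with $addVert(z, C_{n+i+2 \rightarrow 1})$ and $z \neq C_{n+i+2 \rightarrow 2}$ we have $z \nleq_s x$; this replaces the induced-cover step used in the induced-subgraph formula by the two-step subgraph cover packaged as $addVert$, which is exactly the subgraph-order analogue of ``extend $C_{\rightarrow 1}$ by one vertex and one edge''. Second, a \emph{no pointed cycle sum} clause $C_{n+i+2} +_p C_{n+j+2} \nleq_s x$ for $1 \le i < j \le n$ forbids two distinct indicator cycles from being attached to a single common external vertex. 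Third, a \emph{no large bicycle} clause forbids any bicycle subgraph whose two cycles both have sizes in the indicator range, ruling out direct edges between indicator cycles.

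The correctness argument then runs parallel to the induced-subgraph proof: the cardinality condition forces exactly $n$ vertices outside the indicator cycles; each $C_{\rightarrow 1}$ pendant forces at least one outgoing edge per indicator cycle, which the no-multi-edge clause pins to exactly one; the no-pointed-cycle-sum clause distributes these $n$ edges among $n$ distinct external vertices; and the no-large-bicycle clause forbids any direct edge between indicator cycles. The $sameSize$ relation is invoked, if needed, to match the total edge count of $x$ against the expected count for a genuine o-presentation, ruling out any hidden extraneous edges between indicator cycles that might escape the structural clauses above.

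The main obstacle is the finer granularity of the $\leq_s$ cover relation: adding a vertex together with an edge is a single induced-cover step but two subgraph-order cover steps, so the no-multi-edge clause must use $addVert$ rather than the cover relation directly. A secondary subtlety is restricting the bicycle and pointed-cycle-sum clauses to indicator-sized cycles, so that legitimate bicycle-shaped or pointed-cycle-sum-shaped subgraphs arising from small cycles internal to the underlying graph $y$ are not inadvertently excluded; this is handled by indexing the cycle parameters explicitly in the range $n+3$ to $2n+2$ rather than quantifying over arbitrary bicycle subgraphs.
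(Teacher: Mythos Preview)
Your translation has a genuine gap: the $addVert$-based ``no multi-edge'' clause does not rule out \emph{chords} inside an indicator cycle. Take $n=2$, start from a genuine o-presentation of $N_2$ (so the indicator cycles are $C_5$ and $C_6$, each with a pendant to one of the two base vertices $v_1,v_2$), and add the chord $u_1u_4$ to the $C_6$. This graph is not an o-presentation, yet it satisfies all of your clauses. The cardinality is right; $C_5\cup C_6$ and each $C_{\rightarrow 1}$ are still present as subgraphs; with this particular chord there is no $C_5$ inside the chorded $C_6$, so no illicit pointed cycle sum or indicator-sized bicycle arises; and your $addVert$ clause is vacuous because no vertex outside $\{u_1,\dots,u_6,v_2\}$ is adjacent to that set, so no connected one-vertex extension of the embedded $C_{6\rightarrow 1}$ exists in $x$. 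The problem is that $addVert$ only detects an \emph{external} vertex glued on by a single edge, whereas a chord is an \emph{internal} extra edge among the vertices already hosting $C_{\rightarrow 1}$.

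Your $sameSize$ fallback does not rescue this: the fake graph has $14$ edges, exactly the edge count of a legitimate o-presentation of $K_2$, so the extra chord simply masquerades as an edge of the unknown base graph $g$. Since $\|g\|$ is not determined by $n$, edge counting cannot separate a chord from a base-graph edge. The paper closes this gap differently: it first forces the disjoint union $\bigcup_{i=1}^{n} C_{n+i+2\rightarrow 1}$ (not just the bare cycles) to be a \emph{spanning} subgraph of $x$, and then imposes an $indicatorsInduced$ clause forbidding any $\lessdot_{se}$-cover of each $C_{n+i+2\rightarrow 1}$ from occurring in $x$. That edge-cover prohibition (rather than your vertex-plus-edge prohibition) is exactly what kills chords and extra pendant--cycle edges, forcing each pointed cycle block to be induced; the remaining $noBicycles$ and $noPointedCycleSum$ clauses then behave as you describe.
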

 \begin{proof}
For any o-presentation $x \in \tilde{g}$, its cardinality is related to the cardinality of  $g$ by $|x|=|g|^2 + |g|(|g|+1)/2 + 3|g|$. In order for an arbitrary $x$ be an o-presentation, it is necessary and sufficient to show that the graph $x$ is formed by adding some number of edges to $g'=\bigcup_{i=1}^{|g|} C_{|g|+i+2 \rightarrow 1}$ such that each new edge is constrained to be between the degree 1 vertices of the $C_{|g|+i+2 \rightarrow 1}$. We will first construct $g'$ given a parameter $|g|$ and use it to define $\tilde{\mathcal{G}}$.

    The predicate $csumHook(x,n)$ which holds if and only if $x = \bigcup_{i=1}^{n} C_{n+i+2 \rightarrow 1}$ is defined by the following formula:
    \begin{flalign*}
      csumHook(x,n) := & \mathcal{N}(n) \; \wedge \; \exists y \; csum(y,n) \; \wedge \; |x|=|y| +n \; \wedge \; ||x||=||y||+n \\
      & \wedge \; y \leq_s x \; \wedge \; \forall i\; (1 \leq i \leq n) \supset \; C_{n+i+2 \rightarrow 1} \leq_s x.
    \end{flalign*}
    The only way to get every graph $C_{n+i+2 \rightarrow 1}$ as a subgraph by adding $n$ vertices and $n$ edges to $\bigcup_{i=1}^n C_{n+i+2}$ is by connecting each new vertex to exactly one of the cycles. 

We can now define $\tilde{\mathcal{G}}$:
 \begin{flalign*}
      \tilde{\mathcal{G}}(x) :=  &\exists n \in \mathcal{N} \; cardCond(x,n) \; \wedge \;  \bigcup_{i=1}^n C_{n+i+2 \rightarrow 1} \leq_s x \\
      & \wedge \; indicatorsInduced(x,n) \; \wedge \; noBicycles(x,n) \\
      & \wedge \; noPointedCycleSum(x,n) \\
      & \text{where} \\
      cardCond(x,y) :=&  |y|= |x|^2 + |x|(|x|+1)/2 + 3|x|\\
      indicatorsInduced(x,n):= & \forall y \; \forall (1 \leq i \leq n) C_{n+i+2 \rightarrow 1} \leq_{se} y \supset \neg(y \leq_s x) \\
      noBicycles(x,n) := & \forall y \; \forall (1 \leq \; i, j \; \leq n) \; (bicycle(y,i,j)  \; \supset y \nleq_s x)\\
      noPointedCycleSum(x,n):= & \forall (1 \; \leq i, j \; \leq n) \; C_{n+i+2} +_p C_{n+j+2} \nleq_s x
\end{flalign*}
 The formula $cardCond$ enforces a cardinality condition that ensures that the every vertex of $x$ has to be used to witness the fact that $\bigcup_{i=1}^n C_{n+i+2 \rightarrow 1}$ is a subgraph. We now need to make sure that none of the extra edges occur and that $\bigcup_{i=1}^n C_{n+i+2 \rightarrow 1}$ occurs as an induced subgraph. The edges inside a particular $C_{n+i+2 \rightarrow 1}$ are forbidden by $indicatorsInduced$, edges between two indicator cycles are forbidden by $noBicycles$ and $noPointedCycleSum$ ensures that two cycles are not connected to the a vertex which represents a degree 1 vertex in a $C_{n+i+2 \rightarrow 1}$. Thus the only extra edges allowed are between the degree 1 vertices of the $C_{n+i+2 \rightarrow 1}$.
 \end{proof}

We take up the third and last of the three predicates required to establish condition $C2$ for the structure ($\mathcal{G},\leq_s,disjointUnion,sameSize$) in the following subsection.
 \subsection{Defining $constructFromCycles$ in the Structure ($\mathcal{G},\leq_s,disjointUnion, sameSize$)} 
 \label{sub:DefiningConstructFromCycles}
 The relation $constructFromCycles(y,x)$ states that $y$ is formed by adding $|x|$ number of edges to the graph obtained by the disjoint union of $x$ with all its indicator cycles.  The construction of the disjoint union of a set of indicator cycles has already been shown in Lemma \ref{lem:interPredTildeg}, but defining arbitrary disjoint union and counting the number of edges of a graph take more work.

 \begin{lem}
	\label{lem:defineConstructFromCyclesConditional}
	The predicate $constructFromCycles$ is definable in \\ ($\mathcal{G},\leq_s,disjointUnion,sameSize$). 
 \end{lem}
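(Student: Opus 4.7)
The plan is to realize the informal description of $constructFromCycles(y,x)$ as a conjunction of three conditions: (i) there exists an auxiliary graph $z$ which is the disjoint union of $x$ together with all of its indicator cycles, (ii) $z$ has the same number of vertices as $y$, and (iii) $y$ has exactly $|x|$ more edges than $z$. Each of these three conditions can be expressed with the tools already at our disposal.

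For (i), I would use the predicate $csum(g,n)$ from Lemma \ref{lem:interPredTildeg}, which holds precisely when $g = \bigcup_{i=1}^n C_{n+i+2}$. Setting $n = |x|$ (the number-of-vertices witness of $x$, available via $sameCard$ and the family $\mathcal{N}$), I obtain the disjoint union of the indicator cycles of $x$. Then the ternary predicate $disjointUnion(z,x,g)$ forces $z$ to be exactly $x$ unioned with this graph. Condition (ii) is immediate from the predicate $sameCard(z,y)$, which is already definable from $C1$ by Observation \ref{obs:C3fromC1}.

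Condition (iii) is the part that requires real use of the new relation $sameSize$. The idea is to transfer edge counts into the arithmetic on $\mathcal{N}$. Since the family of stars $\mathcal{S}$ is definable (Lemma \ref{lem:oldBasicPredSubgraph}) and a star on $n+1$ vertices has exactly $n$ edges, I define an auxiliary predicate
\[
  edgeCount(g,n) := (\mathcal{N}(g) \wedge n = 0) \; \vee \; (\exists s \; \mathcal{S}(s) \wedge |s| = n+1 \wedge sameSize(g,s)),
\]
which holds iff $n \in \mathcal{N}$ is the number of edges of $g$. Then the equation $\|z\| + |x| = \|y\|$ becomes
\[
  \exists n_z,n_y,n_x \; edgeCount(z,n_z) \wedge edgeCount(y,n_y) \wedge card(x,n_x) \wedge \psi_{+}(n_z,n_x,n_y),
\]
using that $\psi_+$ is available by $C1$. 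Stitching everything together:
\begin{flalign*}
  constructFromCycles(y,x) := & \exists g,z,n_x,n_z,n_y \; card(x,n_x) \wedge csum(g,n_x) \wedge disjointUnion(z,x,g) \\
  & \wedge \; sameCard(z,y) \wedge edgeCount(z,n_z) \wedge edgeCount(y,n_y) \\
  & \wedge \; \psi_{+}(n_z,n_x,n_y).
\end{flalign*}

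The main (and essentially only) obstacle is the edge-count condition (iii), since $\leq_s$ alone does not give direct access to edge cardinality. The trick of using $sameSize$ to import edge counts from stars converts the problem to arithmetic on the already-definable family $\mathcal{N}$, at which point $\psi_+$ closes the argument. Correctness is then straightforward: the $csum$ and $disjointUnion$ clauses pin down $z$ up to isomorphism, $sameCard(z,y)$ is the vertex-cardinality match, and $edgeCount$ together with $\psi_+$ faithfully expresses $\|z\|+|x|=\|y\|$, which is exactly the condition in the definition of $constructFromCycles$.
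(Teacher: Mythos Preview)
Your proposal is correct and follows essentially the same three-step decomposition as the paper: build $z$ as $x$ unioned with its indicator cycles (via $csum$ and $disjointUnion$), match vertex counts, and impose the edge equation $\|z\|+|x|=\|y\|$. The one place where you add something is the explicit star-based definition of $edgeCount$ from $sameSize$; the paper simply asserts that once $sameSize$ is available one can ``write formulae such as $\|z\|+|x|=\|y\|$'' (with a somewhat forward-looking reference to $countEdges$), whereas you give a concrete and self-contained mechanism for this step, which is a minor expository improvement but not a different proof.
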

 \begin{proof}
 \begin{obs}
	\label{obs:sameSize}
	We are able to do arithmetic operations on the number of edges once the predicate $countEdges$ has been defined. The predicate $countEdges$ enables us to define the predicate $sameSize(x,y)$ which holds if and only if $x$ and $y$ have the same number of edges. We will denote this by $||x||=||y||$.
	\FOUREIGHTsameSize
 \end{obs} 
 From the above observation, it follows that we can write formulae such as $||z||+|x|=||y||$.
 We can now define $constructFromCycles$:
 	\TWOONEconstructFromCycles
 	\end{proof}

 
Using the above lemma and Lemmas \ref{lem:definingCP4C}, \ref{lem:definingTildegSubgraph} and \ref{lem:condC2Proof} we can conclude that ($\mathcal{G},\leq_s,disjointUnion,sameSize$) satisfies condition ($C2$). We have already shown that the subgraph order satisfies conditions ($C1$) and ($C3$) and so we get the following result:

 \begin{lem}
 \label{lem:subDisSamCapable}
   The structure ($\mathcal{G},\leq_s,disjointUnion,sameSize$) is capable.
 \end{lem}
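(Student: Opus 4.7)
The plan is to verify each of the three conditions $(C1)$, $(C2)$, $(C3)$ of Definition \ref{def:capableStructure} for the structure $(\mathcal{G},\leq_s,disjointUnion,sameSize)$, and this essentially reduces to citing the work already done in this subsection together with prior work. The proof will be short and synthetic, since almost all of the technical labour has already been carried out.

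First I would dispense with $(C1)$ by appealing to \cite{ramanujam2016definability}, where the definability of $\mathcal{N}$, $\psi_{+}$ and $\psi_{\times}$ in the plain subgraph order $(\mathcal{G},\leq_s)$ has already been established; these formulae remain valid in any expansion of the subgraph order, in particular in $(\mathcal{G},\leq_s,disjointUnion,sameSize)$. Next I would handle $(C3)$: by Observation \ref{obs:C3fromC1}, the predicate $sameCard(x,y)$ is definable in $(\mathcal{G},\leq_s)$ from $(C1)$ via the one-line formula $\forall z\; \mathcal{N}(z)\supset (z\leq_s x \iff z \leq_s y)$. Again this extends verbatim to the expansion.

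The substantive content is $(C2)$, and here the work has been packaged as a conditional result in Lemma \ref{lem:condC2Proof}: assuming definability of the three auxiliary predicates $CP_4C(x,i,j)$, $\tilde{\mathcal{G}}(x)$ and $constructFromCycles(y,x)$ in $(\mathcal{G},\leq_s,disjointUnion,sameSize)$, the formulae $\psi_{opres}$ and $\psi_{edgeOP}$ are definable in this structure. I would simply invoke Lemma \ref{lem:definingCP4C} (which defines $CP_4C$ already in $(\mathcal{G},\leq_s)$), Lemma \ref{lem:definingTildegSubgraph} (which defines $\tilde{\mathcal{G}}$ in $(\mathcal{G},\leq_s,sameSize)$), and Lemma \ref{lem:defineConstructFromCyclesConditional} (which defines $constructFromCycles$ in $(\mathcal{G},\leq_s,disjointUnion,sameSize)$), to discharge the three hypotheses of Lemma \ref{lem:condC2Proof}. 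All three lemmas hold in the ambient structure since that structure contains every symbol used in their defining formulae.

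Combining $(C1)$, $(C2)$ and $(C3)$ yields the conclusion by the definition of a capable structure. There is no real obstacle to this proof: all the heavy lifting was done in the preceding four subsections, and the only thing to be careful about is bookkeeping, namely to confirm that every auxiliary formula cited lives in the vocabulary $\{\leq_s, disjointUnion, sameSize\}$ (so that nothing smuggled in from $\leq_i$ or any other unavailable symbol is used). Once this is observed, the lemma follows in one line.
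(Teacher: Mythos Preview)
Your proposal is correct and follows essentially the same approach as the paper: the paper's argument for this lemma is precisely the one-sentence summary preceding the lemma statement, which combines Lemmas \ref{lem:definingCP4C}, \ref{lem:definingTildegSubgraph}, \ref{lem:defineConstructFromCyclesConditional} and \ref{lem:condC2Proof} for $(C2)$, and invokes the earlier results for $(C1)$ and $(C3)$. Your write-up is a faithful expansion of this, with the only omission being an explicit mention that the structure is arithmetical (handled globally in the paper by the corollary following Theorem \ref{thm:recInArith}).
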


 To complete the proof that the subgraph order is capable, we need to define $disjointUnion$ and $sameSize$ in the subgraph order.

\subsection{Defining $disjointUnion$ and $sameSize$ in the Subgraph Order} 
\label{sub:definingDisSamInSub}

 \subsubsection{Definability of $disjointUnion$}

 The strategy employed to define $disjointUnion$ uses a predicate $maxCopies$ which counts the number of vertex disjoint copies of a graph present as a subgraph in another graph. Using $maxCopies$, we ensure that the multiplicity with which a component occurs in the graph $x \cup y$ is the same as the multiplicity with which it occurs in a given graph $z$.

 \begin{lem}
 \label{lem:defineMultCopies}
 	The following predicates are definable in the subgraph order:
 	\begin{enumerate}
 	 	\item $mult(x,y,n)$ holds if and only if $y$ is a connected graph and $x$ is the disjoint union of $n$ many components, each of which is $y$. 
 	 	\item $maxCopies(x,y,n)$ holds if and only if $y$ is connected and the graph $g$ satisfying $mult(g,y,n)$ is a subgraph of $x$ but not the graph $g'$ satisfying $mult(g',y,n+1)$.
 	 \end{enumerate}
 \end{lem}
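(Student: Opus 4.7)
The plan is to first define a two-argument predicate $mult(x,y)$ expressing ``$x$ is a disjoint union of some number of copies of the connected graph $y$'', then extend it to the three-argument $mult(x,y,n)$ by tacking on a cardinality condition $|x| = n \cdot |y|$, and finally obtain $maxCopies(x,y,n)$ as an immediate consequence. All number-theoretic operations used (products of cardinalities, equalities between cardinalities) are available via condition $C1$, which the subgraph order is already known to satisfy.

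For $mult(x,y)$, I would handle the degenerate case $x = \emptyset_g$ separately using the predicate $zero(x)$ (definable as the minimum of $\leq_s$). Otherwise, I impose three conditions: $y$ is connected, $y$ is the maximum component of $x$ (using $maximumComp$ from Lemma~\ref{lem:predForCP4C}), and two refinements that together pin down the ``uniform disjoint copies'' shape. The first refinement, $uniqueCompCard(x,y)$, asserts that for every upper cover $x \lessdot_{se} z$ obtained by adding one edge, every maximal component of $z$ has cardinality either $|y|$ or $2|y|$. The second, $edgeMaximal(x,y)$, asserts that for every such upper cover $z$, the graph $y$ is no longer the maximum component of $z$.

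The key technical point is that these two conditions jointly force each component of $x$ to be isomorphic to $y$. For cardinality: adding an edge inside a component preserves the component's cardinality, so if some component had cardinality $k < |y|$, a one-edge extension would have a maximal component of cardinality $k \notin \{|y|, 2|y|\}$, contradicting $uniqueCompCard$. Hence every component has exactly $|y|$ vertices, and adding an edge between two components produces one of size $2|y|$, which is allowed. For the isomorphism: if some component $c$ of $x$ satisfied $c <_s y$ with $|c| = |y|$, then there would be an edge of $y$ missing from $c$; adding that edge would yield a graph whose maximum component is still $y$, contradicting $edgeMaximal$. Combined with $maximumComp(x,y)$ (which already forces each component to embed into $y$), we conclude that every component of $x$ is isomorphic to $y$.

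With $mult(x,y)$ in hand, set $mult(x,y,n) := (zero(x) \wedge n = 0) \vee (mult(x,y) \wedge |x| = n \cdot |y|)$. Then $maxCopies(x,y,n)$ is captured by asserting the existence of a witness $z$ with $mult(z,y,n)$ and $z \leq_s x$ together with the non-existence of any $z'$ with $mult(z',y,n+1)$ and $z' \leq_s x$. The main obstacle in the above is the correctness argument for $mult(x,y)$; all other ingredients ($conn$, $maximalComp$, $maximumComp$, $\lessdot_{se}$, cardinality arithmetic) are already available from Lemma~\ref{lem:oldBasicPredSubgraph}, Lemma~\ref{lem:predForCP4C}, and condition $C1$.
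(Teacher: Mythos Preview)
Your overall approach is exactly the paper's: the same two-argument $mult(x,y)$ built from $maximumComp$, $uniqueCompCard$, and $edgeMaximal$, the same passage to $mult(x,y,n)$ via $|x| = n\cdot|y|$, and the same definition of $maxCopies$. Your isomorphism step (using $edgeMaximal$) is also correct.

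The gap is in your cardinality argument. You propose to add an edge \emph{inside} a component $c$ with $|c| = k < |y|$ and conclude that the resulting edge-cover has a maximal component of cardinality $k$. This fails for two reasons. First, $c$ may already be a clique, so there is no internal edge to add. Second, even when such a $c'$ exists, it need not be maximal in the sense of $maximalComp$: since $maximumComp(x,y)$ forces $c \leq_s y$, one can easily have $c' \leq_s y$ as well (take $y = K_5$, $c = P_3$, $c' = K_3 \leq_s K_5$), so $c'$ is dominated by the still-present component $y$ and $uniqueCompCard$ is not violated. The correct move, which is what the paper does, is to add an edge \emph{between} the component $y$ and the component $c$. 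The merged piece $c'$ has order $|y| + |c|$ and contains $y$, so it is automatically a maximal (indeed the maximum) component of the edge-cover; now $uniqueCompCard$ forces $|y| + |c| \in \{|y|, 2|y|\}$, whence $|c| = |y|$.
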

 We will refer to the unique $n$ satisfying $maxCopies(x,y,n)$ (where $x$ and $y$ are fixed graphs) as the ``the number of copies of $y$ in $x$'' and will just write $maxCopies(x,y)$ to denote $n$.
 \begin{proof}
 	First we define the relation $mult(x,y)$ which holds if and only if $x$ is made of up some arbitrary number of components, each of which is $y$.
 	\THREEEIGHTmult
 	where
  \[ zero(x) := x = \emptyset_g \]
 	\THREESEVENmaximumComp
 	\FIVEFIVEuniqueCompCard
 	\THREEONEedgeMaximal
  The subformula $zero$ is necessary for the boundary case where $x=\emptyset_g$ which is intended to be the case where $x$ is made of zero number of copies of $y$. 
 	The rest of the defining formula for $mult(x,y)$ enforces three conditions: 
 	\begin{enumerate}
 	 \item $maximumComp(x,y)$: $y$ is a maximum component of $x$ i.e. for any component $c$ of $x$, it is the case that $c \leq_s y$.
 	 \item $uniqueCompCard(x,y)$ : Adding any edge to $x$ gives a graph $z$ such that every maximal component of $z$ has cardinality either $|y|$ or $2|y|$.
 	 \item $edgeMaximal(x,y)$: Adding any edge to $x$ ensures that $y$ is no longer the maximum component of the new graph.
 	\end{enumerate}
 	The proof of correctness of the formulae enforcing these three conditions is straightforward. We now show that they suffice to define $mult$.
 	Let ($x,y$) satisfy the defining formula. We have to show that $x$ is the disjoint union of some arbitrary number of $y$. The predicate $maximumComp(x,y)$  implies that every component of $x$ is a subgraph of $y$. Thus $y$ occurs as a component of $x$ and for any component $c$ of $x$, $c$ is a subgraph of $y$. 

 	Suppose $x$ has a component $c$ which is not $y$. Consider $y \cup c$ which is a subgraph of $x$. Adding an edge between $y$ and $c$ gives us an edge cover $z$ of $x$ which has a component $c'$ of order $|y|+|c|$. Since $y$ is the maximum component of $x$, $c'$ is a maximal component of $z$. Hence by $uniqueCompOrder$ which states that any edge cover of $x$ has the property that every maximal component has order $|y|$ or $2|y|$, we get $|c|=|y|$. But we chose an arbitrary $c$, so all components in $x$ have order $|y|$. Thus any component $c$ of $x$ is obtained by deletion of some number of edges from $y$ while making sure the resulting graph is still connected.

 	Let $c$ be a component which is a strict subgraph of $y$. We can add an edge to $c$ to get $c'$ which is still a subgraph of $y$. The edge cover of $x$ thus formed still has as its maximum component $y$, which contradicts the formula $edgeMaximal$. Hence $c$ is not a strict subgraph of $y$. Hence all components in $x$ are in fact, $y$.

 	Using $mult(x,y)$ we can define the predicate $mult(x,y,n)$ which holds if and only if $x$ is made of exactly $n$ many $y$. This is because we can use arithmetic to get $n=\frac{|x|}{|y|}$.

 	We can now define $maxCopies$:
 	\TWOZEROcopies
 \end{proof}
 We are now ready to define disjoint union.
 \begin{lem}
	\label{lem:disjointUnion}
	The predicate $disjointUnion(z,x,y)$ which holds if and only if $z=x \cup y$ is definable in the subgraph order.
 \end{lem}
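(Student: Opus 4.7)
The plan is to define
\[ disjointUnion(z,x,y) := \forall z'\; \bigl(conn(z') \supset (compUnionInZ(z,x,y,z') \wedge compZInUnion(z,x,y,z'))\bigr), \]
where the two inner subformulae are as declared earlier in the paper. The intuition is that, since any subgraph copy of a connected graph $z'$ sits inside a single connected component, the counting function $maxCopies(\cdot,z')$ is additive over disjoint unions; imposing this additivity for every connected $z'$ relevant to the three graphs involved pins $z$ down to be the disjoint union of $x$ and $y$.

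For soundness, suppose $z = x \cup y$ and fix a connected $z'$. Any copy of $z'$ as a subgraph of $z$ lies wholly in the $x$-part or wholly in the $y$-part, since $z'$ is connected and $z$ has no edges between the two parts. Hence a maximum vertex-disjoint packing of $z'$ in $z$ splits uniquely into packings in $x$ and $y$, giving $maxCopies(z,z') = maxCopies(x,z') + maxCopies(y,z')$. This equality holds for every connected $z'$ and therefore in particular under the antecedents of both $compUnionInZ$ and $compZInUnion$.

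For completeness, assume $(z,x,y)$ satisfies the formula, and set $w := x \cup y$. By soundness, $(w,x,y)$ satisfies it too, and by the unconditional additivity just observed,
\[ maxCopies(w,z') = maxCopies(x,z') + maxCopies(y,z') \]
for every connected $z'$. The hypothesis on $z$ supplies the same identity whenever $z' \leq_s x$, $z' \leq_s y$, or $z' \leq_s z$; for any connected $z'$ lying outside all three of these, both sides of the comparison with $maxCopies(w,z')$ are trivially zero. Observing that for connected $z'$ the condition $z' \leq_s w$ is equivalent to $z' \leq_s x$ or $z' \leq_s y$, we conclude $maxCopies(z,z') = maxCopies(w,z')$ for every connected $z'$.

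The main obstacle is then to conclude that this common sequence of $maxCopies$ values forces $z = w$. I would handle this by downward induction on cardinality over the component multiplicities: letting $M(g,c)$ denote the number of components of $g$ isomorphic to the connected graph $c$, one has $maxCopies(g,c) = \sum_{c'} M(g,c') \cdot maxCopies(c',c)$, with the matrix $(maxCopies(c',c))$ indexed by connected graphs being triangular with respect to cardinality -- it vanishes when $|c'| < |c|$ and equals $1$ on the diagonal. Hence the multiplicities $M(g,\cdot)$ are uniquely recoverable from $maxCopies(g,\cdot)$, forcing $z$ and $w$ to share the same multiset of components, i.e.\ $z = w = x \cup y$ in $\mathcal{G}$.
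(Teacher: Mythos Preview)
Your argument is essentially the paper's: the defining formula is identical, the soundness direction is the same, and completeness is reduced to the lemma that two graphs agreeing on $maxCopies(\cdot,z')$ for all connected $z'$ must be equal. The only difference is in how you justify that lemma: the paper strips off a maximal component (showing it must be a maximal component of both graphs, with the same multiplicity) and recurses, whereas you invert the linear relation $maxCopies(g,c)=\sum_{c'} M(g,c')\,maxCopies(c',c)$.

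One point to tighten: the matrix $(maxCopies(c',c))$ is \emph{not} triangular with respect to cardinality alone, since distinct connected graphs of the same order can be $\leq_s$-comparable (e.g.\ $maxCopies(K_3,P_3)=1$). What is true is that $maxCopies(c',c)=0$ unless $c\leq_s c'$, so the matrix is unitriangular with respect to any linear extension of $\leq_s$ (for instance, order by $(|c|,\|c\|)$). With that refinement your inversion goes through; the ``downward induction on cardinality'' then needs a secondary induction on edge count within each cardinality level.
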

 \begin{proof}
	\TWONINEdisjointUnion
	where
	\ONESEVENcompUnionInZ
	\ONEEIGHTcompZInUnion

	Consider any component $c$ of $x \cup y$. Such a $c$ must be a subgraph of either $x$ or $y$ since any component of $x \cup y$ must be a component of either $x$ or $y$ (or both). Hence by $compUnionInZ$, the number of copies of $c$ in $x \cup y$ which is the same as the sum of the number of copies of $c$ in $x$ and $y$, is the number of copies of $c$ in $z$.\\
	By a similar arguement, $compZInUnion$ enforces that the number of copies of a component $c$ in $z$ is the sum of number of its copies in $x$ and $y$ i.e. the number of copies in $x \cup y$.
	We now observe that:
	
  \begin{obs}
    \label{obs:copiesIso}
    For any two graphs $g_1$ and $g_2$, $g_1=g_2$ iff for any component $c$ of $g_1$, $maxCopies(g_1,c)=maxCopies(g_2,c)$ and for any component $c'$ of $g_2$, $maxCopies(g_1,c')=maxCopies(g_2,c')$. 
  \end{obs}
  The forward direction is obvious, we consider the reverse. Let $g_1 \neq g_2$ be two graphs of smallest order which form a counterexample. Let $c$ be a maximal component of $g_1$, then it must also be a maximal component of $g_2$; otherwise either there exists a component $c'$ of $g_2$ which is a supergraph of $c$ or $c$ is not a subgraph of $g_2$. In the first case $maxCopies(g_2,c') >0$ but $maxCopies(g_1,c') =0$, in the second $maxCopies(g_2,c)=0$ but $maxCopies(g_1,c) >0$. Further, $maxCopies(g_1,c)=maxCopies(g_2,c)$ by assumption and let this number be $n_c$. Clearly $g_1=n_cc \cup g_1'$ and $g_2=n_cc \cup g_2'$ i.e. $g_1$ is the disjoint union of a graph $g_1'$ which does not contain $c$ as subgraph with the graph $n_cc$ which is the disjoint union of $n_c$ copies of $c$. Similarly for $g_2$. This implies that $g_1' \neq g_2'$ which contradicts the assumption that $g_1,g_2$ form the smallest counterexample and the observation follows.

	We have shown above that $x \cup y$ and $z$ satisfy the property in the above observation and hence $z=x \cup y$.  This concludes the proof of definability of disjoint union.
 \end{proof}

 The definability of disjoint union immediately gives us the definability of another natural graph theoretic predicate:
 \begin{cor}
 	\label{cor:definingComp}
 	The predicate $comp(y,x)$ which holds if and only if $x$ is a component of $y$ is definable in the subgraph order.
 	 \ONESIXcomp
 \end{cor}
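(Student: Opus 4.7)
The plan is to verify the correctness of the proposed formula
\[ comp(y,x) := conn(x) \; \wedge \; \exists z \; x \cup z = y, \]
which is already shown in the statement. The formula is a direct first-order combination of two predicates previously established as definable in the subgraph order: the unary $conn$ from Lemma~\ref{lem:oldBasicPredSubgraph}, and the ternary $disjointUnion$ from Lemma~\ref{lem:disjointUnion}, using the shorthand $x \cup z = y$ for $disjointUnion(y,x,z)$. Definability of $comp$ is therefore immediate provided the formula captures the intended semantics, so the only remaining task is a short semantic check of both directions.

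First I would argue the forward direction. Suppose $x$ is a component of $y$, that is, $x$ is the isomorphism type of a maximal connected subgraph of $y$. Then $x$ is connected, so $conn(x)$ holds. Let $z$ be the isomorphism type of the subgraph of $y$ induced by the vertices lying outside the chosen copy of $x$. Because $x$ is a maximal connected subgraph, no edge of $y$ joins a vertex of $x$ to a vertex of $z$; consequently $y$ is the disjoint union of $x$ and $z$, which witnesses the existential via $disjointUnion(y,x,z)$.

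For the converse, suppose $conn(x)$ holds and some $z$ satisfies $y = x \cup z$. Then the vertex set of $y$ partitions into the vertices of $x$ and those of $z$, with no edges crossing the partition. Since $x$ is connected and has no edges to the rest of $y$, its vertex set spans a maximal connected subgraph of $y$, so $x$ is (up to isomorphism) a component of $y$, as required.

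I do not anticipate a genuine obstacle here: the substantive technical work has already been done in establishing definability of $disjointUnion$ in Lemma~\ref{lem:disjointUnion}. The corollary is essentially a packaging step translating the graph-theoretic definition of ``connected component'' into a first-order formula over the vocabulary now available. The only minor point to keep in mind is the convention from Remark~\ref{rem:notation} that variables range over isomorphism types of graphs rather than labelled graphs, so that both $x \cup z$ and ``component of $y$'' are interpreted up to isomorphism; this matches the semantics of $disjointUnion$ and causes no difficulty.
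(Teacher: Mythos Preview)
Your proposal is correct and matches the paper's approach exactly: the paper presents this corollary as an immediate consequence of the definability of $disjointUnion$ (Lemma~\ref{lem:disjointUnion}), giving only the formula with no further argument. Your semantic verification of both directions is precisely the routine check the paper leaves implicit.
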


 We take up the definability of $sameSize$ in the subgraph order next.

 \subsubsection{Definability of $sameSize$}
 By Observation \ref{obs:sameSize}, it is sufficient to define the predicate $countEdges(x,n)$ which holds if and only if $x$ has $n$ many edges. We will break this problem up into two parts: counting the number $n$ of edges of a connected graph $x$  using the predicate $countEdgesConn(x,n)$, and counting the number $n$ of components of a graph $x$ using the predicate $countComps(x,n)$. 

 Let a graph $g$ have $m_2$ many components. The minimum number of edges to be added to $g$ to get a connected graph $g'$ is $m_2-1$. It is clear that the number of edges in $g,g'$ is related by $||g|| + m_2 -1=||g'||$. Thus using $countEdgesConn$ and $countComps$ we can define the required predicate $countEdges$.
 \begin{lem}
 	\label{lem:countEdgesConn}
 	The predicate $countEdgesConn(x,n)$ if and only if $x$ is a connected graph and $n$ is the number of edges of $x$ is definable in the subgraph order.
 \end{lem}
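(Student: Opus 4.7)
The plan is to associate $\|x\|$ with the length of a maximal chain of edge-additions from $x$ up to the clique $K_{|x|}$ on the same vertex set. Such a chain has exactly ${|x| \choose 2}-\|x\|+1$ graphs, and its length can be read off from a suitable witness graph $y$ whose components are precisely those of the chain.

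First, I would introduce an auxiliary predicate $countEdgesGadget'(y,x)$ expressing that $x$ is a component of $y$ and every component $z$ of $y$ satisfies $x\leq_s z$, $|z|=|x|$, and is either the clique $K_{|x|}$ or admits an edge-supercover $z'$ that is also a component of $y$. This is definable using $comp$ (Corollary \ref{cor:definingComp}), $\leq_s$, the cardinality relation, the family $\mathcal{K}$, and $\lessdot_{se}$, all of which are already available. A simple downward induction from $K_{|x|}$ shows that any $y$ satisfying this predicate must realize, at every edge count between $\|x\|$ and ${|x| \choose 2}$, at least one component of that edge count; the inductive step rests on the elementary graph-theoretic fact that any non-complete $g$ on $|x|$ vertices with $x\leq_s g$ admits an edge-supercover $g\lessdot_{se}g'$ still containing $x$.

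Next, I would define $countEdgesGadget(y,x)$ as the $<_s$-minimal $y$ satisfying $countEdgesGadget'$. The minimality condition prunes redundancies so that a minimal gadget has exactly one component at each edge-count level from $\|x\|$ up to ${|x| \choose 2}$, giving ${|x| \choose 2}-\|x\|+1$ components, each of order $|x|$. Although the chain itself need not be unique when the poset of $|x|$-vertex supergraphs of $x$ branches, all minimal gadgets have the same cardinality, so $|y|$ is determined by $\|x\|$ alone. Condition $(C1)$ then lets us invert arithmetically and produce a formula for $countEdgesConn(x,n)$ that asserts the existence of such a gadget of the cardinality predicted by $n$.

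The main obstacle is establishing that minimality really forces one component per edge-count level, rather than, say, two parallel partial chains sharing only $x$ and $K_{|x|}$. The resolution is that any component sitting parallel to the main chain can be deleted while preserving $countEdgesGadget'$, since that predicate only asks each non-clique component to possess \emph{some} edge-supercover among the remaining components, and the witnesses for the pruned component's chain ancestors are still present. Once this observation pins down the cardinality of the minimal gadget as a strictly increasing affine function of $\|x\|$, the definition of $countEdgesConn$ follows immediately.
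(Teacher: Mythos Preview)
Your approach is essentially identical to the paper's: you define the same gadget predicates $countEdgesGadget'$ and $countEdgesGadget$ and recover $\|x\|$ from the cardinality of a minimal gadget via the same arithmetic identity.

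Two small imprecisions are worth flagging. First, the induction showing that every edge level is realized should run \emph{upward} from $x$ (following the supercover clause of the gadget predicate), not downward from $K_{|x|}$; the graph-theoretic fact you cite about edge-supercovers is needed only to show that a chain gadget \emph{exists}, not that every gadget hits every level. Second, the minimality argument as you phrase it is not quite correct: deleting a single parallel component $d$ may break the predicate if some other component had $d$ as its only supercover witness in $y$. The cleaner fix (which the paper simply asserts without detail) is not to prune components one at a time but to follow edge-supercovers from $x$ inside any gadget $y$ to extract an entire chain $c_0=x,c_1,\ldots,c_m=K_{|x|}$ of components; their disjoint union is itself a gadget and a strict subgraph of $y$ whenever $y$ has any extra component, so every minimal gadget is a chain.
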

 \begin{proof}
 	We construct a gadget to define $countEdgesConn(x,n)$. For any given graph $g$, we construct a graph $g'$ which has the following properties:
  \begin{enumerate}
    \item $g$ is a component of $g'$. 
   	\item Every component of $g'$ is formed by adding some number of edges to $g$.
   	\item For every component $c$ of $g'$, there is a component $c'$ of $g'$ such that $c'$ is an edge-cover of $c$, or it is the case that $c$ is a clique.
   	\item $g'$ is a minimal element under the subgraph order of the set of graphs satisfying the above three properties.
   \end{enumerate}
   It is easy to see that any such $g' = c_0 \cup c_1 \cup_2 ... c_m$ where $c_0=g, c_m=K_{|g|}$ and for every $i>0$, $c_i$ is an edge-cover of $c_{i-1}$ (see Figure \ref{fig:countEdgesConnGadget} for an example). Hence using some arithmetic, we can relate the cardinality of $g'$ to the number of edges of $g$ and retrieve the latter : ${|g| \choose 2} - ||g||+1= |g'|$. Having given the proof idea, we give the formulae below:
  
  First we have a formula which enforces the first three conditions on $g'$.
  \TWOSIXcountEdgesGadgetPRIME
  The condition of minimality is enforced:
  \TWOFIVEcountEdgesGadget
  We use the constructed gadget to do the edge counting:
  \TWOFOURcountEdgesConn
 \end{proof}

 
\begin{figure}
 \begin{tikzpicture}
 \def \r {1cm}
 \def \s {0.5}
 \def \d {3.5}
 \def \et {1}
 \tikzstyle{node1} = [draw, circle, fill=c01, scale=\s]
 \tikzstyle{line1}=[draw,line width=\et,color=black!60]
  \foreach \i in {0,1,2,3}
  {
  \begin{scope}[xshift= 2.25*\i cm] 
    \node[node1] (v1) at (\i,0) {};
    \node[node1] (v2) at ($(v1)+(0,1)$) {};
    \node[node1] (v3) at ($(v1)+(210:1)$) {};
    \node[node1] (v4) at ($(v1)+(330:1)$) {};
    \draw[line1] (v2)--(v1)--(v3);
    \draw[line1] (v1)--(v4);
    \node at ($(v1)-(0,1)$) {$c_{\i}$};
    \ifthenelse{\i=1}
     {
       \draw[line1] (v2)--(v3);
     } 
     { \ifthenelse{\i=2} 
         {\draw[line1] (v2)--(v3)--(v4); }
       {
          \ifthenelse{\i=3}
          {
          \draw[line1] (v2)--(v3)--(v4)--(v2);
          }
          {}

       }
     }
  \end{scope}  
  }

  \end{tikzpicture}
  \caption{The gadget $g'$ for counting number of edges of the connected graph $g=S_4$. Note that $g'=c_0 \cup c_1 \cup c_2 \cup c_3$ with $c_0=S_4$ and $c_3=K_4$}
 \label{fig:countEdgesConnGadget}
\end{figure}
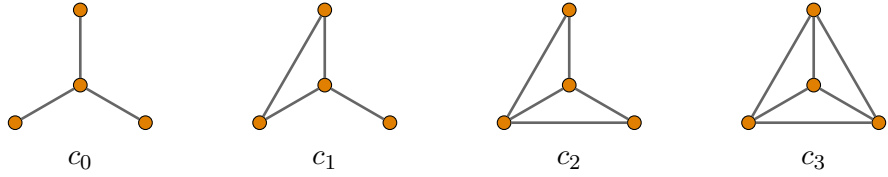 
 The other predicate needed to define $countEdges$ is $countComps$.
 \begin{lem}
 	\label{lem:countComps}
 	The predicate $countComps(x,n)$ which holds if and only if $x$ contains $n$ components is definable in the subgraph order.
 \end{lem}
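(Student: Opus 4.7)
The plan is to reduce the problem of counting components of an arbitrary graph to counting entries in an arithmetically encoded sequence. The reduction proceeds in three stages: first I replace each component of $x$ by a clique on the same vertex set, producing a union of cliques $y$ with the same number of components as $x$; then I read off the multiset of clique sizes of $y$ as an arithmetically encoded sequence $m_1$; finally I use arithmetic (translated via Corollary~\ref{cor:translateArithInduced}) to convert and sum this sequence to obtain $n$.

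For the first step, I would define $unionOfCliques(y)$ by $\forall y' \; comp(y,y') \supset \mathcal{K}(y')$, using the $comp$ predicate from Corollary~\ref{cor:definingComp} and the already-definable clique family $\mathcal{K}$. Then
\begin{flalign*}
extendToCliques(y,x) := \; & unionOfCliques(y) \; \wedge \; x \leq_s y \; \wedge \\
& \forall z \; (unionOfCliques(z) \; \wedge \; x \leq_s z) \supset \neg(z <_s y).
\end{flalign*}
The key observation is that the minimum union of cliques containing $x$ as a subgraph must be obtained by taking each component $c$ of $x$ and replacing it by $K_{|c|}$; in particular the number of components is preserved, so $countComps(x,n) \iff countComps(y,n)$ whenever $extendToCliques(y,x)$. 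Proving this reduction is routine: any union of cliques containing $x$ must contain a clique of order at least $|c|$ for each component $c$, and minimality forces equality.

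For the second step, I would define $makeSequenceFromUOC(y,m_1)$ as in the already-introduced macro: $y$ is a union of cliques and the arithmetically-coded sequence $m_1$ satisfies $\psi^{trans}_{sequence}(m_1,i,j) \iff maxCopies(y,K_i) = j$ for each $i \in \mathcal{N}$. The relation $maxCopies$ is available by Lemma~\ref{lem:defineMultCopies}, and since $\mathcal{K}$ is a definable family totally ordered by cardinality, the constant $K_i$ is definable from the parameter $i$ by Observation~\ref{obs:totalOrderDefble}. The arithmetical sequence encoding $\psi^{trans}_{sequence}$ (a standard Gödel-style pairing available from condition $C1$) is obtained through the translation of Corollary~\ref{cor:translateArithInduced}. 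The auxiliary $\psi^{trans}_{sequenceConvert}(m_1,m_2)$ simply ensures that $m_2$ is an initial segment of $m_1$ long enough to cover all nonzero entries (so that only finitely many terms contribute), and $\psi^{trans}_{sequenceSum}(m_2,n)$ states that $n$ is the sum of its entries; both are purely arithmetical and translate directly. Summing then yields $n = \sum_{i} maxCopies(y, K_i)$, which is exactly the number of components of $y$.

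The main obstacle is the second step: we need to sum an unbounded collection of counts (one per possible component size), and no single first-order formula over the subgraph order can directly quantify over such an implicit sequence. The trick is to bundle the counts $maxCopies(y,K_i)$ for all $i$ into a single numerical code $m_1$ via the arithmetical sequence encoding, whose total length is bounded by $|y|$. Once this bundling is carried out, all remaining manipulation (truncation and summation) is number-theoretic and is imported into the subgraph order by the standard translation. Combining the three steps, the macro formula for $countComps(x,n)$ given earlier is well-defined in $(\mathcal{G},\leq_s)$, completing the lemma.
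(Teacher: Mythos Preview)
Your first stage (the $extendToCliques$ reduction) is correct and matches the paper. The gap is in your treatment of the second and third stages.

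You assert that $\sum_i maxCopies(y,K_i)$ equals the number of components of $y$, and accordingly you describe $\psi^{trans}_{sequenceConvert}$ as a mere truncation. This is false. The predicate $maxCopies(y,K_i)$ counts the maximum number of \emph{vertex-disjoint subgraph copies} of $K_i$ in $y$, and such copies may sit inside strictly larger clique components. Concretely, for $y=K_2\cup K_2\cup K_4\cup K_5$ the sequence $(maxCopies(y,K_i))_{i\ge 1}$ is $(13,6,2,2,1)$, whose sum is $24$, not $4$. What the paper actually does with $\psi^{trans}_{sequenceConvert}$ is invert the (recursive, hence arithmetically definable) transformation
\[
m_i \;=\; \sum_{j\ge i}\Big\lfloor \tfrac{j}{i}\Big\rfloor\, n_j,
\]
where $n_j$ is the number of $K_j$-components of $y$; this recovers $(n_j)_j=(0,2,0,1,1)$ from $(m_i)_i=(13,6,2,2,1)$, and only \emph{after} this inversion does summing give the component count. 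The inversion proceeds top-down ($n_k=m_k$, then $n_i = m_i - \sum_{j>i}\lfloor j/i\rfloor n_j$), so it is computable and translates into the subgraph order via Corollary~\ref{cor:translateArithInduced}. Without this step your formula does not define $countComps$.
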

 \begin{proof}
   We show that for any graph $g$, we can construct the graph $g'$ which is obtained by adding all the edges $uv$ such that $u,v$ belong to the same component of $g$ (see Figure \ref{fig:unionOfCliques} for an example). Such a graph is a disjoint union of cliques and we will call this family of graphs $unionOfCliques$. The number of components in both $g$ and $g'$ is the same. We then show how to use arithmetic to count components assuming that the input graph is always from the family $unionOfCliques$.

    \begin{defi}[Extend to Cliques]
  	\label{def:mapExtendToCliques}
  	The map $f:\mathcal{G} \rightarrow \mathcal{G}$
  takes a graph $g = c_0 \cup c_1 \cdots \cup c_n$ (where each $c_i$ is a component) to the graph $f(g)=K_{|c_0|} \cup K_{|c_1|} \cup \cdots \cup K_{|c_m|}$.
  \end{defi}

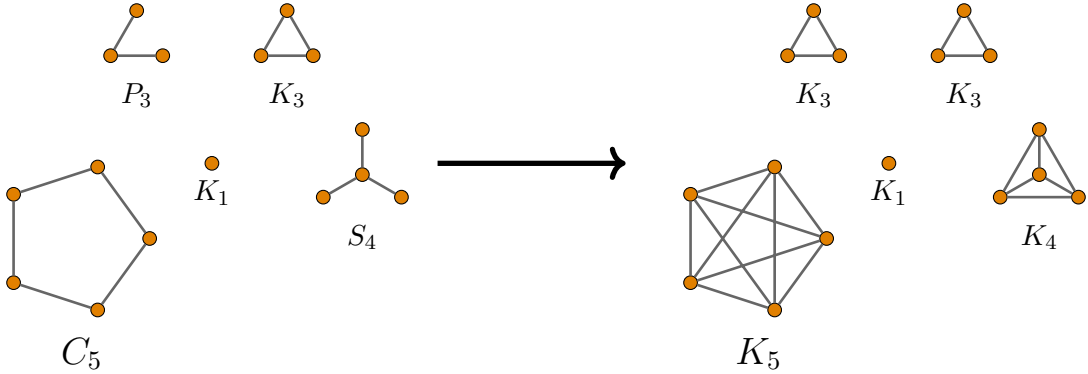
\begin{figure}
 \begin{tikzpicture}
 \def \r {1cm}
 \def \s {0.5}
 \def \d {2cm}
 \def \et {1}
 \def \el {0.4}
 \tikzstyle{node1} = [draw, circle, fill=c01, scale=\s]
 \tikzstyle{line1}=[draw,line width=\et,color=black!60]
  \node[node1,label=below:$K_1$] at (0,0) {};
  \node[shift={(120:\d)},label=below:$P_3$]{
  \begin{tikzpicture}
    \node[] (v1) at (0,0) {};
    \node[node1] (v2) at ($(v1)+(0,\el)$) {};
    \node[node1] (v3) at ($(v1)+(210:\el)$) {};
    \node[node1] (v4) at ($(v1)+(330:\el)$) {};
    \draw[line1] (v2)--(v3)--(v4);
  \end{tikzpicture}
  };
  \node[shift={(60:\d)},label=below:$K_3$]{
  \begin{tikzpicture}
    \node[] (v1) at (0,0) {};
    \node[node1] (v2) at ($(v1)+(0,\el)$) {};
    \node[node1] (v3) at ($(v1)+(210:\el)$) {};
    \node[node1] (v4) at ($(v1)+(330:\el)$) {};
    \draw[line1] (v2)--(v3)--(v4)--(v2);
  \end{tikzpicture}
  };
 \node[shift={(\d,0)},label=below:$S_4$]{
  \begin{tikzpicture}
   \node[node1] (v1) at (0,0) {};
   \node[node1] (v2) at ($(v1)+(0,1.5*\el)$) {};
   \node[node1] (v3) at ($(v1)+(210:1.5*\el)$) {};
   \node[node1] (v4) at ($(v1)+(330:1.5*\el)$) {};
   \draw[line1] (v2)--(v1)--(v3);
   \draw[line1] (v1)--(v4);
  \end{tikzpicture}
 };
   \node[shift={(210:\d)},label=below:\Large $C_5$]{
    \begin{tikzpicture}
    \node[node1] (v11) at ({72*0}:\r) {};
    \node[node1] (v2) at ({72*1}:\r) {};
    \node[node1] (v3) at ({72*2}:\r) {};
    \node[node1] (v4) at ({72*3}:\r) {};
    \node[node1] (v5) at ({72*4}:\r) {};
    \draw[line1] (v11) -- (v2) --(v3)--(v4)--(v5)--(v11);
    \end{tikzpicture}
    };
 \draw[->,line width=2] (3,0) -- (5.5,0);

 \begin{scope}[xshift=9cm]
  \node[node1,label=below:$K_1$] at (0,0) {};
  \node[shift={(120:\d)},label=below:$K_3$]{
  \begin{tikzpicture}
    \node[] (v1) at (0,0) {};
    \node[node1] (v2) at ($(v1)+(0,\el)$) {};
    \node[node1] (v3) at ($(v1)+(210:\el)$) {};
    \node[node1] (v4) at ($(v1)+(330:\el)$) {};
    \draw[line1] (v2)--(v3)--(v4)--(v2);
  \end{tikzpicture}
  };
  \node[shift={(60:\d)},label=below:$K_3$]{
  \begin{tikzpicture}
    \node[] (v1) at (0,0) {};
    \node[node1] (v2) at ($(v1)+(0,\el)$) {};
    \node[node1] (v3) at ($(v1)+(210:\el)$) {};
    \node[node1] (v4) at ($(v1)+(330:\el)$) {};
    \draw[line1] (v2)--(v3)--(v4)--(v2);
  \end{tikzpicture}
  };
 \node[shift={(\d,0)},label=below:$K_4$]{
  \begin{tikzpicture}
   \node[node1] (v1) at (0,0) {};
   \node[node1] (v2) at ($(v1)+(0,1.5*\el)$) {};
   \node[node1] (v3) at ($(v1)+(210:1.5*\el)$) {};
   \node[node1] (v4) at ($(v1)+(330:1.5*\el)$) {};
   \draw[line1] (v2)--(v3)--(v4)--(v2);
   \draw[line1] (v2)--(v1)--(v3);
   \draw[line1] (v1)--(v4);
  \end{tikzpicture}
 };
   \node[shift={(210:\d)},label=below:\Large $K_5$]{
    \begin{tikzpicture}
    \node[node1] (v1) at ({72*0}:\r) {};
    \node[node1] (v2) at ({72*1}:\r) {};
    \node[node1] (v3) at ({72*2}:\r) {};
    \node[node1] (v4) at ({72*3}:\r) {};
    \node[node1] (v5) at ({72*4}:\r) {};
    \draw[line1] (v1) -- (v2) --(v3)--(v4)--(v5)--(v1)--(v3)--(v5)--(v2)--(v4)--(v1);
    \end{tikzpicture}
    };
 \end{scope}
 \end{tikzpicture}
  \caption{The map $f$ taking the graph $g=K_1 \cup P_3 \cup K_3 \cup S_4 \cup C_5$ to the graph $f(g)=K_1 \cup K_3 \cup K_3 \cup K_4 \cup . K_5$ belonging to the family $unionOfCliques$.}
 \label{fig:unionOfCliques}
\end{figure}
  It is easy to define the family $unionOfCliques$:
  \FIVEFOURunionOfCliques
  The predicate $extendToCliques(y,x)$ holds if and only if $y=f(x)$ and is defined as
  \THREETWOextendToCliques
  We show the correctness of the defining formula. The formula states that $y$ is a member of $unionOfCliques$ which is a minimal element under the subgraph order of the set $S$ of all graphs which are supergraphs of $x$ and also belong to $unionOfCliques$.
  Clearly the graph $y=f(x)$ belongs to $S$. We claim that it is the unique minimal element of $S$.
  Consider any $y' \in S$. Since $y'$ is a supergraph of $x=c_0 \cup c_1 \cup ... \cup c_n$, there are vertex sets $V_0 ,V_1,...,V_n$ such that $V_i \subset V(y'), V_i \cap V_j = \emptyset$ for all $i\neq j$ and $x[V_i]$ is a supergraph of $c_i$ with $|V_i|=|c_i|$. Thus $x[V_i]$ is also a connected graph. But this implies that $x[V_i]$ is equal to $K_{|c_i|}$. Hence $f(x)$ is also a subgraph of $y'$ and hence is the unique minimal element of $S$.

  Next we show that through the use of arithmetic, we can extract the number of components from a graph of the family $unionOfCliques$. 

  Arithmetic allows us to create sequences of numbers which can be stored as a single number and manipulate this sequence. 
  \begin{obs}
  \label{obs:sequenceArith}
	The following predicates are definable in arithmetic:
	\begin{enumerate}
		\item $\phi_{sequence}(n,i,j)$ holds if and only if the largest power of the $i^{th}$ prime which divides $n$ is $j$. We will abuse notation in the sequel and refer to the number $2^{n_1} \times 3^{n_2} \times ... \times p_k^{n_k}$ as the sequence ($n_1,n_2,...,n_k$).
		\item $\phi_{sequenceSum}(n,m)$ holds if and only if the sum of the exponents of all primes dividing $n$ is $m$ i.e. $n=2^{n_1} \times 3^{n_2} \times ... \times p_k^{n_k}$ and $m=n_1+n_2+...+n_k$.
	\end{enumerate}
  \end{obs}
  The intended usage of $\phi_{sequence}(n,i,j)$ is that $n$ is a sequence of numbers and we can extract the $i^{th}$ item in the sequence if we are given the index $i$.

  Our objective is to construct the sequence $l_x$ containing the number of times each clique occurs as a component in a given member $x$ of $unionOfCliques$. For example, if $x = K_2 \cup K_2 \cup K_4 \cup K_5$ then the sequence $l$ corresponding to it would be ($0,2,0,1,1$) since there are no $K_1$ components, two $K_2$ components etc. The strategy employed to construct $l_x$ is to first construct the related sequence $l'_x$ whose $i^{th}$ entry takes the value $maxCopies(K_i,x)$. The sequence $l'_x$ is easier to construct and in the case of the example considered above takes the value ($13,6,2,2,1$).

  In the general case, we have the following definition which relates these two sequences:
  \begin{defi}
  \label{defi:convertsequence}
	Let $A \subseteq \mathbb{N}^*$ be the subset of sequences of natural numbers such that for any $l \in A, l=(n_1,n_2,...,n_k)$, it is the case that $n_k >0$. \\
	We define a map $f_1:A \rightarrow \mathbb{N}^*$ as follows:\\
	For $l=(n_1,n_2,...,n_k) \in A$, $f(l)=l' = (m_1,m_2,...,m_k)$ where
	\[ m_i= \Sigma_{j=1}^k \lfloor \frac{j}{i} \rfloor \times n_j \]
	Let $B \subseteq \mathbb{N}^*$ be the image set under the map $f_1$. We can define the inverse map $f_1^{-1} : B \rightarrow A$ of the map $f_1$. Given $l'=(m_1,m_2,...,m_k)$, it is possible to compute $n_i$ if we know the value of $m_i,n_{i+1},n_{i+2},...,n_k$:   
	\begin{flalign*}
		n_k= & m_k\\
		n_i = & m_i - \Sigma_{j=i+1}^k \lfloor \frac{j}{i} \rfloor \times n_j
	\end{flalign*}
  \end{defi}
  From the above definition, it is clear that the inverse map $f_1^{-1}$ is well defined and is recursive and hence, definable in arithmetic by Theorem \ref{thm:recInArith}.
  \begin{obs}
	\label{obs:sequenceConvert}
	The predicate $\phi_{sequenceConvert}(m,n)$ holds if and only if $n,m$ are sequences such that $n=f_1^{-1}(m)$ is definable in arithmetic.
  \end{obs}

  The predicate $makeSequenceFromUOC(x,n)$ which holds if and only if $n$ is a sequence, $x \in unionOfCliques$ and for every $i$, the $i^{th}$ member of the sequence $n$ is the number $maxCopies(x,K_i)$; is definable in the subgraph order:
  \THREEFIVEmakeSequenceFromUOC
  Note that $\psi^{trans}_{sequence}$ is the appropriate translated formula by Corollary \ref{cor:translateArithInduced} defining a number theoretic predicate.

  We can define $countComps$ by creating the sequence of number of copies, translating it under the map $f_1^{-1}$ and adding up the elements of the latter sequence:
  \TWOTWOcountComps
 \end{proof}
 This concludes the proof of Lemma \ref{lem:countComps}.
 We are finally able to define $countEdges$ which is required for defining $constructFromCycles$:
 \begin{lem}
 	\label{lem:countEdges}
 	The predicate $countEdges(x,n)$ holds if and only if $x$ has $n$ edges is definable in the subgraph order.
 \end{lem}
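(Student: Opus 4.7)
The plan is to reduce the problem of counting edges in an arbitrary graph $x$ to counting edges in a connected graph and counting the number of components of $x$, both of which are definable by Lemmas \ref{lem:countEdgesConn} and \ref{lem:countComps}. The key combinatorial observation is: if $x$ has $m_2$ components and $y$ is any connected supergraph of $x$ formed by adding the minimum possible number of edges to $x$, then $y$ is obtained by adding exactly $m_2 - 1$ edges to $x$ (these edges form a spanning structure on the components). Hence $||y|| = ||x|| + (m_2 - 1)$, so $||x|| = m_1 - m_2 + 1$ where $m_1 = ||y||$.

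First I would express ``$y$ is a minimal connected supergraph of $x$'' in the first order language of the subgraph order as
\[ minConnExt(y,x) := conn(y) \; \wedge \; x \leq_s y \; \wedge \; \forall z \; (conn(z) \; \wedge \; x \leq_s z) \supset \neg (z <_s y). \]
Using the already established definability of $conn$ (Lemma \ref{lem:oldBasicPredSubgraph}), this formula is in the vocabulary of $\leq_s$. Any $y$ satisfying $minConnExt(y,x)$ is necessarily obtained from $x$ by adding exactly $m_2 - 1$ edges between components (no vertex is added, since $y$ is minimal in the subgraph order among connected supergraphs of $x$, and any supergraph with a strictly larger vertex set contains a strictly smaller supergraph, still connected, obtained by deleting the extra isolated vertex).

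With this gadget in hand, the defining formula is simply
\begin{flalign*}
 countEdges(x,n) := & \exists y \; minConnExt(y,x) \; \wedge \; \exists m_1, m_2 \in \mathcal{N} \\
 & countEdgesConn(y,m_1) \; \wedge \; countComps(x,m_2) \; \wedge \; n = m_1 - m_2 + 1,
\end{flalign*}
where the arithmetic constraint $n = m_1 - m_2 + 1$ is available because the subgraph order satisfies condition $C1$ (arithmetic on $\mathcal{N}$ is definable). Correctness follows from the combinatorial observation above applied to any witness $y$.

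The main obstacle, and the only subtle point, is to verify that every $y$ satisfying $minConnExt(y,x)$ has exactly $|x|$ vertices and exactly $||x|| + m_2 - 1$ edges, so that the arithmetic identity $n = m_1 - m_2 + 1$ yields the correct edge count independently of the choice of witness $y$. For this I would argue (i) $y$ cannot have extra isolated vertices (one could delete them and remain connected and a supergraph of $x$), and (ii) a connected supergraph on $V(x)$ must be obtained by adding at least $m_2 - 1$ edges, with equality iff the added edges induce a tree on the components; such a $y$ exists, so the minimum is attained. The rest is routine.
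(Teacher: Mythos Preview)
Your proposal is correct and matches the paper's proof essentially line for line: the paper also takes a minimal connected supergraph $y$ of $x$, applies $countEdgesConn$ to $y$ and $countComps$ to $x$, and uses the identity $n=m_1-m_2+1$. If anything, your justification that any such $y$ has $|x|$ vertices and exactly $||x||+m_2-1$ edges is more carefully spelled out than in the paper.
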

 \begin{proof}
  Using Lemma \ref{lem:countEdgesConn} (definability of $countEdgesConn$) and Lemma \ref{lem:countComps} (definability of $countComps$), we can define $countEdges$: 
  \TWOTHREEcountEdges
 \end{proof}


Using Lemmas \ref{lem:subDisSamCapable}, \ref{lem:countEdges} and \ref{lem:disjointUnion}, we have the following result:

 \begin{lem}
 	\label{lem:subgraphCapable}
 	The structure ($\mathcal{G},\leq_s$) is capable. 
 \end{lem}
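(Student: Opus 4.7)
The plan is to assemble this result from three pieces already in hand. By Lemma \ref{lem:subDisSamCapable}, the expansion $(\mathcal{G},\leq_s,disjointUnion,sameSize)$ is capable, i.e.\ it satisfies conditions $(C1)$, $(C2)$, $(C3)$. Hence it suffices to show that the two additional relation symbols $disjointUnion$ and $sameSize$ are not really additional in the sense that they are already first order definable in the pure subgraph order $(\mathcal{G},\leq_s)$. Once this is verified, every formula witnessing capability in the richer vocabulary can be rewritten by substituting the defining formulas for $disjointUnion$ and $sameSize$, yielding formulas over $\leq_s$ alone that witness $(C1)$, $(C2)$ and $(C3)$ in $(\mathcal{G},\leq_s)$.

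First I would invoke Lemma \ref{lem:disjointUnion}, which provides an explicit $\leq_s$-formula $disjointUnion(z,x,y)$ using the $maxCopies$ machinery. Next I would invoke Lemma \ref{lem:countEdges}, whose defining formula for $countEdges(x,n)$ is built in the subgraph order using $countEdgesConn$ (via the clique-tower gadget) together with $countComps$ (via the extend-to-cliques construction and the sequence predicates coming from arithmetic). From $countEdges$ the binary relation $sameSize(x,y)$ is immediate, as recorded in Observation \ref{obs:sameSize}: $sameSize(x,y) := \exists n\; countEdges(x,n) \wedge countEdges(y,n)$.

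With these two definability results in place, the remainder is bookkeeping. For $(C1)$ I would appeal to the result of \cite{ramanujam2016definability} that arithmetic is definable in $(\mathcal{G},\leq_s)$; for $(C3)$ I would use Observation \ref{obs:C3fromC1}, which defines $sameCard$ from the $\leq_s$-definable family $\mathcal{N}$; and for $(C2)$ I would apply Lemma \ref{lem:condC2Proof}, whose formulas $\psi_{opres}$ and $\psi_{edgeOP}$ rely only on the three intermediate predicates $CP_4C$ (Lemma \ref{lem:definingCP4C}), $\tilde{\mathcal{G}}$ (Lemma \ref{lem:definingTildegSubgraph}, inside $(\mathcal{G},\leq_s,sameSize)$), and $constructFromCycles$ (Lemma \ref{lem:defineConstructFromCyclesConditional}, inside $(\mathcal{G},\leq_s,disjointUnion,sameSize)$). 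After substituting the $\leq_s$-definitions of $disjointUnion$ and $sameSize$ into each of these, the full definability of the o-presentation predicates is obtained using $\leq_s$ alone.

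The conceptual obstacle in this argument has already been surmounted in the supporting lemmas; there is no new combinatorics to do here. If there is any subtlety at all, it lies in being careful that the substitution of $disjointUnion$ and $sameSize$ by their $\leq_s$-definitions does not create circularity, since the definitions of $countEdges$ and $disjointUnion$ themselves lean on the $maxCopies$/$mult$/$unionOfCliques$ apparatus and on translated arithmetic formulas. Inspecting the chain of dependencies shows that $maxCopies$, $mult$, $extendToCliques$, $makeSequenceFromUOC$, $countEdgesConn$, $countComps$ are all defined using only $\leq_s$ and translations of arithmetical predicates already available from $(C1)$ for the pure subgraph order, so no cycle is introduced and the substitution is legitimate. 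Assembling these observations gives the claim.
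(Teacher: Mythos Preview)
Your proposal is correct and follows essentially the same approach as the paper, which simply cites Lemmas \ref{lem:subDisSamCapable}, \ref{lem:disjointUnion}, and \ref{lem:countEdges} to conclude. Your additional care in checking that the chain of dependencies introduces no circularity is a nice touch the paper leaves implicit.
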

  This concludes the major technical contribution of this paper.



\subsection{Minor Order} 
\label{sub:minor_order}
 The following lemma implies that ($\mathcal{G},\leq_m,sameSize$) is a capable structure.

 \begin{lem}
	 \label{lem:definingSubgraphInMinor}
	 The subgraph order is definable in the structure ($\mathcal{G}, \leq_m, sameSize$).
 \end{lem}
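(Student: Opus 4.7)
The plan is to give an explicit first-order definition of $y \leq_s x$ in $(\mathcal{G}, \leq_m, sameSize)$. The strategy rests on a normal form for the subgraph relation: whenever $y \leq_s x$, one can first delete all the edges that must disappear while keeping the vertex set of $x$ intact, and only then delete the vertices that become superfluous. Since every non-isolated vertex deletion in the subgraph order can be broken into deleting its incident edges first and then deleting the now-isolated vertex, this two-phase decomposition always exists: $y \leq_s x$ iff there is an intermediate graph $h$ which is a spanning subgraph of $x$ and from which $y$ is obtained by deleting only isolated vertices.

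The key point is that each of these two phases is detectable inside $(\mathcal{G}, \leq_m, sameSize)$ by comparing vertex and edge counts, because edge deletion, vertex deletion, and edge contraction affect these counts differently. First, $h \leq_m x$ together with $|h| = |x|$ forces the sequence of minor operations witnessing $h \leq_m x$ to contain no vertex deletions and no contractions (each of which strictly decreases the vertex count), so only edge deletions occur and $h$ must in fact be a spanning subgraph of $x$. Second, $y \leq_m h$ together with $||y|| = ||h||$ forces that sequence to contain no edge deletions, no contractions (which delete at least the contracted edge), and no non-isolated vertex deletions, so only isolated vertex deletions occur and $y$ is obtained from $h$ by removing isolated vertices. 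Using that $sameCard$ is definable in $(\mathcal{G}, \leq_m)$ via the family $\mathcal{N}$ (definable in the minor order by condition $C1$, exactly as in Observation \ref{obs:C3fromC1}), the defining formula is
\[ y \leq_s x \; := \; \exists h \; [\; h \leq_m x \;\wedge\; sameCard(h,x) \;\wedge\; y \leq_m h \;\wedge\; sameSize(y,h) \;]. \]

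The converse implication of the equivalence is immediate: given $y \leq_s x$, take $h$ to be the intermediate graph from the two-phase decomposition (same vertex set as $x$, same edge set as $y$ plus some isolated vertices). There is no real obstacle beyond identifying the right decomposition; the whole verification reduces to bookkeeping of how the three minor operations change $|\cdot|$ and $||\cdot||$. The essential role of the auxiliary relation $sameSize$ in $\tau$ becomes clear at this point: in the bare minor order the number of edges of a graph cannot be extracted directly from the order relation alone, because a single vertex deletion may remove arbitrarily many edges at once, and without $sameSize$ one cannot separate the \emph{edge deletions only} phase from the \emph{isolated vertex deletions only} phase.
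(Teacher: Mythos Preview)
Your proof is correct and is essentially identical to the paper's own argument: both use the same intermediate graph (the smaller graph padded with isolated vertices to match the cardinality of the larger) and the same observation that constraining vertex count (resp.\ edge count) along a $\leq_m$ step rules out contractions, forcing that step to be a pure edge-deletion (resp.\ isolated-vertex-deletion) subgraph step. Up to swapping the roles of $x$ and $y$ and reordering the conjuncts, your defining formula coincides with the paper's.
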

 \begin{proof}
 	\[ x \leq_s y := \exists z \; x \leq_m  z \; \wedge \; sameSize(x,z) \; \wedge \; z \leq_m y \; \wedge \; |z|=|y|\]
 	Suppose $x \leq_s y$. Then $y$ can be constructed from $x$ in two steps $s1,s2$. Step $s1$ involves addition of an arbitrary number of vertices to give a graph $z$. Step $s2$ involves addition of an arbitrary number of edges to the graph $z$ to get $y$. Note that this two step construction characterizes the subgraph order. The formula captures this two step construction and its correctness follows from the following observation:
 	\begin{obs}
 		Let $g_1,g_2 \in \mathcal{G}$. If $g_1,g_2$ have the same number of edges or same cardinality, then
 		\[ g_1 \leq_s g_2 \iff g_1 \leq_m g_2.\]
 	\end{obs}
 	This is due to the fact that a contraction operation decreases both the number of edges as well as the number of vertices in a graph. Thus constraining either one of these parameters to remain constant means that contraction operations cannot be used.
 \end{proof}
 This concludes the proof of Lemma \ref{lem:definingSubgraphInMinor}.
We conclude this section with the formal statement of our main result: 
 \begin{thm}[Maximal Definability Property of Graph Orders]
	\label{thm:graphOrdersRecPred}
  The structures $(\mathcal{G},\leq_s)$ ,$(\mathcal{G},\leq_i,P_3)$ and $(\mathcal{G},\leq_m,sameSize)$ have the maximal definability property. In other words, given a structure $\mathcal{A}$, let $Def(\mathcal{A})$ be the set of all  predicates definable in $\mathcal{A}$. Then
	\[ Def(\mathcal{G},plus_t,times_t) = Def(\mathcal{G},\leq_s) = Def(\mathcal{G},\leq_i,P_3) = Def(\mathcal{G},\leq_m,sameSize)\]
 \end{thm}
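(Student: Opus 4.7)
The plan is to deduce the theorem by combining Theorem \ref{thm:capableImpliesMdp} with the capability of the three structures, together with the fact (noted in the corollary following Theorem \ref{thm:recInArith}) that each of the structures in question is arithmetical. First I would show that every one of the three structures $(\mathcal{G},\leq_s)$, $(\mathcal{G},\leq_i,P_3)$ and $(\mathcal{G},\leq_m,sameSize)$ is capable. Two of these, $(\mathcal{G},\leq_i,P_3)$ and $(\mathcal{G},\leq_s)$, are already handled by Lemmas \ref{lem:indSubIsCapable} and \ref{lem:subgraphCapable} respectively.

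Next, for the minor order, I would argue that $(\mathcal{G},\leq_m,sameSize)$ is capable by reduction to the subgraph case: Lemma \ref{lem:definingSubgraphInMinor} gives a formula defining $\leq_s$ in $(\mathcal{G},\leq_m,sameSize)$, so every predicate definable in $(\mathcal{G},\leq_s)$ is also definable in $(\mathcal{G},\leq_m,sameSize)$. In particular, the three conditions $(C1)$, $(C2)$, $(C3)$ of Definition \ref{def:capableStructure} all lift from $(\mathcal{G},\leq_s)$ to $(\mathcal{G},\leq_m,sameSize)$, and arithmeticity of $(\mathcal{G},\leq_m,sameSize)$ has already been observed. Applying Theorem \ref{thm:capableImpliesMdp} to all three capable structures yields that each has the maximal definability property, i.e.\ every arithmetical predicate (with respect to $\leq_t$) is definable in each of them. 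This gives the inclusion $Def(\mathcal{G},plus_t,times_t)\subseteq Def(\mathcal{A})$ for each $\mathcal{A}\in\{(\mathcal{G},\leq_s),\,(\mathcal{G},\leq_i,P_3),\,(\mathcal{G},\leq_m,sameSize)\}$.

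For the reverse inclusion, I would invoke Observation \ref{obs:arithStrArithPred}: in any arithmetical structure every definable predicate is arithmetical. Since each of the three graph-order structures is arithmetical, we get $Def(\mathcal{A})\subseteq Def(\mathcal{G},plus_t,times_t)$ for each of them. Combining the two inclusions delivers the chain of equalities in the statement.

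The only step requiring genuine work beyond bookkeeping is the capability of $(\mathcal{G},\leq_m,sameSize)$, and the main obstacle there has already been dispatched in Lemma \ref{lem:definingSubgraphInMinor}, whose proof exploits the key observation that contraction strictly decreases both the vertex count and the edge count, so that forcing either $|z|=|y|$ or $\|x\|=\|z\|$ rules out contractions and reduces $\leq_m$ to $\leq_s$ on the relevant pairs. Once this is in place the theorem reduces to a routine application of previously established machinery.
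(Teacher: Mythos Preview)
Your proposal is correct and follows essentially the same route as the paper: establish capability of each of the three structures (Lemmas \ref{lem:indSubIsCapable}, \ref{lem:subgraphCapable}, and the reduction via Lemma \ref{lem:definingSubgraphInMinor}), invoke Theorem \ref{thm:capableImpliesMdp} for one inclusion, and use Observation \ref{obs:arithStrArithPred} together with arithmeticity of the structures for the reverse inclusion. The paper presents the theorem as the culmination of Section \ref{sec:graphOrders} without a separate proof block, but the logical assembly you describe is exactly what is intended.
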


   
 \begin{cor}
 \label{cor:recPredInGraphOrder}
  Each of the structures $(\mathcal{G},\leq_s)$, $(\mathcal{G},\leq_i,P_3)$ and $(\mathcal{G},\leq_m,sameSize)$ can define any recursive predicate over graphs.
 \end{cor}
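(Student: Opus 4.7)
The plan is to assemble this theorem as a direct consequence of the machinery developed earlier in the paper. The key observation is that the ``maximal definability property'' is exactly the conclusion of Theorem~\ref{thm:capableImpliesMdp}, and that each of the three structures listed has (or can be shown to have) the capability property of Definition~\ref{def:capableStructure}. So the proof reduces to (i) establishing capability for each structure, and (ii) arguing the reverse inclusion, namely that predicates definable in each structure are arithmetical.

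For capability, two of the three cases are already in hand: Lemma~\ref{lem:indSubIsCapable} gives capability for $(\mathcal{G},\leq_i,P_3)$ and Lemma~\ref{lem:subgraphCapable} gives it for $(\mathcal{G},\leq_s)$. For the minor order, I would invoke Lemma~\ref{lem:definingSubgraphInMinor}, which defines $\leq_s$ inside $(\mathcal{G},\leq_m,sameSize)$. Since every defining formula used to establish conditions $(C1)$, $(C2)$, $(C3)$ for $(\mathcal{G},\leq_s)$ can be relativized by syntactically replacing each occurrence of $x\leq_s y$ by its defining formula in $(\mathcal{G},\leq_m,sameSize)$, capability of $(\mathcal{G},\leq_s)$ transfers to $(\mathcal{G},\leq_m,sameSize)$. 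Applying Theorem~\ref{thm:capableImpliesMdp} to each of the three structures then yields the inclusion
\[
Def(\mathcal{G},plus_t,times_t) \subseteq Def(\mathcal{A})
\]
for $\mathcal{A} \in \{(\mathcal{G},\leq_s),\,(\mathcal{G},\leq_i,P_3),\,(\mathcal{G},\leq_m,sameSize)\}$.

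For the reverse inclusions, I would use the corollary following Theorem~\ref{thm:recInArith}, which explicitly states that $\leq_i$, $\leq_s$, $\leq_m$, the constant $P_3$, and $sameSize$ are all arithmetical with respect to $\leq_t$. Hence each of the three structures under consideration is an arithmetical structure in the sense of the definition preceding Observation~\ref{obs:arithStrArithPred}. By that observation, every predicate definable in an arithmetical structure is itself arithmetical, i.e.\ belongs to $Def(\mathcal{G},plus_t,times_t)$. Combining the two directions yields the chain of equalities in the theorem.

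There is really no substantial obstacle left at this stage; the proof is essentially bookkeeping, assembling Lemmas~\ref{lem:indSubIsCapable}, \ref{lem:subgraphCapable}, \ref{lem:definingSubgraphInMinor} with Theorem~\ref{thm:capableImpliesMdp} and Observation~\ref{obs:arithStrArithPred}. The only mildly delicate point is the relativization argument for the minor case: one must check that the witnessing graph $z$ used in the definition of $\leq_s$ in $(\mathcal{G},\leq_m,sameSize)$ does not interfere with any quantifier alternations in the capability formulae. But since the defining formula is simply $\exists z\,(\ldots)$, standard substitution of defined predicates suffices, and no essential new content is needed beyond what has been proved.
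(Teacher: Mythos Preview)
Your argument is correct and follows the paper's approach, but you are proving more than the corollary asks. The statement is only that every \emph{recursive} predicate is definable in each of the three structures; it does not assert the full chain of equalities $Def(\mathcal{G},plus_t,times_t) = Def(\mathcal{A})$. That stronger statement is Theorem~\ref{thm:graphOrdersRecPred}, of which this corollary is an immediate consequence.

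Accordingly, your step (ii) --- the reverse inclusion via Observation~\ref{obs:arithStrArithPred} --- is not needed here. For the corollary it suffices to do your step (i): establish capability of each structure (Lemmas~\ref{lem:indSubIsCapable}, \ref{lem:subgraphCapable}, and the transfer via Lemma~\ref{lem:definingSubgraphInMinor}), then invoke Corollary~\ref{cor:capableMeansRecPred} directly, which already packages Theorem~\ref{thm:capableImpliesMdp} together with the fact that recursive predicates are arithmetical. The paper treats the corollary in exactly this way, as a one-line consequence with no separate argument.
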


\section{Discussion}
\label{sec:discussion}
 Definition \ref{def:capableStructure} of a capable structure  places emphasis on the fact that the family $\mathcal{N}$ is used to represent the natural numbers (in condition $C1$, definability of arithmetic). The use of $\mathcal{N}$ to represent numbers makes the definition of cardinality (condition $C3$) straightforward in the case of graph orders because of the following equivalence which holds when $\leq$ is one of $\leq_i,\leq_s$ and $\leq_m$:
 \[ (|x|=|y|) \iff [\forall z \; \mathcal{N}(z) \; \supset (z \leq x \iff z \leq y)]\]
 The above equivalence does not help us when we consider structures over graphs which are not graph orders; it may be more appropriate to use a graph family different from $\mathcal{N}$ to represent numbers, together with a different strategy for defining cardinality. 
     
  We may also be interested in considering structures over a domain $\mathcal{G}_0 \subset \mathcal{G}$. For instance, we could consider studying the set of all finite trees under the subtree order (the restriction of subgraph to trees), denoted ($\mathcal{T},\leq_{st}$). We can ask the question of whether every recursive predicate over trees is definable in this structure. Clearly, $\mathcal{N}$ is not contained in $\mathcal{T}$ and hence can not be used as the basis for arithmetic. 






 Our theorems as stated use a particular encoding $U\!N$ of graphs as strings (equivalently, as numbers). Suppose there is a Turing machine $M'$ which uses a different encoding $U\!N'$ of graphs as strings. As long as the map $f:U\!N'(\mathcal{G}) \rightarrow U\!N(\mathcal{G})$ defined as $f(U\!N'(g))= U\!N(g)$ which takes us between these two encodings is also recursive, we can still obtain the required formula in graph orders. This is because we can obtain a Turing Machine $M$ which corresponds to the same graph language as $M'$ but uses the encoding $U\!N$. Hence our results are \textit{encoding independent} (upto recursive transformations). Similarly, the total order $\leq_t$ which is used to define arithmetical predicates over graphs can be replaced by any other total order $\leq'_t$ as long as $\leq'_t$ is arithmetical with respect to $\leq_t$. 

 One of the orders for which we are unable to obtain the main result is the minor order. The construction of ``o-presentations'' depends on the existence of an infinite family of incomparable elements in the order (in our case, the family $\mathcal{C}$ of cycles was used). However by the Graph Minor Theorem of Robertson and Seymour \cite{robertson2004graph}, no infinite antichain exists in the minor order. An indirect way of achieving our goal would be to define the $sameSize$ predicate in the minor order. It is not clear how one can circumvent either problem and this is a question we would like to address in future. On the other hand we do not have the tools to tackle the problem of proving inexpressibility in such rich structures. 

 There are subtle differences in the definability in the subgraph and induced subgraph orders. For instance, the predicate $\mathcal{N}(x)$ which identifies the graphs that consist of isolated vertices, is easily defined in both the subgraph and induced subgraph orders. However, the predicate $|x|=|y|$ which states that the cardinality of the graphs $x$ and $y$ is the same, is easy in the subgraph order but seems to take a lot of work in the induced subgraph order. Our main result shows that the subgraph and induced subgraph orders define the same set of graph relations and can define each other. We would like to understand the logical resources required to define the subgraph order in the induced subgraph order; for instance, is it definable using an existential formula with constants? We would also like to understand the comparative strength of definability in these two orders when the same syntactic fragment is considered : are the predicates definable in the three variable fragment $FO^3(\mathcal{G},\leq_s)$ the same as those definable in $FO^3(\mathcal{G},\leq_i,P_3)$?

 
 We do not say anything in this paper about the homomorphism order $\leq_h$. The structure ($\mathcal{G},\leq_h$) is a preorder and not a partial order. There are infinitely many  equivalence classes $\equiv_h$ (defined by $x \leq_h y \wedge y \leq_h x$) and thus we will at least need to expand the vocabulary by adding infinitely many constants, one for each equivalence class. The most natural candidate for the set of constants is the set of cores $\mathscr{C}_h$ which are unique minimum elements of the equivalence classes. Even after this, it is far from clear that we can define every element of $\mathcal{G}$ in ($\mathcal{G},\mathscr{C}_h,\leq_h$). In addition, some equivalence classes collapse important families of graphs. For example, all bipartite graphs are homomorphic to $K_2$. This implies that any forest belongs to $[K_2]_{\equiv h}$ and much of the gadget construction in \cite{ramanujam2016definability} using trees is rendered useless. 

 The resulting quotiented structure ($\mathcal{G} / \equiv_h,\leq_h$) (which is isomorphic to ($\mathscr{C}_h,\leq_h$)) is in some sense, a richer object than the orders considered in this paper. Every countable poset is embeddable in ($\mathscr{C},\leq_h$) \cite{hell2004graphs}; this is clearly not the case with $\leq_i,\leq_s$ and $\leq_m$ each of which is well-founded and thus does not contain an infinite descending chain. Given the undecidability result of Hatami \cite{hatami2011undecidability}, one expects the first order theory of this structure to be undecidable. In fact, we expect:
 \begin{conj}
	\label{conj:arithInHomOrder}
	The first order theory of ($\mathbb{N},\phi_{+}, \phi_{\times}$) (where $\phi_{+}, \phi_{\times}$ are predicate versions of $+$ and $\times$) is definable in ($\mathscr{C}_h,\leq_h$).
 \end{conj}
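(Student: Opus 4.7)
The plan is to mimic the two-step strategy used throughout this paper for the other graph orders. First I would identify a definable chain $\mathcal{N}_h \subseteq \mathscr{C}_h$ isomorphic to $(\mathbb{N}, \leq)$, and then define ternary relations $\phi_+, \phi_\times$ on $\mathcal{N}_h$ using only $\leq_h$. The natural candidate for $\mathcal{N}_h$ is the chain of cliques $K_1 <_h K_2 <_h K_3 <_h \cdots$, which is totally ordered and forms a sublattice with $K_n \vee K_m = K_{\max(n,m)}$ (via disjoint union) and $K_n \wedge K_m = K_{\min(n,m)}$ (via categorical product); both meet and join are first-order definable from $\leq_h$ since $(\mathscr{C}_h, \leq_h)$ is a lattice.

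The first concrete step would be to capture the family of connected cores. A connected core is precisely a join-irreducible element of the lattice, since the join in $(\mathscr{C}_h, \leq_h)$ corresponds to disjoint union and a core is the disjoint union of strictly smaller cores iff it is disconnected. This gives a clean first-order definition of $\textit{conn}(x)$ using $\leq_h$ alone.

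The second and main step is to give a first-order characterization of the cliques among connected cores. Here I would try several candidate characterizations and combine them: rigidity (every endomorphism is the identity, reflected in extremal behaviour at covers), the fact that $K_n$ is the $\leq_h$-largest graph of clique number $n$ among those that embed into it in the "universal" way, and bounds coming from Ramsey-type principles such as the fact that for each $n$ there is a finite antichain of connected cores whose join is $K_n$. An alternative candidate is to first define the family of odd cycles $\{C_{2k+1}\}$ — which form a strictly decreasing chain $C_3 >_h C_5 >_h C_7 >_h \cdots$ of known length — and use it as the arithmetical chain $\mathcal{N}_h$ instead of cliques, since odd cycles are more constrained structurally (each $C_{2k+1}$ is a core with no proper nontrivial endomorphism and sits just above the bipartite class $[K_2]_{\equiv_h}$).

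Once $\mathcal{N}_h$ is definable, the third step is to define $\phi_+$ and $\phi_\times$. For addition, I would encode triples $(n,m,n+m)$ using complete multipartite graphs $K_{n,m,n+m}$ or blow-ups whose homomorphism behaviour encodes cardinality of vertex partitions; for multiplication I would use tensor products of cliques, since $K_n \times K_m$ has chromatic number $\min(n,m)$ but cardinality $nm$, and categorical products are meets in our lattice. Some auxiliary "counting gadget" built from products, disjoint unions, and retracts would be needed, analogous to the o-presentations of earlier sections.

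The hard part will be step two: giving a purely order-theoretic characterization of the cliques (or of any other concrete arithmetic chain). The homomorphism order is extremely rich — every countable poset embeds in it (Hell–Nešetřil) — so it is far from clear which order-theoretic property singles out $K_n$. A specific technical obstacle is the failure of Hedetniemi's conjecture (Shitov), which rules out the cleanest candidate characterization "$x$ is meet-prime in the lattice", since $K_n$ being meet-prime would amount exactly to Hedetniemi. One therefore cannot hope to define cliques as prime elements, and some more delicate combinatorial invariant, definable from $\leq_h$ alone, must be used instead — finding such an invariant is the crux of the conjecture.
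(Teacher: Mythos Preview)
There is no proof in the paper to compare against: the statement is explicitly labeled a \emph{conjecture} and appears in the discussion section as an expectation, not a result. The surrounding text merely motivates it (``Given the undecidability result of Hatami \ldots, one expects the first order theory of this structure to be undecidable. In fact, we expect: \ldots'') and immediately afterwards remarks that $(\mathscr{C}_h,\leq_h)$ is ``an example of a structure over graphs which is very rich but is \emph{incapable of accessing the internal structure} of graphs''. So your proposal is a research plan for an open problem, not something that can be checked against an existing argument.

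As a research plan it is reasonable in outline but, as you yourself note, the crux is entirely unresolved. Two concrete difficulties beyond those you mention: first, the paper observes that $(\mathscr{C}_h,\leq_h)$ is not well-founded (it embeds every countable poset), so there is no minimum element and the ``define the bottom of the chain, then climb by covers'' trick used for $\mathcal{N}$ in the other orders is unavailable; your odd-cycle chain $C_3 >_h C_5 >_h \cdots$ is an infinite \emph{descending} chain with no least element, so it does not directly model $(\mathbb{N},\leq)$ without an additional definable reversal. Second, your proposed definition of connectedness via join-irreducibility is promising, but join in this lattice is disjoint union only up to taking cores, and you would need to verify that a connected core cannot arise as the core of a disjoint union of strictly $\leq_h$-smaller graphs --- this is true, but requires an argument. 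None of this is fatal; it simply confirms that the statement is genuinely open.
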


  This is an example of a structure over graphs which is very rich but is \textit{incapable of accessing the internal structure} of graphs and thus is not able to define all recursive predicates over graphs.

 We may also relate definability in graph orders to Descriptive Complexity (DC) classes. One would ideally like to come up with a `nice' fragment of graph order which characterizes a given DC class. The simplest formulae in graph order are of the form $g \leq g'$ and are $\NP-$complete when seen as computational problems with inputs $g,g'$. This gives us a hint that in order to characterize smaller complexity classes such as the set of polynomial time relations over graphs, we need a vocabulary containing graph relations of lesser complexity than graph order. It is possible that larger complexity classes such as $ELEMENTARY$, which corresponds to higher order logic, can be captured as a `nice' fragment of graph order. 

A fact which distinguishes complexity classes smaller than $\P$ from those bigger than $\NP$ is that the characterizations as DC classes in the former case have been obtained only when finite ordered graphs (aka finite labelled graphs according to our Definition \ref{defi:labelledGraph}) are considered; whereas there exist characterizations for $\NP$ and larger complexity classes for finite graphs without an order \cite{immerman2012descriptive}. Characterizing the polynomial time predicates over finite graphs not equipped with an order remains a major open problem \cite{grohe2008quest}. The notion of an ordered graph can be captured in graph order using o-presentations and thus a set of ordered graphs can also been seen as a set of (unordered) graphs; but this latter set depends on the definition of an o-presentation and thus cannot be seen as natural. One way to attack this problem is to consider the set $\mathcal{G}^o$ of finite ordered graphs with corresponding structures over ordered graphs and search for fragments of such theories which correspond to computational complexity classes over ordered graphs.


 The way we have arrived at our result is very roundabout since it uses the power of arithmetic inside graph orders. We would like to identify ``natural computational predicates'' over graphs which are the equivalent of the $bit$ predicate and exponentiation in arithmetic. In addition, we note that the method of computation we use essentially puts a total order on the vertices of the graph (via the o-presentation) breaking the natural symmetries in a graph. 

\section*{Acknowledgment}
I would like to thank my guide Prof. R. Ramanujam for his advice and discussions on  both technical matter and the presentation of this paper. I would also like to thank the anonymous referee for his time and for painstakingly going through this paper.

 \bibliographystyle{plain}
 \bibliography{recPred}
\newpage
\appendix

\section{ Defining Formulae for the Predicates in Lemma \ref{lem:oldBasicPredSubgraph}}
 \label{app:basicPredSubgraph}
	\underline{The family $\mathcal{N}$}\\
	First we need to define a few constants:
	\begin{enumerate}
		\item $\emptyset_g(x) := \forall y \; x \leq_s y$
	  	\item $N_1(x) := \emptyset_g \lessdot_s x$
	  	\item   $ N_2(x) :=  N_1 \lessdot_s x$
	  	\item $K_2(x) :=  N_2 \lessdot_s x \wedge \exists y \; x \lessdot_s y \; \wedge \; \forall z \; x \lessdot_s z \supset z=y$
	  	\item $N_3(x) :=  N_2 \lessdot_s x \; \wedge \; x \neq K_2$
  		\item $K_2N_1(x) := K_2 \lessdot_s x$
  		\item   $K_2N_2(x) := K_2N_1 \lessdot_s x \; \wedge \; N_3 \lessdot_s x $
  		\item $P_3(x) := \exists !y \; y \lessdot_s x \wedge y=K_2N_1$
  		\item $P_3N_1(x) :=  P_3 \lessdot_s x  \wedge  K_2N_2 \lessdot_s x \; \wedge
  		\forall y \; y \lessdot_s x  \supset (y=P_3 \; \vee \; y=K_2N_2)$
  		\item $S_4(x) := P_3N_1 \lessdot_s x \; \wedge \; \forall y \; y \lessdot_s x \supset y=P_3N_1$
	\end{enumerate}
	We can now get:
	\[ \mathcal{N}(x) := K_2 \nleq_s x\]
	\textbf{Remark:} We can now define $|x|=|y|$ which holds if and only if $x$ and $y$ are graphs of the same cardinality:
	\[ |x|= |y| := \forall z \; \mathcal{N}(z) \supset (z \leq_s x \iff z \leq_s y)\]
 \underline{Covers}\\
 We can define the strict order $x <_s y := x \leq_s y \; \wedge \; x \neq y$
 \[ x \lessdot_s y := x \leq_s y \; \wedge \; \forall z \; \neg (x <_s z <_s y)\]
 An edge-cover is a cover which is formed by addition of an edge:
 \[ x \lessdot_{se} y := x \lessdot_s y \; \wedge \; |x| = |y| \]
 The vertex-cover is unique and is the cover formed by adding an isolated vertex:
 \[ x \lessdot_{sv} y := x \lessdot_s y \; \wedge \; |x| \neq |y|\]

	\underline{The family $soc$}\\
	First we define the family $pac(x)$ which holds if and only if $x$ is a disjoint union of paths and cycles.
	\[ pac(x) := S_4 \nleq_s x\]
	\begin{flalign*}
	    soc(x) =& soc'(x) \wedge \forall y \: ( x \lessdot_s y \wedge pac(y)) \supset soc'(y)\\
	    &\text{where}\\
	  	soc'(x) := &x \neq \emptyset_g \; \wedge \; pac(x) \wedge \; \forall y \:( |y|_{gr} = |x|_{gr} \wedge pac(y)) \supset \neg x <_s y
	\end{flalign*}
	\underline{The family $\mathcal{T}$}\\
	First we define $forest(x)$ which holds if and only if $x$ is a disjoint union of trees
			\[ forest(x) := \forall y \; soc(y) \supset y \nleq_s x \]
	Now we can get the relation $\mathcal{T}(x)$ which holds if and only if $x$ is a tree.
		\begin{flalign*}
  			\mathcal{T}(x) :=& forest(x) \; \wedge \; \forall y \; (forest(y) \; \wedge \; |x|_{gr} = |y|_{gr}) \supset \neg x <_s y
		\end{flalign*}	
	\underline{The predicate $conn(x)$}
	\[ conn(x) := \exists y \; \mathcal{T}(y) \; \wedge \; |y|=|x| \; \wedge \; y \leq_s x\]

	\underline{The families $\mathcal{C}, \mathcal{K}, \mathcal{S}, \mathcal{P}$}\\
	\[ \mathcal{C}(x) := soc(x) \; \wedge \; conn(x) \]
	$\mathcal{K}(x)$  holds if and only if $x$ is a clique:
	\[ \mathcal{K}(x) := \forall y \neg(x \lessdot_{se} y)\]
	
	$\mathcal{P}(x)$  holds if and only if $x$ is a path:
	\[ \mathcal{P}(x) := \mathcal{T}(x) \; \wedge \; S_4 \nleq_s x\]
	Since paths are totally ordered by $\leq_s$ (equivalently by cardinality), by Observation \ref{obs:totalOrderDefble} all of them are definable as constants.\\
	$\mathcal{S}(x)$  holds if and only if $x$ is a star:
	\[ \mathcal{S}(x) := \mathcal{T}(x) \; \wedge \; P_4 \nleq_s x\]
	Similar to paths, all stars are definable as constants.

 \noindent \textbf{Definability of the constant $double3star$}
 \[ double3star(x) := \mathcal{T}(x) \; \wedge \; |x|=6 \; \wedge \; P_4 \leq_s x \; \wedge \; P_5 \nleq_s x \; \wedge \; S_5 \nleq_s x \]

\end{document}